  \tikzstyle{vec}=[circle,inner sep=1pt,outer sep=-1pt,fill]
  \tikzstyle{border}=[thick]
  \tikzstyle{favborder}=[border,dotted]
  \tikzstyle{exclborder}=[border,dashed]
\newenvironment{proofof}[1]{\begin{trivlist}\item[\hskip\labelsep{\textsc{Proof
  of {#1}.\ }}]}{\hspace*{\fill} {\qed}\end{trivlist}}
\newenvironment{tbs}{%
   \small\tt
   \begin{enumerate}[$\blacktriangleright$]}{\end{enumerate}}
\newcommand{\btbs}{\begin{tbs}}                                                                      
\newcommand{\etbs}{\end{tbs}}
\newcommand{\rednew}[1]{\textcolor{black}{#1}}
\newcommand{\bluenew}[1]{\textcolor{black}{#1}}
\newcommand{\Reals}{\mathbb{R}} 
\newcommand{\complex}{\mathbb{C}}
 \DeclareMathOperator{\posi}{posi}
  \DeclareMathOperator{\cl}{cl}
\newcommand{\pspace}{\Omega}
\newcommand{\reals}{\Reals}
\newcommand{\gambles}{\mathcal{L}}
\newcommand{\He}{\mathcal{H}}
\newcommand{\theory}{\mathscr{T}}
\newcommand{\btheory}{\mathscr{T}^\star}
\newcommand{\domain}{\mathscr{K}}
\newcommand{\assess}{\mathcal{G}}
\newcommand{\bdomain}{\mathscr{C}}
\newcommand{\nonnegative}{\gambles^{\geq} }
\newcommand{\bnonnegative}{\Sigma^{\geq}}
\newcommand{\negative}{\gambles^{<} }
\newcommand{\bnegative}{\Sigma^{<} }
\newcommand{\states}{\mathscr{S}}
\newtheorem{remark}{Remark}
\newtheorem{lemma}{Lemma}
\newtheorem{theorem}{Theorem}
\newtheorem{definition}{Definition}
\newtheorem{proposition}{Proposition}
\newtheorem{example}{Example}
\title{ Computational Complexity and the Nature of of Quantum Mechanics}
\author{A. Benavoli, A. Facchini, M. Zaffalon\\
IDSIA, Manno, Switzerland}
\begin{document}

\maketitle

\begin{abstract}
Quantum theory (QT) has been confirmed by numerous experiments, yet we still cannot fully grasp the meaning of the theory. As a consequence, the quantum world appears  to us paradoxical.
Here we shed new light on QT by being based on two main postulates:
\begin{enumerate}
\item the theory should be logically consistent; 
\item inferences in the theory should be computable in polynomial time.
\end{enumerate}
The first postulate is what we  require to each well-founded mathematical theory. The computation postulate defines the physical component of the theory. 
We show that the computation postulate is the only true divide between QT, seen as a generalised theory of probability,
and classical probability. 
All quantum paradoxes, and entanglement
in particular, arise from the clash of trying to reconcile a computationally intractable, somewhat idealised,  
theory (classical physics) with a computationally tractable theory (QT) or, in other
words, from regarding physics as fundamental rather than computation.
\end{abstract}

\section{Introduction}

Quantum theory (QT) is one of the most fundamental, and accurate, mathematical descriptions of our physical world. It dates back to the 1920s, and in spite of nearly one century passed by since its inception, we do not have a clear understanding of such a theory yet. In particular, we cannot fully grasp its \emph{meaning}: why the theory is the way it is. As a consequence, we cannot come to terms with the many paradoxes it appears to lead to---the so-called ``quantum weirdness''.

Let us recall that QT is widely regarded as a ``generalised'' theory of probability. 
Since its foundation, there have been two main ways to explain the differences between QT and classical probability. The first one goes back to Birkhoff and von Neumann \cite{birkhoff1936logic} and explains the differences  with the premise that,
in QT, the Boolean algebra of events is taken over by the ``quantum logic'' of projection operators on a Hilbert space.
The second one is based on the view that the quantum-classical clash is due to the appearance of negative probabilities \cite{dirac1942bakerian, feynman1987negative}.

More recently, the so-called ``quantum reconstruction'' research effort aims at rebuilding   finite dimensional QT from  primitive postulates.
The search for alternative axiomatisations of QT has been approached following different avenues: extending
Boolean  logic \cite{birkhoff1936logic,mackey2013mathematical,jauch1963can}, using operational primitives and/or information-theoretic postulates \rednew{(the so called ``operational approach'')}\cite{rovelli1996rovelli,hardy2011foliable,hardy2001quantum,barrett2007information,rau2009quantum,fivel2012derivation,wilce2009four,chiribella2010probabilistic,barnum2011information,van2005implausible,pawlowski2009information,dakic2009quantum,fuchs2002quantum,brassard2005information,masanes2011derivation,mueller2016information},
 starting from the phenomenon of quantum nonlocality \cite{barrett2007information,van2005implausible,pawlowski2009information,popescu1998causality,navascues2010glance}, the categorical approach \cite{abramsky2004categorical,coecke2010quantum,cho2015introduction} and building upon the (subjective) foundation of probability \cite{caticha1998consistency,Caves02,Appleby05a,Appleby05b,Timpson08,longPaper, 
Fuchs&SchackII,mermin2014physics,holevo2011probabilistic,pitowsky2003betting,Pitowsky2006,goyal2010origin,khrennikov2009interpretations,khrennikov2013non,goyal2011quantum,goyal2014derivation,skilling2017symmetrical,benavoli2016quantum,benavoli2017gleason}. 
\rednew{The present work is particularly close to the latter approach, whose goal is to answer questions such as \cite{goyal2011quantum}: 
What is the relationship of QT probability calculus to classical probability theory} (CPT)? Is QT consistent with CPT?
Is it some kind of extension of CPT, and, if so, what is the nature and conceptual foundation of that extension?
Note that the different ways one can answer these questions critically depend on the used notion
 of ``being consistent''. 

In trying to understand and explain QT, we share with the operational approach  the aim of deriving the mathematical structure of  Hilbert spaces, but with two main differences.

Firstly,
in the operational approach,  the notions of measurement and outcome probability are primitive elements of the theory, and therefore the emphasis is on the predictive aspect of QT---it aims at computing the outcome probability. 
We instead take a different view and mainly focus on the explanatory aspect of QT, that is on its being a (probabilistic) model of reality. \rednew{Hence, from this perspective,  in the process of modelling say a Stern-Gerlach experiment, one does not start from the outcomes of the experiment but from 
 (hidden) variables representing the ``directions''  of the particle's spins, their domains (possibility space)
 and the functions of the variables we can observe (observables) and are interested in (queries).
 \\
 Secondly,  in this work we do not aim at characterising QT has being the \emph{unique} theory satisfying some set of principles, or postulates. Instead, we aim at showing that ultimately QT appears to us to behave like it behaves because,  
 once the appropriate variables,  possibility space,  observables and queries are specified, it constitutes an \emph{instance} of a theory based on two simple postulates:
\begin{description}[noitemsep]
 \item[(Coherence)] The theory should be logically consistent. 
\item[(Computation)] Inferences in the theory should be computable in polynomial time. 
\end{description}
The first postulate is what it is essentially required to each well-founded mathematical theory, be it physical or not: to be based on a few axioms and rules from which we can unambiguously derive its mathematical truths. 
The set of rules we refer to are at the foundation of CPT, and more generally of all \textit{rational choice theories} \cite{nau1991arbitrage}: decision theory, game theory, social choice theory and market theory. In this setting the term logical consistency has a very precise meaning: a theory is logically consistent (coherent)
 when it disallows ``Dutch books'' (``arbitrages'' in market theory). In gambling terms, this means that a bettor on a experiment cannot be made a sure loser by exploiting inconsistencies in her probabilistic assessments. 
\\
The second postulate is the central one. When seen through the prism of the coherence postulate, it tells us that QT is just a computationally efficient version of CPT, or stated otherwise a theory of computational rationality. In doing so, it explains why there are cases where QT is in agreement with CPT---that is precisely when inferences in CPT \emph{can} be computed in polynomial time---, and cases where QT cannot be reconciled with CPT---that is precisely when inferences in CPT \emph{cannot} be computed in polynomial time.
\\
But there is more to it: the setting in which our results are proved applies to any theory satisfying the coherence and computational postulates, and goes thence beyond the specific case of QT. As a consequence, Bell-like inequalities
and entanglement are shown to be phenomena that can arise in all theories of computational rationality, and are therefore not specific to QT.} To illustrate this fact, 
we devise an example of a computationally tractable theory of probability that is unrelated to QT but that admits entangled states.
An example of Bell's like inequalities and entanglement with classical coins is given in \cite{Benavoli2019d}.
%
Before listing our findings in detail, we introduce the model framework we will consider in the rest of the paper.


\subsection{Framework}
 The first step in defining a model is to explicit its variables, their domains (possibility space)
 and  their functions we can observe (observables). 
\begin{description}
 \item[Possibility space:] For a single particle ($n$-level system), we consider  the possibility space
$$
\overline{\complex}^{n}\coloneqq\{ x\in \complex^{n}: ~x^{\dagger}x=1\}.
$$
We can interpret an element $\tilde{x} \in \overline{\complex}^{n}$ as ``input data'' for some classical preparation procedure.
For instance,  in the case of the spin-$1/2$ particle ($n = 2$),  if $\theta = [\theta_1 , \theta_2 , \theta_3 ]$ is the direction of a filter
in the Stern-Gerlach experiment, then $\tilde{x}$ is its one-to-one mapping into $\Omega$ (apart from a phase term).
For spin greater than $1/2$, the variable $\tilde{x} \in \overline{\complex}^{n}$ 
 cannot directly be interpreted in terms only of ``filter direction''.
Nevertheless,  at least on the formal level, $\tilde{x}$ can play the role of a ``hidden variable'' in our model
and  $\overline{\complex}^{n}$ of the possibility space $\pspace$.
This hidden-variable model for QT was first introduced in \cite[Sec.~1.7]{holevo2011probabilistic}, where the author  explains
why this model does not contradict the existing ``no-go'' theorems for hidden variables.
\item[Cartesian product:] More generally, we can consider composite systems of $m$ particles, each one with $n_j$ degrees of freedom. The  possibility space is then the Cartesian product 
 \begin{equation}
   \label{eq:pspace}
\pspace=\times_{j=1}^m \overline{\complex}^{n_j}.
 \end{equation}
\end{description}
So far this framework is totally compatible with CPT, in fact we can define a classical probability distribution over  complex vectors.
We can simply interpret $\overline{\complex}^{n}$ (and $\times_{j=1}^m \overline{\complex}^{n_j}$) as a convenient feature space,
similarly to the Fourier domain (after the Fourier transform).
 \begin{description}
\item[Observables:] An observable is given as a function of the variables of the model. But since observables are problem dependent, we may restrict their collection to just a subspace of all possible such functions. In case of a single particle, observables correspond to  quadratic forms:
 \begin{equation}
   \label{eq:gamble_one}
  g(x)=x^\dagger G x,
 \end{equation}
with $G \in \He^{n \times n}$ ($\He^{n \times n}$ being the set of $n \times n$ Hermitian matrices).
For $m$ particles, we have that:
 \begin{equation}
   \label{eq:gamble_many0}
  g(x_1,\dots,x_m)=(\otimes_{j=1}^m x_j)^\dagger G (\otimes_{j=1}^m x_j),
 \end{equation}
 with $G \in \He^{n \times n}$, $n=\prod_{j=1}^m n_j$ and where $\otimes$ denotes the tensor product between vectors regarded as column matrices.
\end{description}

Note that Expression \eqref{eq:gamble_many0}, and thus the use of the tensor product in composite systems,  can be justified based on Expressions \eqref{eq:pspace} and \eqref{eq:gamble_one}. The argument goes roughly as follows (for a more in depth discussion see  Section \ref{sec:tensorproduct}).
In the theory of probability, structural judgements such as independence, are defined on product of observables on the single variables, e.g., two variables $X,Y$
are independent iff $E[f(X)g(X)]=E[f(X)]E[g(Y)]$ for all $f,g$, where $E[\cdot]$ denotes the expectation operator. In the specific case we are considering, 
the product of observables  have the form $\prod_{j=1}^m x_j^{\dagger}  G_j x_j$.
Now, it is not difficult to verify that such product is mathematically  the same as  $(\otimes_{j=1}^m x_j)^\dagger (\otimes_{j=1}^m G_j) (\otimes_{j=1}^m x_j)$. 
By closing the set of product gambles  under the operations of addition and scalar (real number) multiplication, 
we get the vector space whose domain coincide with the collection of gambles of the form as in  \eqref{eq:gamble_many0}. 
Hence,  in the framework advanced in the present work,  the tensor product is ultimately a derived notion, not a primitive one.

Again, notice that up to here we are still fully in agreement with CPT.  


\subsection{Quantum weirdness}
Since CPT can be defined over complex vectors (the possibility space in CPT can be any set), what originates the  ``quantum weirdness'' cannot be the use of complex numbers (vectors).
In this paper, we  show that the ``quantum weirdness'' comes from the postulate of computation.

The inference problem, \rednew{that is the problem of deciding whether a certain claim, or judgement, follows from a given set of assessments}, is in general undecidable in  CPT, whereas it is NP-hard  when we restrict inferences to observables  satisfying \eqref{eq:gamble_many0}.

If we impose the additional postulate of computation, the theory becomes one of ``computational rationality'': one that is consistent (or coherent), up to the degree that polynomial computation allows. \rednew{ \textbf{QT coincides with such a weaker, and hence more general, theory of probability}.  Moreover, the mathematical structure of Hilbert space usually associated with QT can be seen as a consequence of the adoption of such postulate}.

\rednew{As it will be made clear in what follows, the point is that for a subject living inside QT, all is coherent. Instead, for another subject, living in the classical, and somewhat idealised, probabilistic world (hence not subject to the computation postulate)}, QT displays some inconsistencies: \textbf{precisely those that cannot be fixed in polynomial time}. All quantum paradoxes, and entanglement in particular, arise from the clash of these two world views: i.e., from trying to reconcile an unrestricted theory (i.e., classical physics) with a  theory of computational rationality (quantum theory), \rednew{or, in other words, from regarding our common-sense notion of physics as fundamental rather than computation.} 

There is more, we highlight our findings in the following list.
\begin{description}
  \item[$\rho$ as a moment matrix:] Requiring the computation postulate is tantamount to defining a probabilistic model using only a finite number of moments, and therefore, implicitly, to defining the model as the set of all probabilities\footnote{Note that here we use `probabilities' to mean \emph{finitely additive probabilities}, alternatively called also `probability charges' (the term `distribution' is usually restricted to $\sigma$-additive probabilities, and therefore it would be misplaced in the present context).} compatible with the given moments.
  \item[Imprecise Probability:] A probability is not uniquely defined by a finite number of moments. Therefore, QT can be shown to be ``generalised'' also in another direction, as it is a theory of ``imprecise'' probability.
    \item[Continuous Probability:] Since the possibility space $\pspace$ is infinite with probabilities we mean continuous probabilities. In our context this explains why the set of states in QT has infinite extremes and, why determining whether a state is entangled or  not is NP-hard.
  \item[Signed Probabilities:] In QT, some of these compatible probabilities can actually be \emph{signed}, that is, they allow for ``negative probabilities'', which  have no intrinsic meaning beyond the fact of constituting a compatible probabilistic model.
    \item[$n$-level single particle system:] We show that QT is always compatible with CPT (classical and computational rationality agree).
    \item[CHSH experiment] We show that the witness function in the CHSH experiment is nothing else than a negative function whose negativity
cannot be assessed in polynomial time, whence it is not ``negative'' in QT. 
          \item[Entanglement Witness Theorem:] We reformulate the Entanglement Witness Theorem in terms of classical rationality versus computational rationality.
          \item[Entanglement outside QT:] We devise an example of a computationally tractable theory of probability that is unrelated to QT but that admits entangled states.
    \item[Tensor Product:] It is a derived notion in our context. We use  computational rationality to explain why it has a central role in the standard axiomatisation of QT. 
 \end{description}

\subsection{Outline of the paper}
Section~\ref{sec:dg} is concerned with the coherence principle. We recall how probability can be derived (via mathematical duality) from a set of simple logical axioms. Addressing consistency (coherence or rationality) in such a setting is a standard task in logic; in practice, it reduces to prove that a certain real-valued bounded function is non-negative. 

Section~\ref{sec:comp} details the computation postulate. We consider the problem of verifying the non-negativity of a function as above. This problem is generally undecidable or, when decidable, NP-hard. 
We make the problem polynomially solvable by redefining the meaning of (non-)negativity. We give our fundamental theorem (Theorem~\ref{th:fundamental}) showing that the redefinition is at the heart of the clash between classical probability and computational rationality.

We show in Section~\ref{sec:qt} that QT is a special instance of computational rationality and hence that Theorem~\ref{th:fundamental} is not only the sole difference between quantum and classical probability, but also  the distinctive reason for all  quantum paradoxes; this latter part is discussed in Section~\ref{sec:entang}. 
In particular, to give further insight about the quantum-classical clash, in Section~\ref{sec:local} we reconsider the question of local realism in the light of computational rationality; in Section~\ref{sec:witness} we show that the witness function, in the fundamental ``entanglement witness theorem'',  is nothing else than a negative function whose negativity cannot be assessed in polynomial time---whence it is not ``negative'' in QT.

Moreover, using Theorem \ref{th:fundamental}, in Section~\ref{sec:ent_not_only} we devise an example of a computationally tractable theory of probability that is unrelated to QT but that admits entangled states. This shows in addition that the ``quantum logic''  and the ``quasi-probability'' foundations of QT are two faces of the same coin, being natural consequences of the computation principle.

We finally discuss the results in Section~\ref{sec:discussions} and compare our approach to QBism and the operational reconstructions. 


\section{Coherence postulate}\label{sec:dg}

De Finetti's subjective foundation of probability \cite{finetti1937}   is  based on the notion   of rationality (consistency or coherence). This approach has then been further developed in \citep{williams1975,walley1991}, giving rise to the so-called \emph{theory of desirable gambles} (TDG). This is an equivalent reformulation of the well-known Bayesian decision theory (\`a la Anscombe-Aumann \cite{anscombe1963}) once it is extended to deal with incomplete preferences \cite{zaffalon2017a,zaffalon2018a}. In this setting probability is a derived notion in the sense that it can be inferred via mathematical duality from a set of logical axioms that one can interpret 
as rationality requirements in the way a subject, let us call her Alice, accepts gambles on the results of an uncertain experiment. It goes as follows. 

Let $\pspace$ denote the possibility space of an experiment (e.g., $\{Head, Tail\}$ 
or $\complex^n$ in QT). A gamble $g$ on $\pspace$ is a bounded real-valued function of $\pspace$, interpreted as
an uncertain reward. It plays the traditional role of variables or, using a physical parlance, of \emph{observables}. 
In the context we are considering, accepting a gamble $g$ by an agent is regarded as a
commitment to receive, or pay (depending on the sign), $g(\omega)$ \emph{utiles} (abstract units of utility, we can approximately identify it with money provided we deal with small amounts of it \cite[Sec.~3.2.5]{finetti1974}) whenever $\omega$ occurs. If by $\gambles$ we denote the set of all the gambles on $\pspace$, the subset of all non-negative gambles, that is, of gambles for which Alice never loses utiles, is given by  
$\nonnegative\coloneqq \{g \in \gambles: \inf g\geq0 \}$. Analogously, 
negative gambles, those gambles for which Alice will certainly lose some utiles, even an epsilon, is defined as $\negative\coloneqq \{g \in \gambles: \sup g < 0 \}$.
In what follows, with $\mathcal{G}:=\{g_1,g_2,\dots,g_{|\mathcal{G}|}\} \subset \gambles$ we denote a finite set of gambles that Alice finds desirable (we will comment on the case when $\mathcal{G}$ may not be finite): these are the gambles that she is willing to accept and thus commits herself to the corresponding transactions.

The crucial question is now to  provide a criterion for a set $\mathcal{G}$ of gambles representing assessments of desirability to be called \emph{rational}. Intuitively Alice is rational if she avoids sure losses: that is, if, by considering the implications of what she finds desirable, she is not forced to find desirable a negative gamble. This postulate of rationality is called ``no arbitrage'' in economics and ``no Dutch book'' in the subjective foundation of probability.
In  TDG we formulate it through the  notion of logical consistency which, despite the informal interpretation given above, is a purely syntactical (structural) notion. To show this, we need to define an appropriate logical calculus (characterising the set of gambles that Alice must find desirable as a consequence of having desired $\mathcal{G}$ in the first place) and based on it to characterise the family of consistent sets of assessments.

For the former, since non-negative gambles
may increase Alice's utility without ever decreasing it, we first have that:
\begin{enumerate}[label=\upshape A0.,ref=\upshape A0]
\item\label{eq:taut} $\nonnegative$ should always be desirable.
\end{enumerate}
This defines the tautologies of the calculus. Moreover,  whenever $f,g$ are desirable for Alice, then any positive linear combination of them should also be desirable (this amounts to assuming that Alice has a linear utility scale, which is a standard assumption in probability). Hence the corresponding deductive closure of a set $\assess$ is given by:
\begin{enumerate}[label=\upshape A1.,ref=\upshape A1]
\item\label{eq:NE} $\domain\coloneqq\posi(\nonnegative\cup \mathcal{G})$.
\end{enumerate}
Here ``$\posi$'' denotes the conic hull operator.\footnote{The conic hull of a set of gambles $\mathcal{A}$
is defined as $\posi(\mathcal{A})=\{\sum_i \lambda_ig_i: \lambda_i\in \reals^{\geq}, g_i\in \gambles\}$.} When $\mathcal{G}$ is not finite, \ref{eq:NE} requires in addition that $\domain$ is closed.

In the betting interpretation given above, a sure loss for an agent is represented by a negative gamble.  We therefore say that:

\begin{definition}[Coherence postulate]
\label{def:avs}
 A set $\domain$ of desirable gambles is \emph{coherent} if and only if
 \begin{enumerate}[label=\upshape A2.,ref=\upshape A2]
\item\label{eq:sl} $ \negative \cap  \domain=\emptyset$.
\end{enumerate}
\end{definition}
\noindent Note that $\domain$  is incoherent if and only if $-1 \in \domain$; therefore $-1$ can be regarded as playing the role of the Falsum and \ref{eq:sl} can be reformulated as $-1 \notin \domain$, see Appendix \ref{sec:onefalsum}.
An example that gives an intuition of the postulates is given in Figure~\ref{fig:main}.

\begin{figure}
\begin{subfigure}{.33\linewidth}
\centering
\includegraphics[scale=.33]{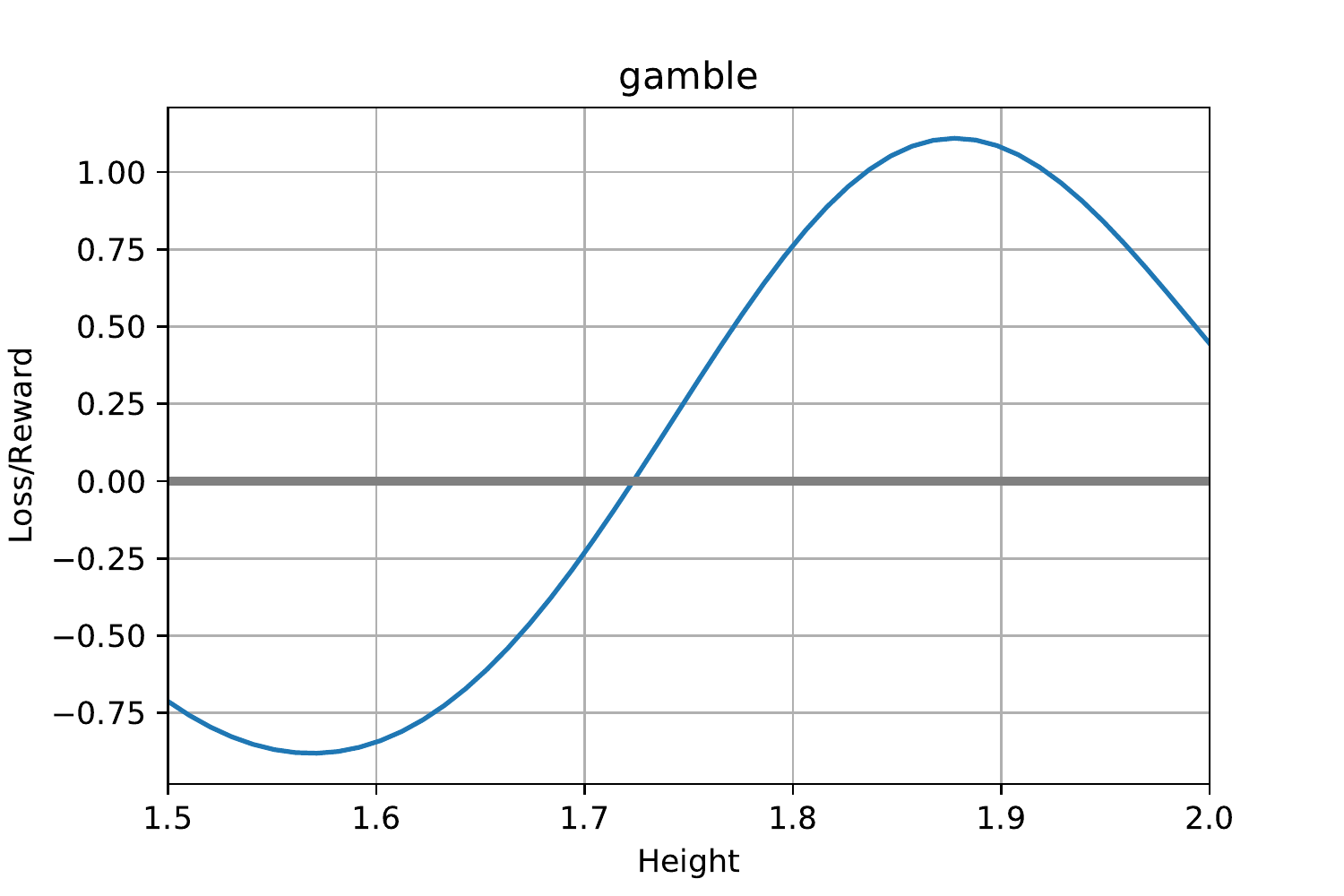}
\caption{}
\label{fig:sub1}
\end{subfigure}
\begin{subfigure}{.33\linewidth}
\centering
\includegraphics[scale=.33]{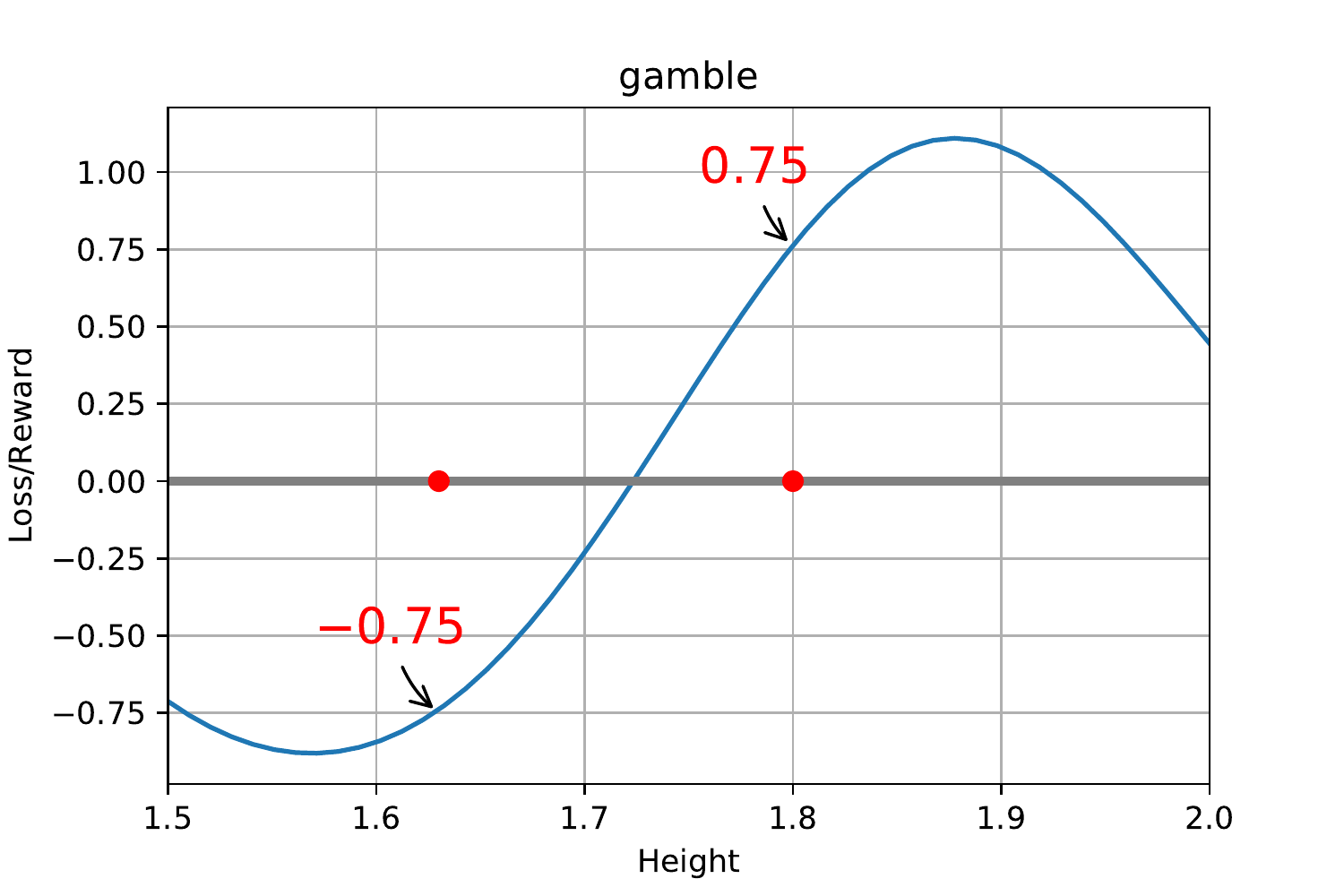}
\caption{}
\label{fig:sub1}
\end{subfigure}
\begin{subfigure}{.33\linewidth}
\centering
\includegraphics[scale=.33]{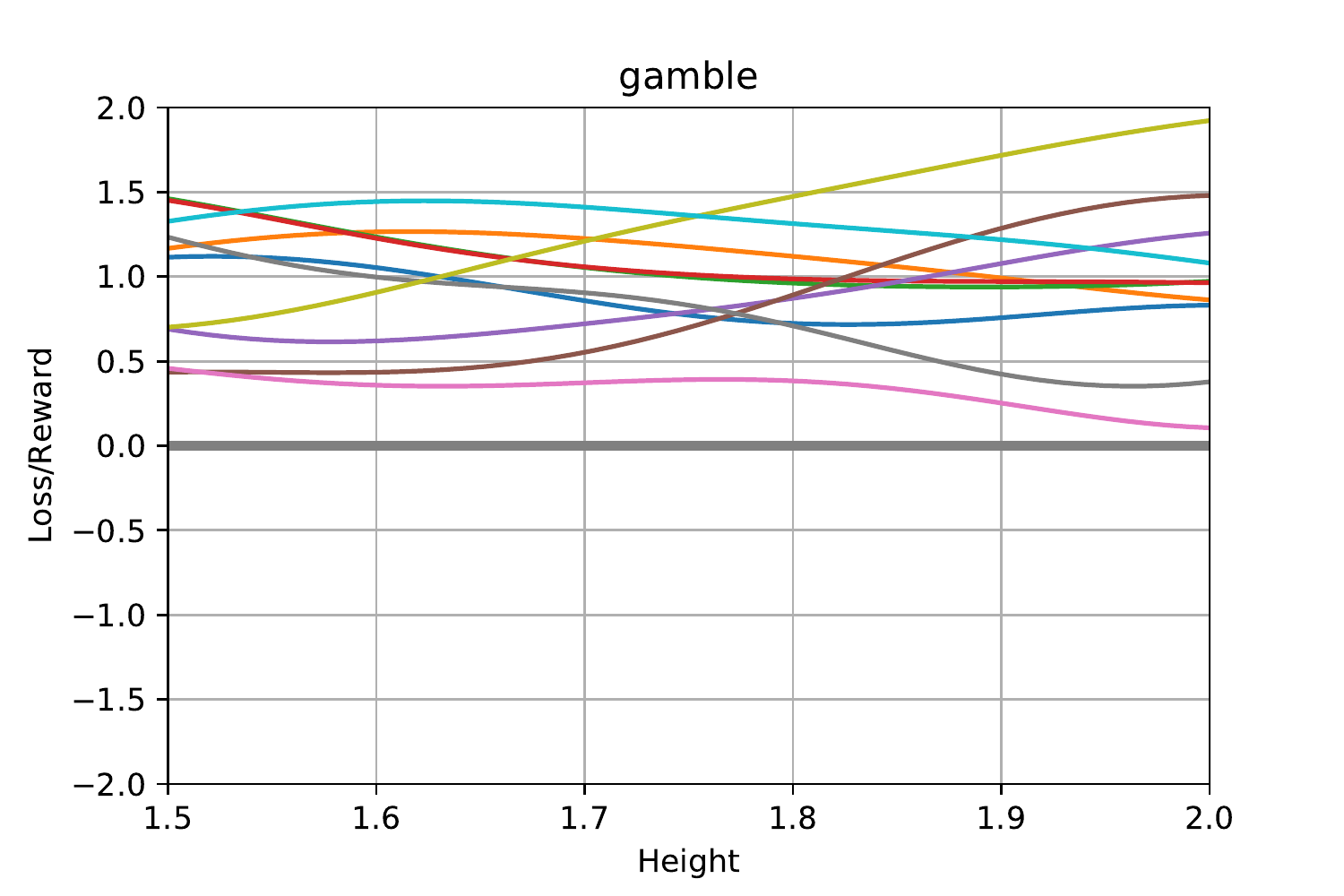}
\caption{}
\label{fig:sub1}
\end{subfigure}\\
\begin{subfigure}{.33\linewidth}
\centering
\includegraphics[scale=.33]{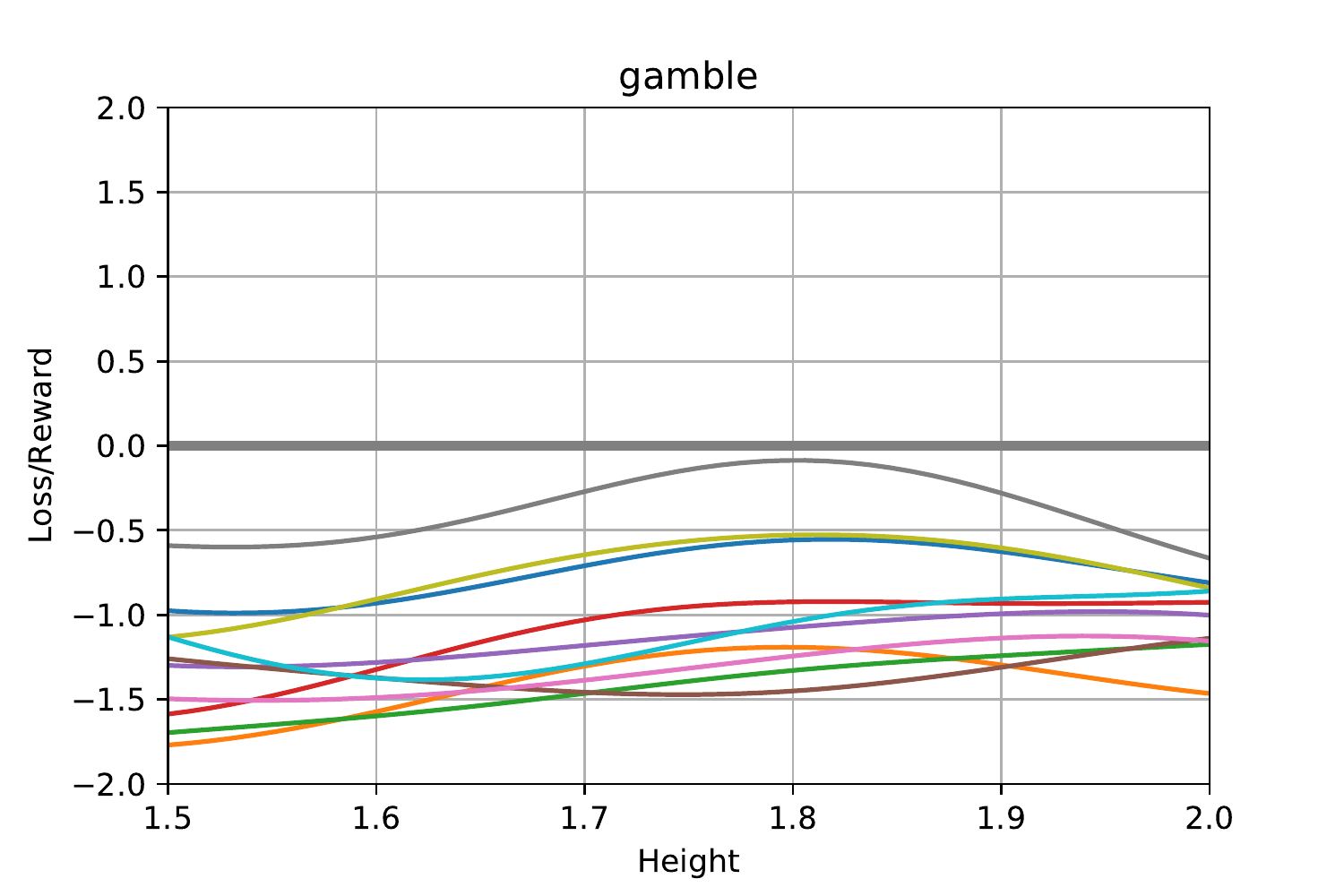}
\caption{}
\label{fig:sub1}
\end{subfigure}
\begin{subfigure}{.33\linewidth}
\centering
\includegraphics[scale=.33]{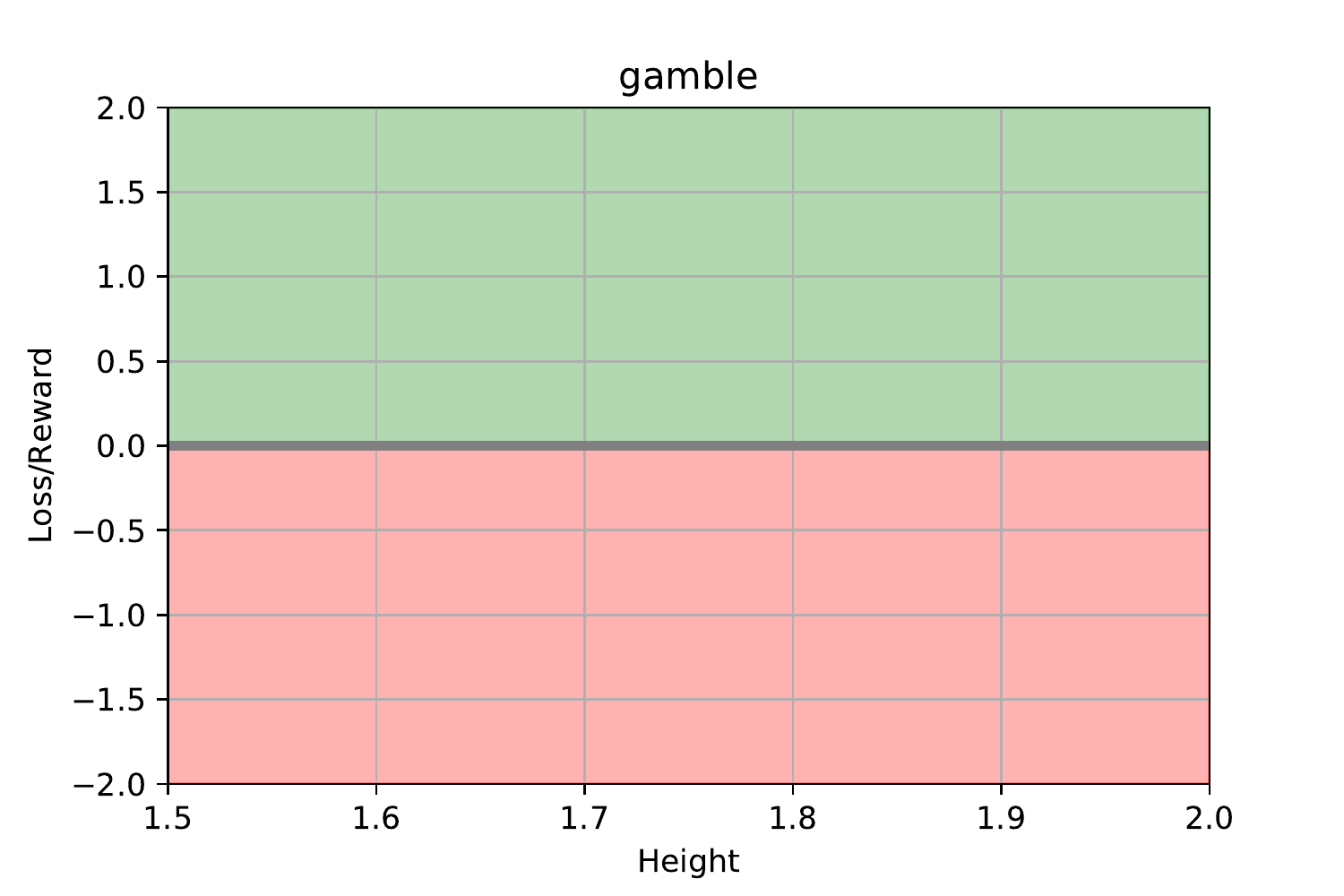}
\caption{}
\label{fig:sub1}
\end{subfigure}
\begin{subfigure}{.33\linewidth}
\centering
\includegraphics[scale=.33]{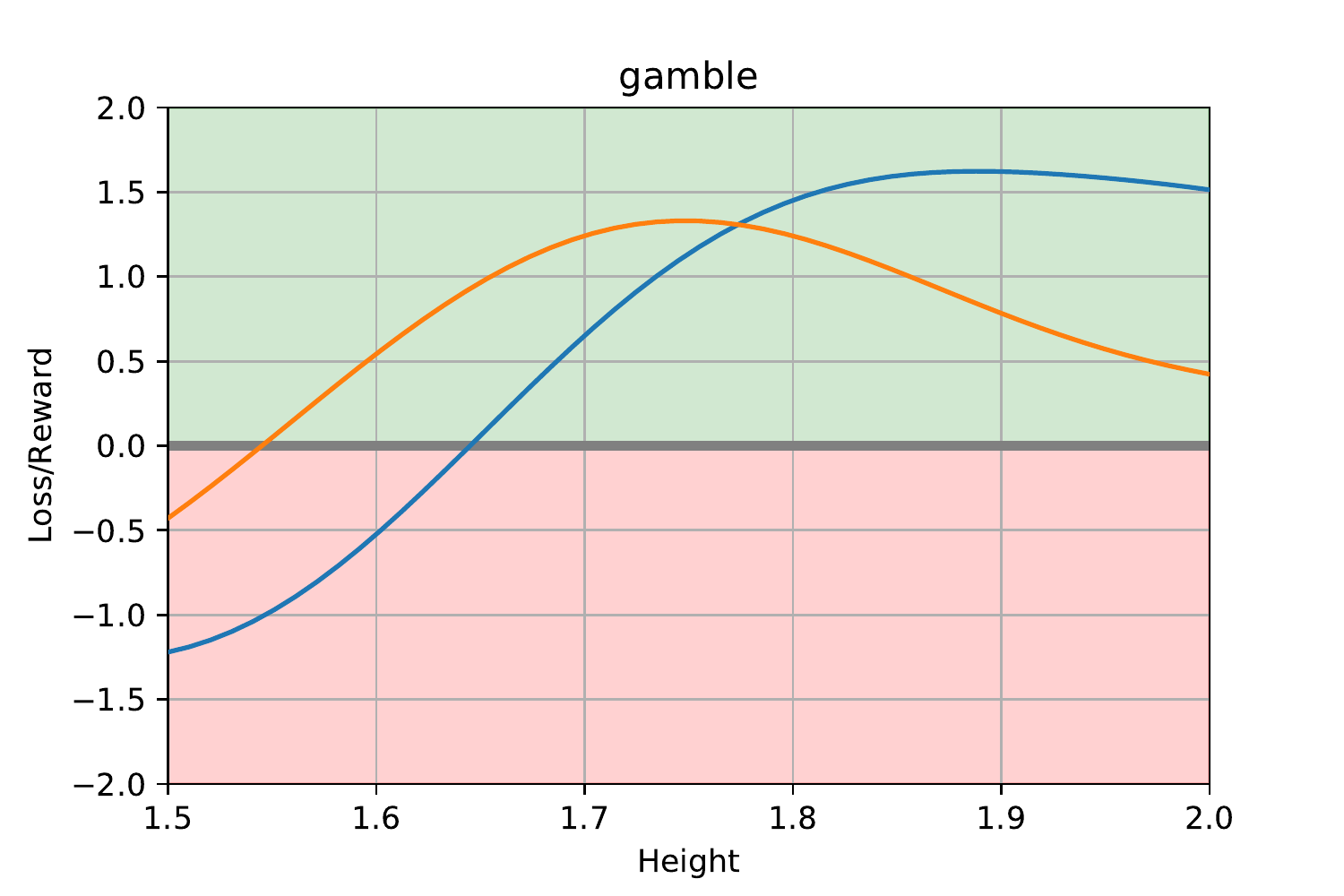}
\caption{}
\label{fig:sub1}
\end{subfigure}\\
\begin{subfigure}{.33\linewidth}
\centering
\includegraphics[scale=.33]{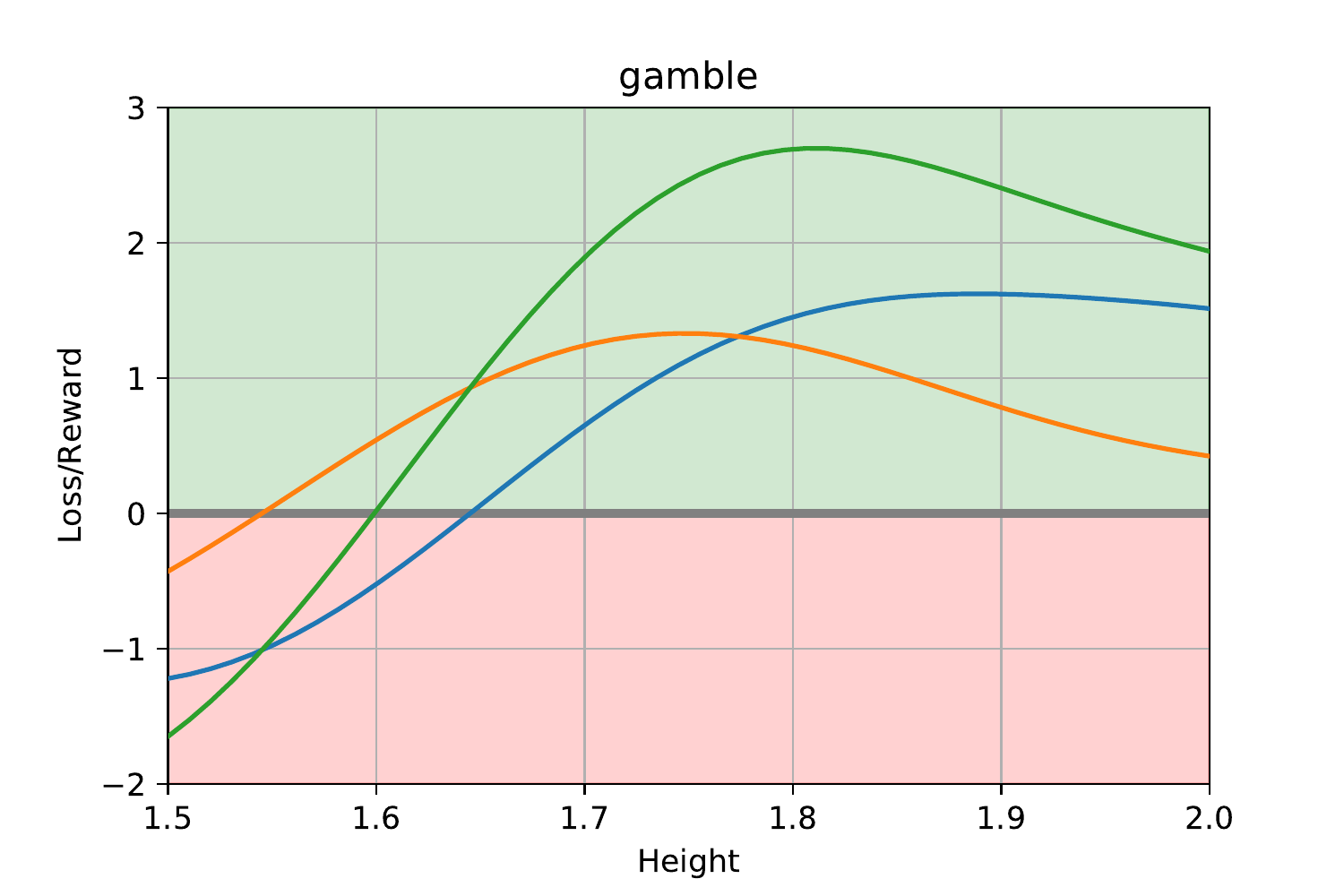}
\caption{}
\label{fig:sub1}
\end{subfigure}
\begin{subfigure}{.33\linewidth}
\centering
\includegraphics[scale=.33]{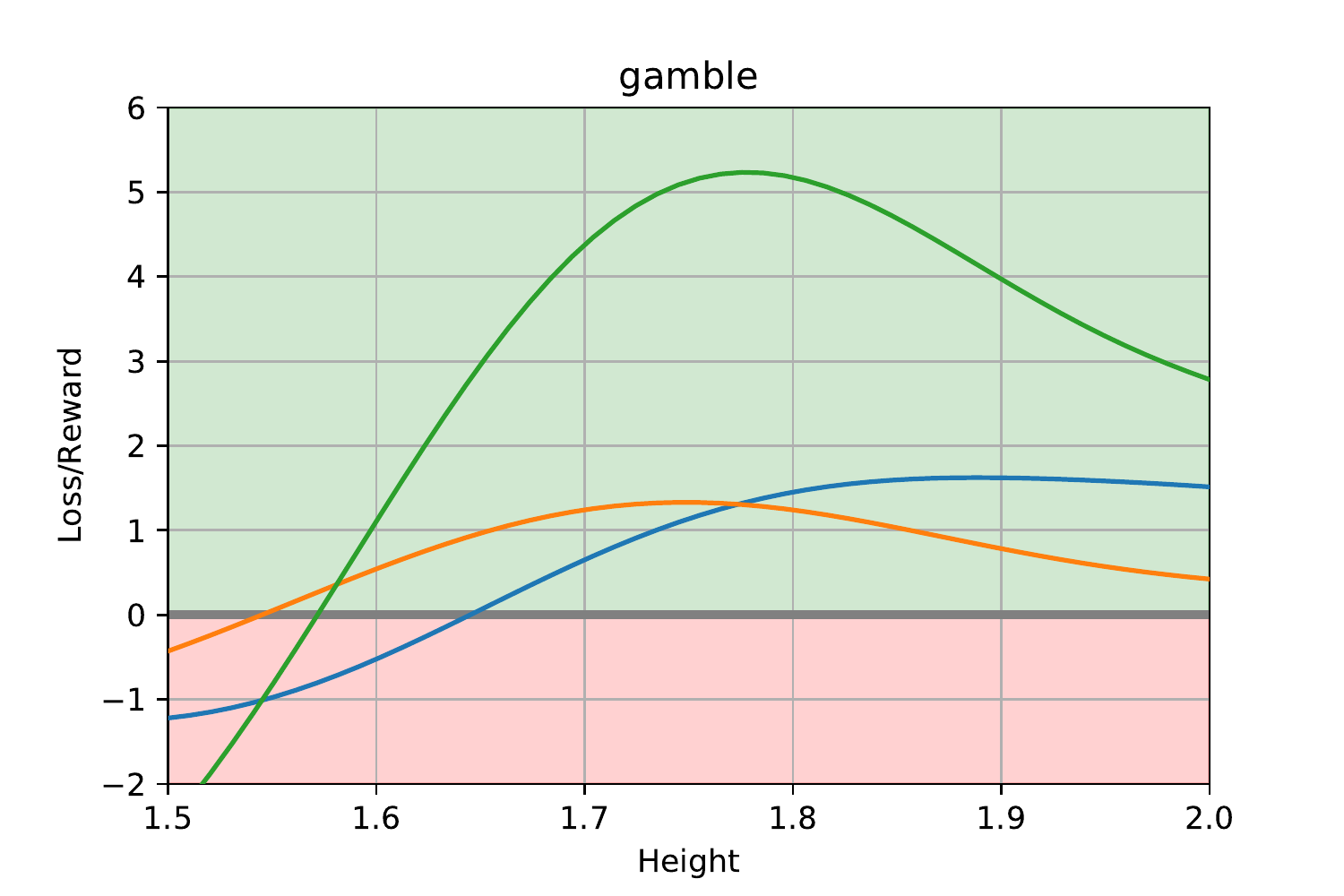}
\caption{}
\label{fig:sub1}
\end{subfigure}
\begin{subfigure}{.33\linewidth}
\centering
\includegraphics[scale=.33]{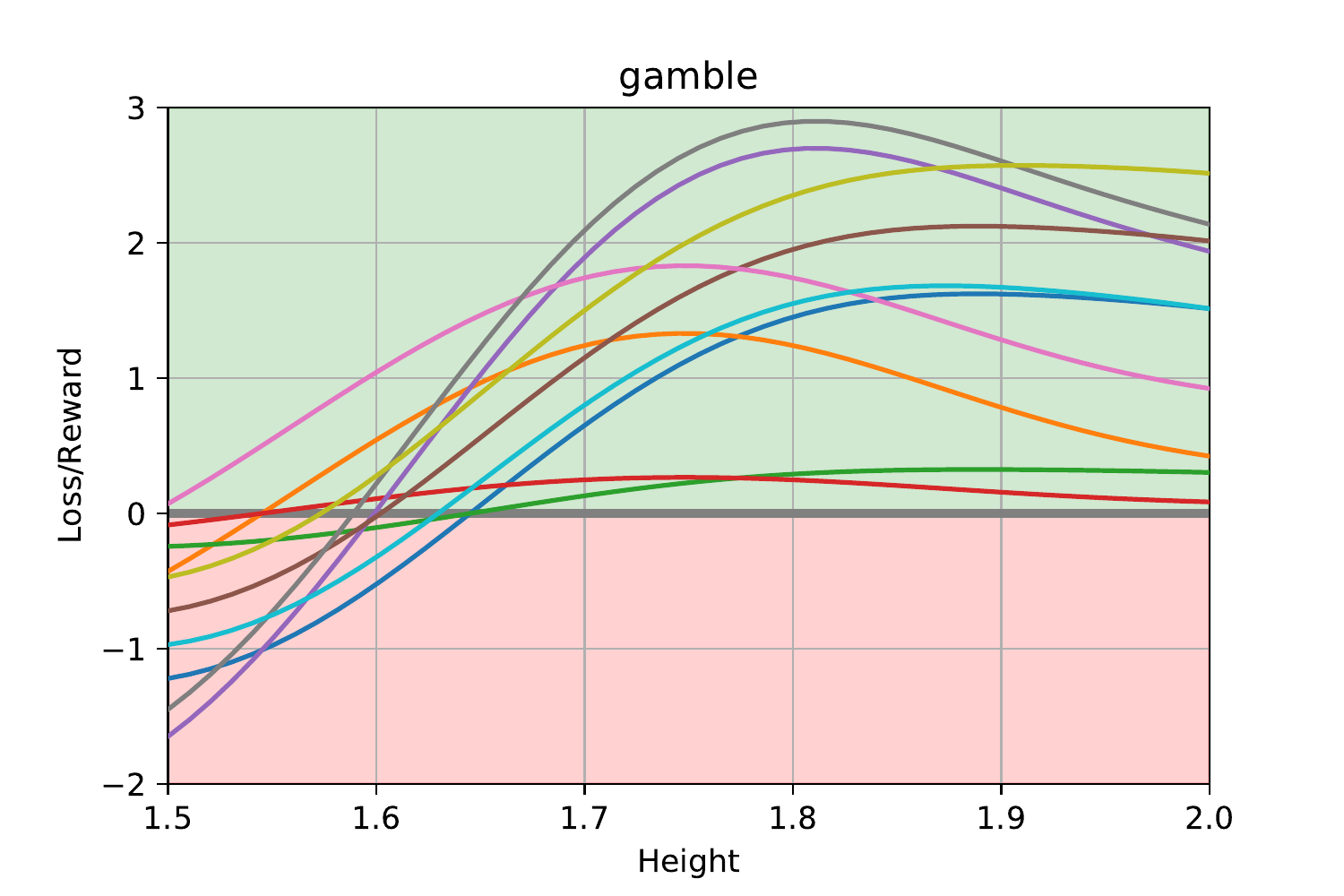}
\caption{}
\label{fig:sub1}
\end{subfigure}\\
\hspace{2cm}\begin{subfigure}{.5\linewidth}
\centering
\includegraphics[scale=.33]{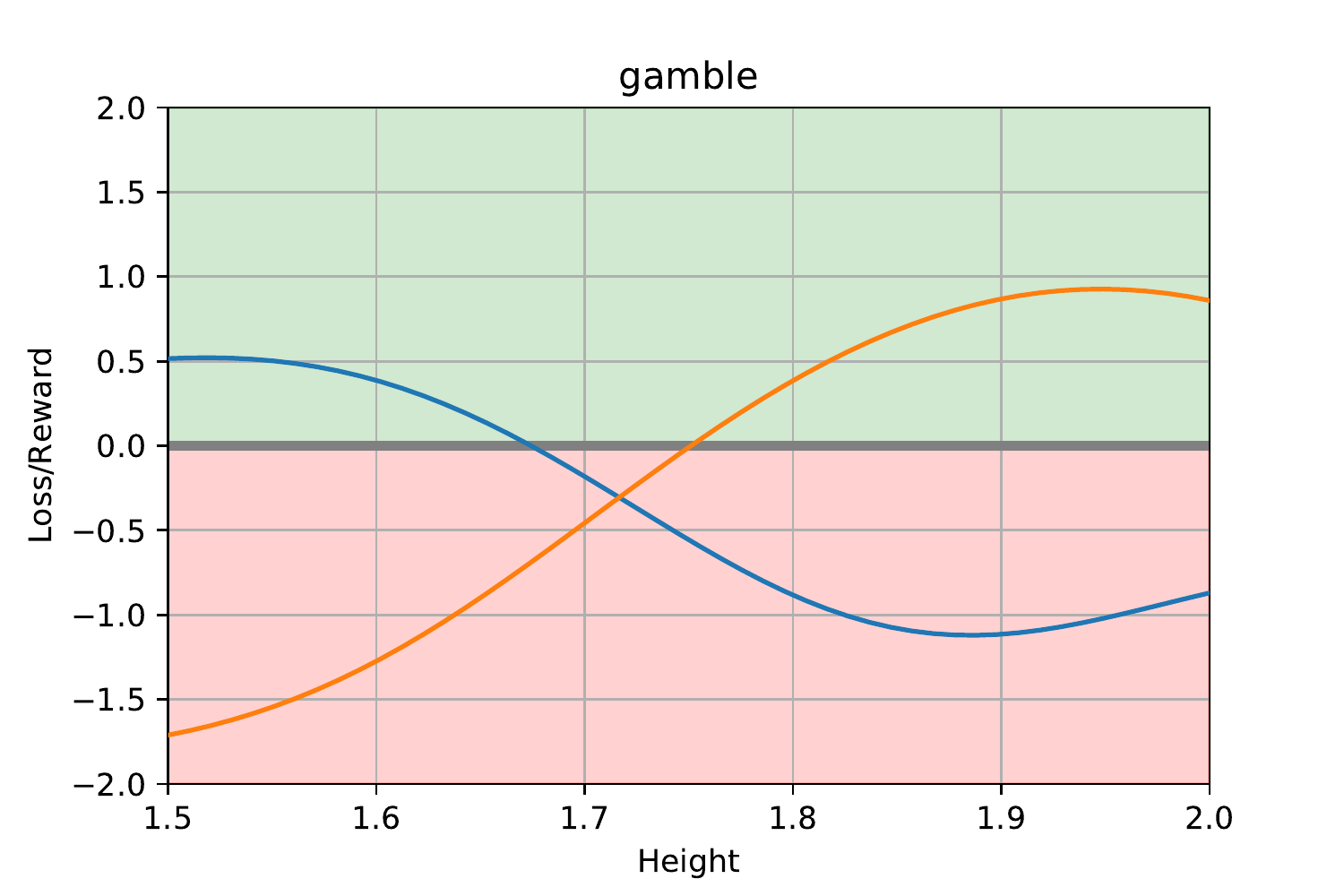}
\caption{}
\label{fig:sub1}
\end{subfigure}
\begin{subfigure}{.5\linewidth}
\centering
\includegraphics[scale=.33]{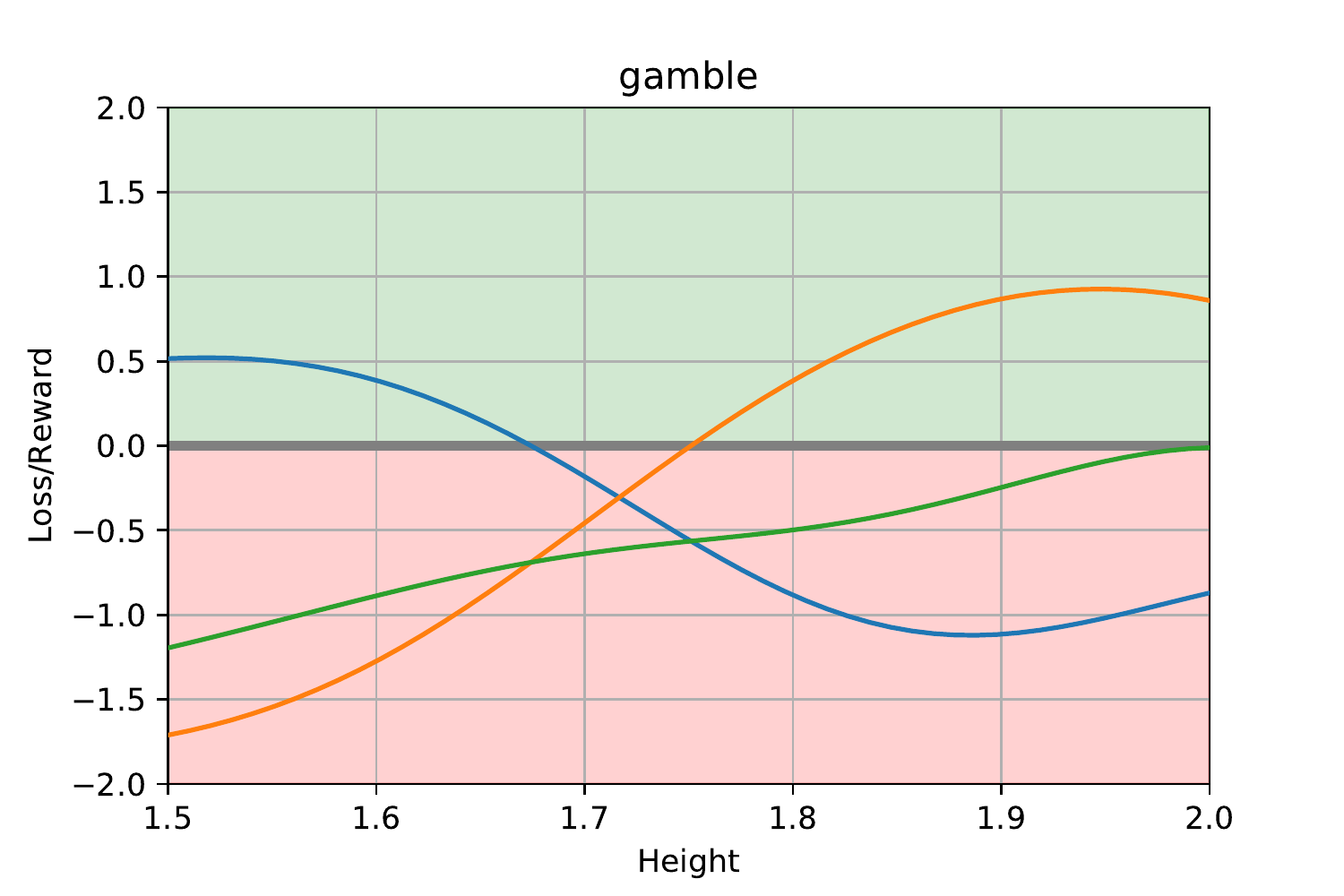}
\caption{}
\label{fig:sub1}
\end{subfigure}
\caption{How tall was Albert Einstein? Do you want to bet on it? Let us denote Einstein's height with $x$ and consider
the possibility space $\Omega=[1.5,2]m$. A gamble $g$ is a bounded function from $\Omega$ to the real numbers, an example in Plot~(a).
The meaning of $g$ is as follows: if you accept $g$ then  you for instance commit yourself to receive $0.75$ utiles if $x=1.8m$; to lose $0.75$ utiles if $x=1.63m$ etc., Plot~(b).\\
\ref{eq:taut} says that, if you are rational, you should accept any gamble in Plot~(c) because, no matter  Einstein's height, they may increase your wealth without ever decreasing it, that is all gambles with $g(x) \geq 0$. 
Similarly,  you should avoid  any gamble in Plot~(d) that will surely decrease your wealth, that is with  $g_i<0$ (this follows by \ref{eq:sl}), Plot~(d). 
Tautologies ($\gambles^{\geq}=\{g: g \geq 0\}$) and Falsum ($\gambles^{<}=\{g: g < 0\}$) are depicted in green and, respectively, red colours in Plot~(e).
Assume you have  accepted all gambles in the green area and the two gambles $g_1,g_2$, Plot~(f). Note that, the acceptance of $g_1,g_2$ depends on your beliefs about Einstein's height. If you   accept $g_1,g_2$  then you  should also accept: (i) $g_1(x)+g_2(x)$ Plot~(g); (ii) all the gambles $\lambda_1 g_1(x)+ \lambda_2 g_2(x)$  for any  $\lambda_i\geq0 $ Plot~(h);
(iii) all gambles  $\lambda_1 g_1(x)+ \lambda_2 g_2(x)+h(x)$ with $\lambda_i\geq0 $ and $h\in \gambles^{\geq}$, Plot~(i).
Plots~(g)--(i) follow  by \ref{eq:NE}.
Assume that instead of Plot~(f), you have  accepted the green area and $g_1,g_2$ in Plot~(j).
Then, you must also accept $g_1+g_2$ (because of \ref{eq:NE}).
However, $g_1+g_2$  is always negative, green function in Plot~(k). You always lose utiles in this case. In other words, by accepting  $g_1,g_2$ you incur a sure loss---\ref{eq:sl} is violated and so you are irrational.}
\label{fig:main}
\end{figure}

Postulate \ref{eq:sl}, which presupposes postulates \ref{eq:taut} and \ref{eq:NE},  provides the normative definition of TDG, referred to by $\theory$.  \rednew{Moreover, as simple as it looks, alone it captures the coherence postulate as formulated in the Introduction in case of classical probability theory.
To see this, assume $\domain$ is coherent. We give $\domain$ a probabilistic interpretation by observing that the mathematical dual of $\domain$ is the closed convex set:
\begin{equation}
\label{eq:dual}
\begin{aligned}
 \mathcal{P}=\left\{\mu  \in \mathsf{S} \Big| \int_{\pspace} g(\omega) d\mu(\omega)\geq0, ~\forall  g\in \mathcal{G}\right\},\\
\end{aligned}
\end{equation}
where $\mathsf{S}=\{ \mu \in \mathcal{M} \mid \inf \mu \geq0,~\int_{ \omega}  d\mu(\omega)=1\}$  is the set of all probability charges in $\pspace$, and $ \mathcal{M}$ the set of all charges (a charge is a finitely additive signed-measure) on $\pspace$. 
\bluenew{The derivations are given in Appendix~\ref{subsec:dual}. Observe that the term ``charge'' is used in Analysis to denote a finitely additive set function \citep[Ch.11]{aliprantisborder}. Conversely a measure is a countably additive set function.\footnote{De Finetti considered only finite additive probabilities,
while Kolmogorov considered sigma additive probabilities.} In this paper we use charges to be more general, but this does not really matter
for the results about QT that we are going to present later on.}
Hence, whenever an agent is coherent, Equation~\eqref{eq:dual} states that desirability corresponds to non-negative expectation (that is $\int_{\pspace} g(\omega) d\mu(\omega)\geq0$ for all probabilities in $\mathcal{P}$). When $\domain$ is incoherent,  $\mathcal{P}$ turns out to be empty---there is no probability compatible with the assessments in $\domain$. Stated otherwise, satisfying the axioms of classical probability---that is being a non-negative function that integrates to one--- is tantamount of being in the dual of a set $\domain$ satisfying the coherence postulate \ref{eq:sl}. }

\subsection{Computation postulate}\label{sec:comp}

The problem of checking whether $\domain$ is coherent or not can be formulated as the following decision problem:
\begin{equation}
\label{eq:dec}
\begin{aligned}
 \exists\lambda_i\geq0:-1-\sum\limits_{i=1}^{|\mathcal{G}|} \lambda_i g_i \in \gambles^{\geq}.
\end{aligned}
\end{equation}
If the answer is ``yes'', then the gamble $-1$ belongs to $\domain$, proving $\domain$'s incoherence. Actually any inference task can ultimately be reduced to a problem of the form~\eqref{eq:dec}, as discussed in Appendix~\ref{sec:inferTDG}. 
Hence, the above decision problem  unveils  a crucial fact: the hardness of inference in classical probability corresponds to the hardness of evaluating the non-negativity of a function in the considered space (let us call this the ``non-negativity decision problem'').

When $\pspace$ is infinite (in this paper we consider the case  $\pspace \subset\complex^n$) and for generic functions, the non-negativity decision problem is undecidable. To avoid such an issue, we may impose  restrictions on the class of allowed gambles and thus define $\theory$  on a appropriate subspace $\gambles_R$ of $\gambles$ (see Appendix~\ref{app:comp}). For instance, instead of $\gambles$, we may consider $\gambles_R$: the class of multivariate polynomials of degree at most $d$ (we denote by $\gambles_R^{\geq}\subset\gambles_R$ the subset of non-negative polynomials and by $\gambles_R^{<}\subset\gambles_R$ the negative ones). In  doing so, by Tarski-Seidenberg quantifier elimination theory \citep{tarski1951decision,seidenberg1954new}, the decision problem becomes decidable, but still intractable, being in general NP-hard.\footnote{\bluenew{By Stone-Weierstrass theorem, multivariate-polynomials are dense in the space of  continuous function in $[0,1]^n$. Therefore, multivariate-polynomials are a good choice for $\gambles_R$ because they guarantee both expressiveness and decidability.}} If we  accept that P$\neq$NP and we require that inference should be tractable  (in P), we are stuck. 
What to do?
\rednew{The solution, as advocated in this paper,}   
is to change the meaning of ``being non-negative'' for a  function by considering a subset $\bnonnegative \subsetneq \nonnegative_R$ for which the membership problem in \eqref{eq:dec} is in P. 

In other words, a computationally efficient TDG, which we denote by $\btheory$, 
should be based on a logical redefinition of the tautologies, i.e., by stating that
\begin{enumerate}[label=\upshape B0.,ref=\upshape B0]
\item\label{eq:btaut} $\bnonnegative$ should always be desirable,
\end{enumerate}
in the place of~\ref{eq:taut}. The rest of the theory can develop following the footprints of $\theory$. In particular, the deductive closure for $\btheory$ is defined by:
\begin{enumerate}[label=\upshape B1.,ref=\upshape B1]
\item\label{eq:b1} $\bdomain\coloneqq\posi(\bnonnegative \cup \mathcal{G})$.
\end{enumerate}
And the coherence postulate, which now naturally encompasses the computation postulate, states that:
\begin{definition}[P-coherence]
\label{def:bavs}
A set $\bdomain$ of desirable gambles is \emph{P-coherent}  if and only if\
\begin{enumerate}[label=\upshape B2.,ref=\upshape B2]
\item\label{eq:b2} $\bnegative \cap  \bdomain=\emptyset$,
\end{enumerate}
\bluenew{where $\bnegative=\text{Interior}(-\bnonnegative)$.}
\end{definition}
P-coherence owes its name to the fact that, whenever $\bnonnegative$ contains all positive constant gambles,~\ref{eq:b2} can be checked in polynomial time by solving:
\begin{equation}
\label{eq:bdec}
\begin{aligned}
\exists\lambda_i\geq0 ~~\text{ such that }~~ -1-\sum\limits_{i=1}^{|\mathcal{G}|} \lambda_i g_i \in \bnonnegative,
\end{aligned}
\end{equation}
where $-1$  denotes the constant function $f \in \gambles_R$ such that  $f(\omega)=-1$ for all $\omega \in  \pspace$.

\begin{figure}
\begin{subfigure}{.33\linewidth}
\centering
\includegraphics[scale=.33]{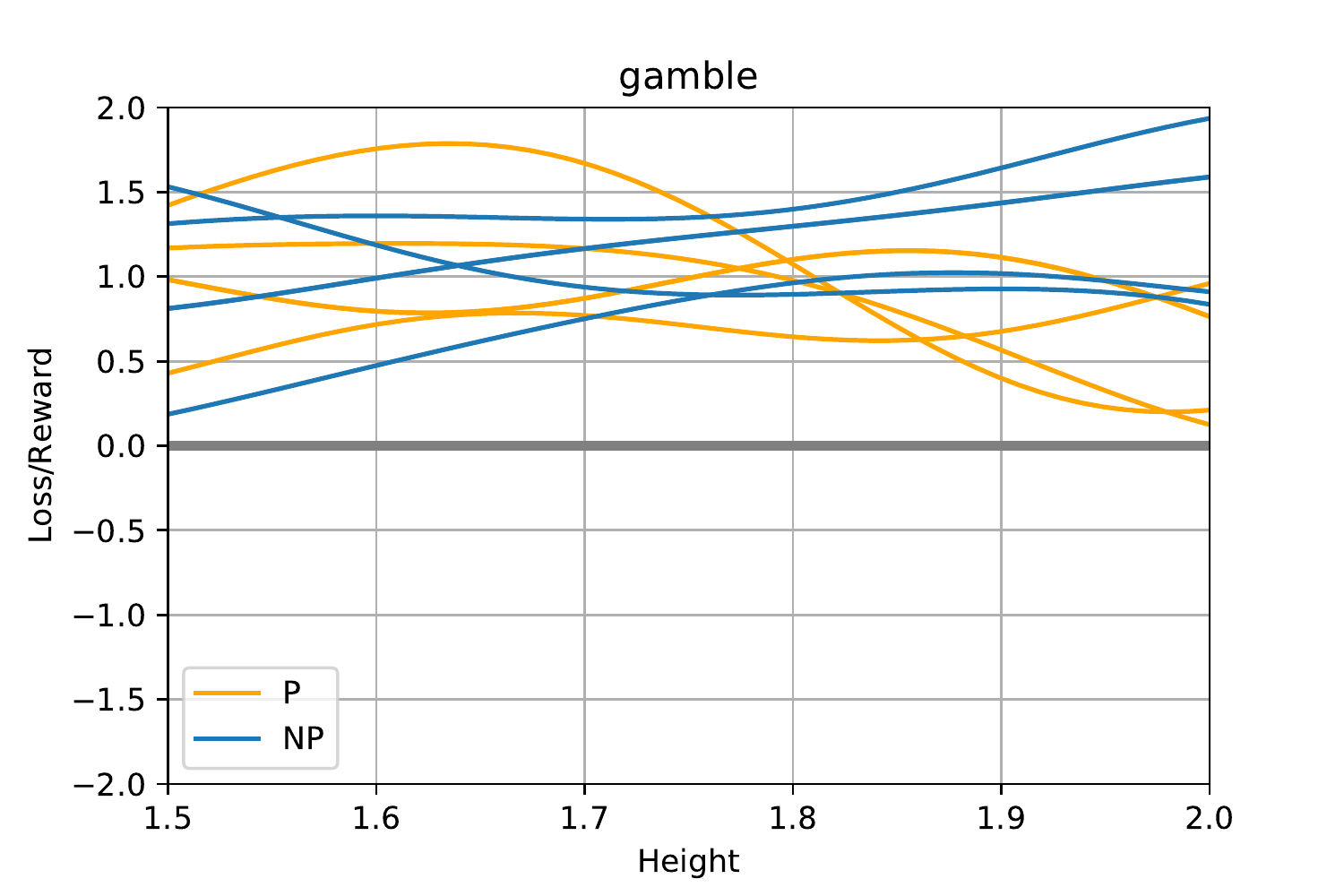}
\caption{}
\label{fig:sub1}
\end{subfigure}
\begin{subfigure}{.33\linewidth}
\centering
\includegraphics[scale=.33]{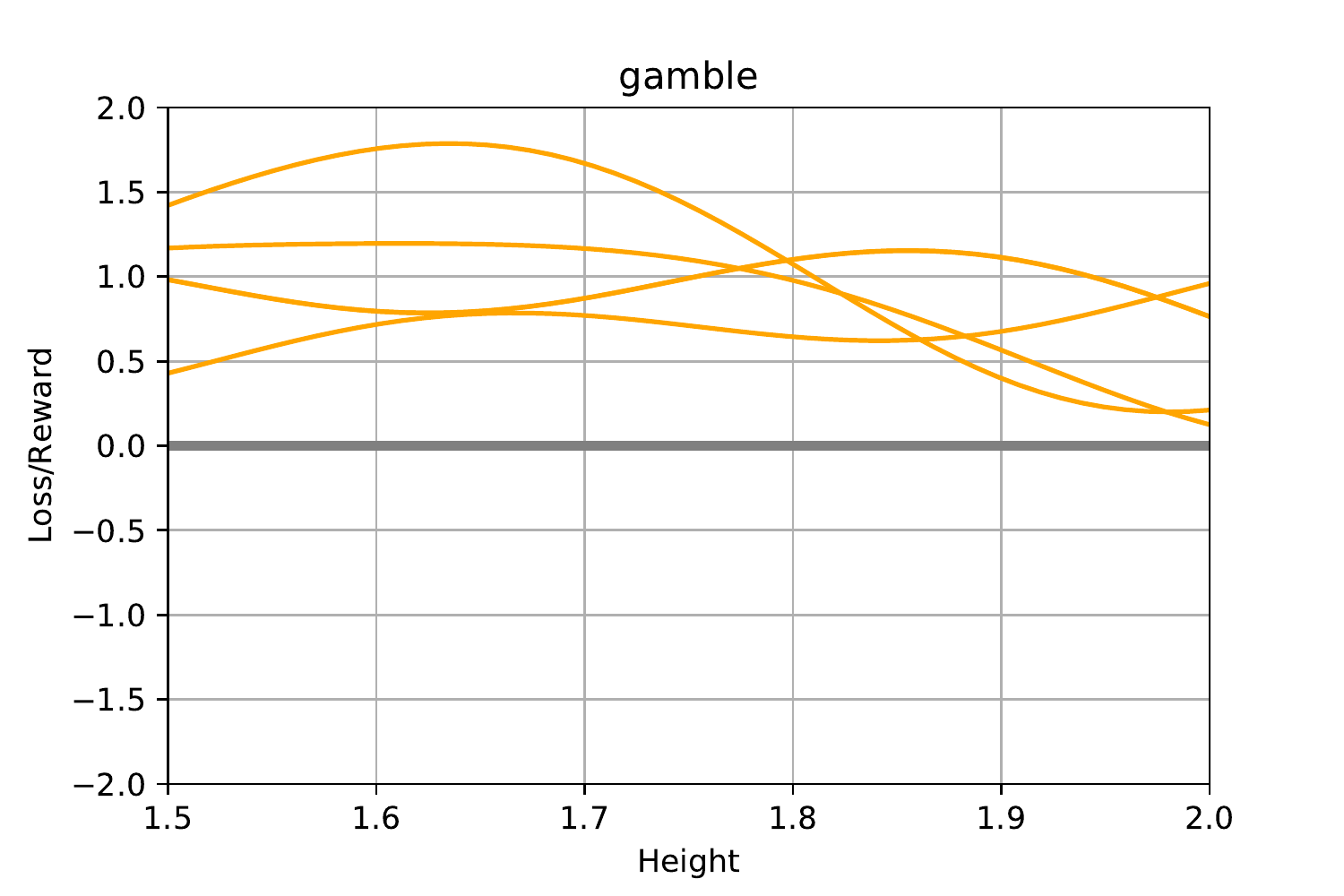}
\caption{}
\label{fig:sub1}
\end{subfigure}
\begin{subfigure}{.33\linewidth}
\centering
\includegraphics[scale=.33]{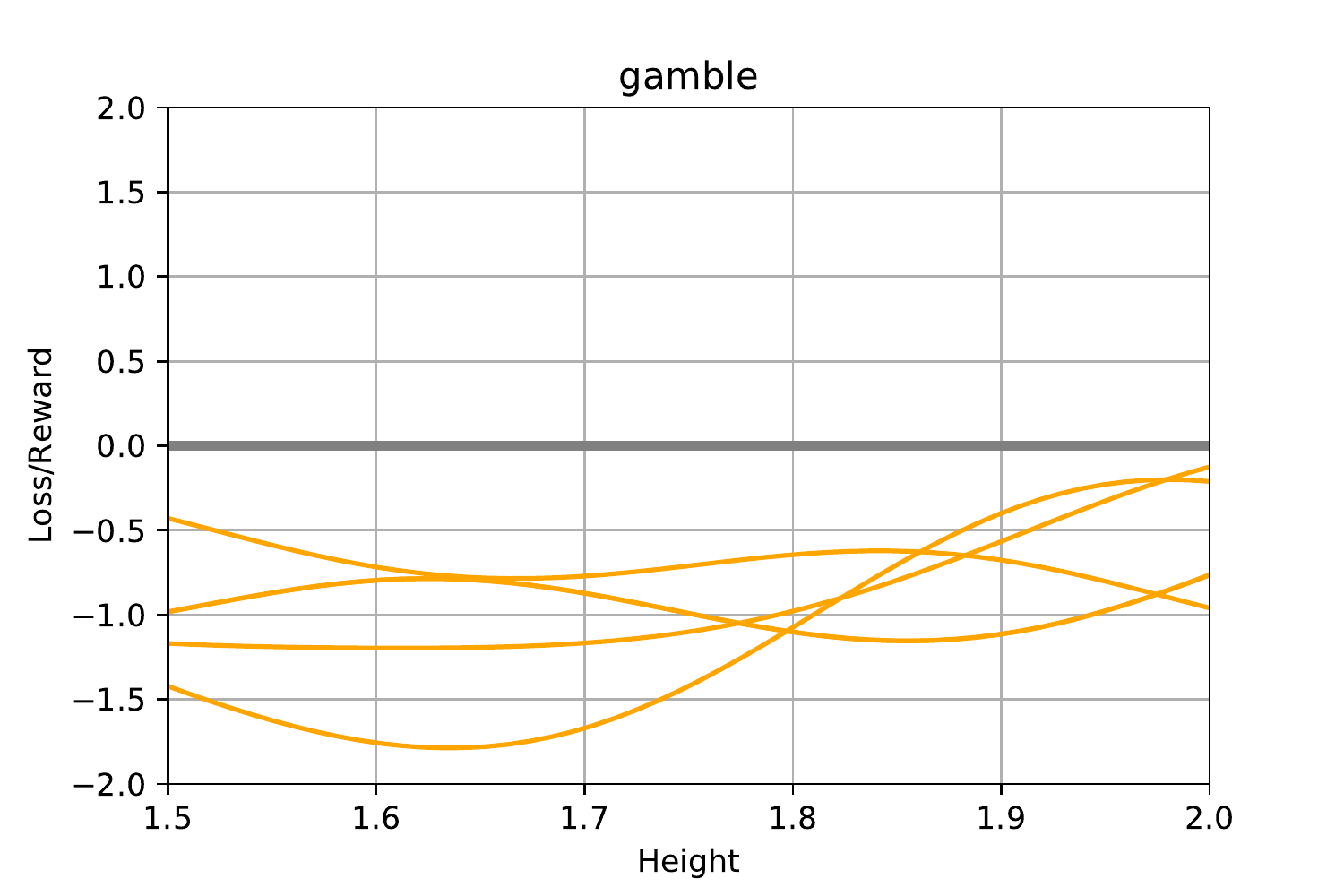}
\caption{}
\label{fig:sub1}
\end{subfigure}
\hspace{2cm}\begin{subfigure}{.5\linewidth}
\centering
\includegraphics[scale=.33]{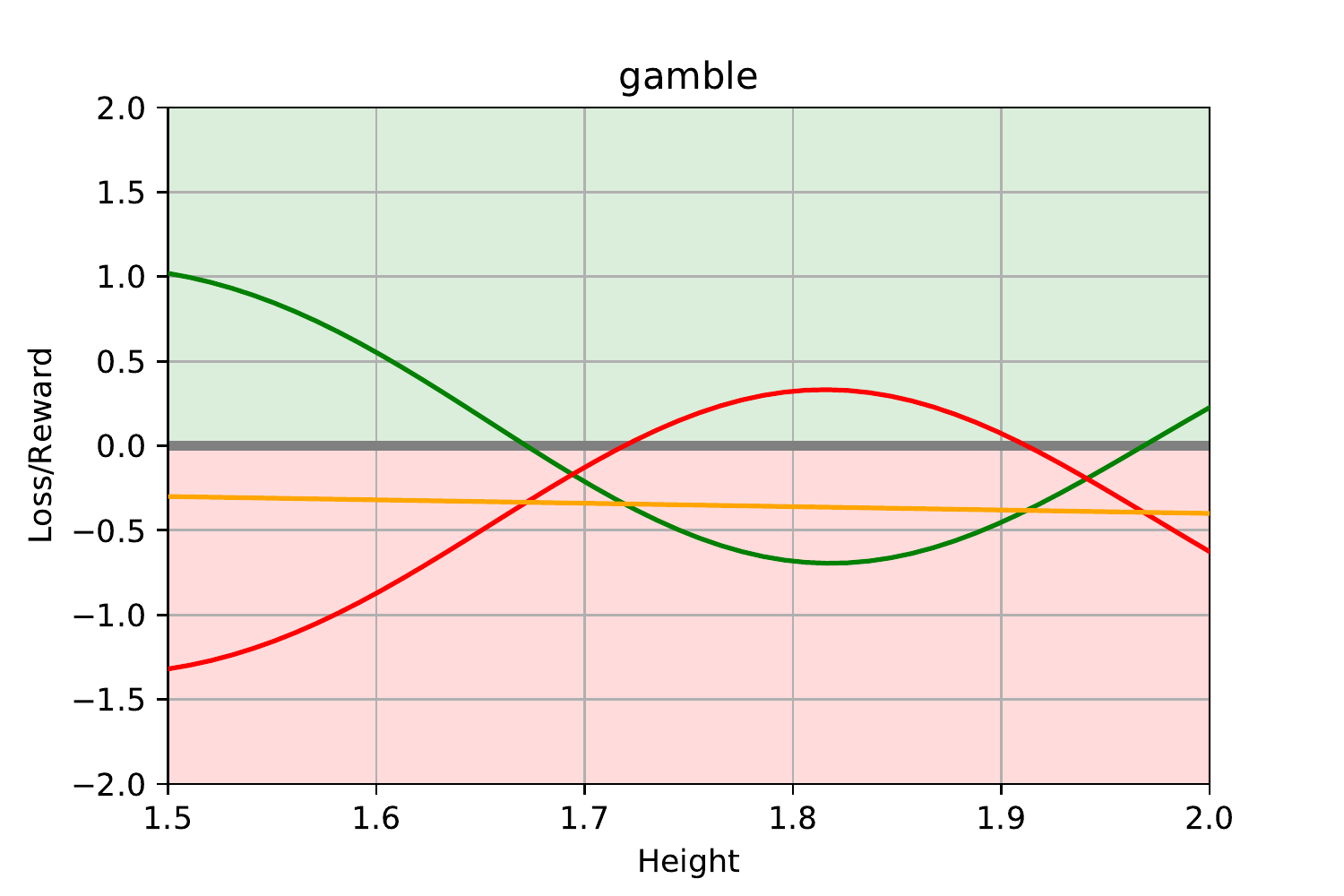}
\caption{}
\label{fig:sub1}
\end{subfigure}
\begin{subfigure}{.5\linewidth}
\centering
\includegraphics[scale=.33]{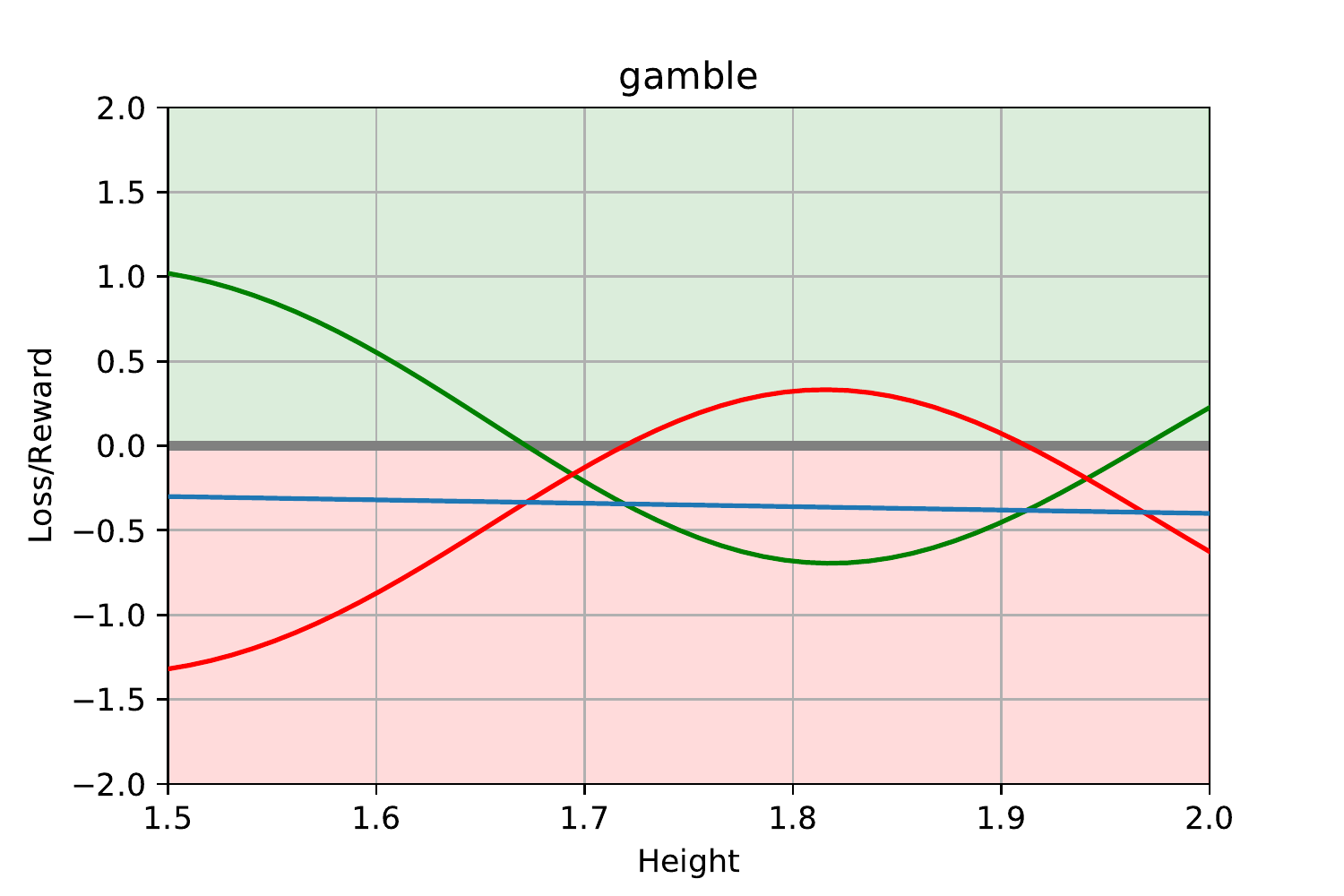}
\caption{}
\label{fig:sub1}
\end{subfigure}
\caption{Let us consider again Einstein's height example. Assume we can split the nonnegative gambles (tautologies) in two groups:
(i) the ones whose nonnegativity \textbf{can} be assessed in polynomial time (orange colour in Plot~(a));
(ii) the ones whose nonnegativity \textbf{cannot} be assessed in polynomial time (blue colour in Plot~(a)).
If you are a computationally rational agent, then you should surely accept all the orange gambles: they are nonnegative and you can evaluate their nonnegativity
in polynomial time.
P-coherence demands that you should accept all nonnegative orange gambles, Plot~(b), and avoid all negative orange gambles, Plot~(c).
Assume that  you have  accepted all nonnegative orange gambles and $g_1$ (red), $g_2$ (green) in Plot~(d).
Then, you must also accept $g_1+g_2$ (because of \ref{eq:b1}).
Note that, $g_1+g_2$  is always negative. However, according to P-coherence, you are (computationally) irrational only if  $g_1+g_2$  is of type orange, that is if you can evaluate in polynomial time that  $g_1+g_2$ is negative,  Plot~(d). In case $g_1+g_2$ is of type blue  Plot~(e), although you are accepting a gamble
that is negative, you are not computationally irrational. The reason is that you may not be able to evaluate its negativity. We will see that all ``quantum paradoxes'' arise when we are in a situation like Plot~(e).}
\label{fig:main1}
\end{figure}

Hence, $\btheory$ and $\theory$ have  the same deductive apparatus; they just differ in the considered set of tautologies, and thus in their (in)consistencies.
An example that gives an intuition of the postulates is given in Figure \ref{fig:main1}.

Interestingly, as we did previously, we can associate a ``probabilistic'' interpretation to the calculus defined by \ref{eq:btaut}--\ref{eq:b2} by computing the dual of a P-coherent set. Since $\gambles_R$ is a topological vector space, we can consider its dual space $\gambles_R^*$ of all bounded linear functionals $L: \gambles_R \rightarrow \reals$. Hence, with the additional condition that linear functionals preserve the unitary gamble, the dual cone of a P-coherent $\bdomain\subset \gambles_R$  is given by

\begin{equation}
\label{eq:dualL}
\bdomain^\circ=\left\{L \in \gambles_R^* \mid L(g)\geq0, ~~L(1)=1,~\forall g \in \bdomain\right\}.
\end{equation}
To $\bdomain^\circ$ we can then associate its extension  $ \bdomain^\bullet$ in $\mathcal{M}$, that is, the set of all  charges
 on $\pspace$ extending an element in $\bdomain^\circ$. 
 In other words, we can attempt to write $L(g)$ as an ``expectation'', that is an integral with respect to a  charge, i.e., $L(g)=\int_{\pspace} g(\omega)d\mu(\omega)$.
 In general however this set does not yield a classical probabilistic interpretation to $\btheory$. This is because, whenever
$\bnegative \subsetneq \negative_R$, there are negative gambles that cannot be proved to be negative in polynomial time:
  \begin{theorem}\label{th:fundamental}
  Assume that $\bnonnegative$ includes all positive constant gambles and that it is closed (in $\gambles_R$). 
Let $\bdomain \subseteq \gambles_R$ be a P-coherent set of  desirable gambles. The following statements are equivalent:
\begin{enumerate}
   \item $\bdomain$ includes a negative gamble that is not in $\bnegative$.
\item $\posi(\nonnegative\cup \mathcal{G})$ is incoherent, and thus $\mathcal{P}$ is empty.
    \item $\bdomain^{\circ}$ is not (the restriction to  $\gambles_R$ of) a closed convex set of mixtures of classical evaluation functionals.\footnote{\bluenew{
    Here ``closed'' is with respect to the weak$^*$-topology,  that is the coarsest topology on the dual space making the evaluation functions  continuous.
    Note also that evaluation functionals or, more in general the elements of the dual space, can be informally interpreted as simply states.}}
    \item The extension   $ \bdomain^\bullet$ of $\bdomain^{\circ}$ in the space $\mathcal{M}$ of all charges in $\pspace$ includes only  charges that are not probabilities (they have some negative value).
\end{enumerate}
 \end{theorem}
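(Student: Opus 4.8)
The plan is to use statement~(2) as a pivot and prove $(1)\Leftrightarrow(2)$, $(2)\Leftrightarrow(4)$ and $(2)\Leftrightarrow(3)$ in turn, relying on the coherence--duality correspondence already recorded around Equation~\eqref{eq:dual} (that $\domain:=\posi(\nonnegative_R\cup\mathcal{G})$ is coherent iff $\mathcal{P}\neq\emptyset$). The single structural observation driving everything is that $\bnonnegative\subseteq\nonnegative_R$ forces $\bdomain=\posi(\bnonnegative\cup\mathcal{G})\subseteq\domain$, so P-coherence is literally a relaxation of coherence and any discrepancy between the two theories must be carried by a gamble in $\domain\setminus\bdomain$.

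For $(1)\Rightarrow(2)$ the argument is immediate: a negative gamble in $\bdomain$ is also a negative gamble in the larger cone $\domain$, so $\negative_R\cap\domain\neq\emptyset$, i.e.\ $\domain$ is incoherent and $\mathcal{P}=\emptyset$. For $(2)\Rightarrow(1)$ I would take the witness of incoherence $h\in\negative_R\cap\domain$ and write it, collapsing all the tautology contributions into one term, as $h=n+\sum_j\lambda_j g_j$ with $n\in\nonnegative_R$ and $\lambda_j\geq0$. Because $n\geq0$ pointwise, $h':=h-n=\sum_j\lambda_j g_j$ obeys $\sup h'\leq\sup h<0$, so $h'\in\negative_R$, while $h'\in\posi(\mathcal{G})\subseteq\bdomain$. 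P-coherence of $\bdomain$ rules out $h'\in\bnegative$, so $h'$ is exactly a negative gamble in $\bdomain$ that is not in $\bnegative$, which is~(1). This little cone computation is the combinatorial core and is elementary.

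For $(2)\Leftrightarrow(4)$ I would simply unfold the extension. A charge $\mu\in\bdomain^\bullet$ is characterised by $\int_{\pspace}d\mu=1$, $\int g\,d\mu\geq0$ for all $g\in\bnonnegative$, and $\int g_j\,d\mu\geq0$ for all $g_j\in\mathcal{G}$. Any probability charge is automatically non-negative on $\bnonnegative\subseteq\nonnegative$, so the probabilities inside $\bdomain^\bullet$ are precisely the $\mu\in\mathsf{S}$ with $\int g_j\,d\mu\geq0$, i.e.\ the set $\mathcal{P}$. Hence $\bdomain^\bullet$ contains a probability iff $\mathcal{P}\neq\emptyset$; negating this equivalence turns statement~(4) (``$\bdomain^\bullet$ holds only non-probabilities'') into $\mathcal{P}=\emptyset$, which is~(2).

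The delicate step, and the one I expect to be the real obstacle, is $(2)\Leftrightarrow(3)$. I read~(3) as the assertion that $\bdomain^\circ$ contains \emph{no} restriction of a mixture of evaluation functionals $L_\omega\colon g\mapsto g(\omega)$, $\omega\in\pspace$; equivalently $\bdomain^\circ\cap\mathcal{E}=\emptyset$, where $\mathcal{E}$ is the weak$^*$-closed convex hull of $\{L_\omega:\omega\in\pspace\}$ restricted to $\gambles_R$. The pivot is a Haviland/Riesz-type representation: I would prove that $\mathcal{E}$ is exactly the set of normalized functionals that are non-negative on \emph{all} of $\nonnegative_R$, i.e.\ the restrictions of probability charges on $\pspace$ (here compactness of $\pspace\subset\complex^n$ and closedness of $\bnonnegative$ make the bipolar argument go through). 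Granting this, a functional $L\in\bdomain^\circ\cap\mathcal{E}$ is normalized, non-negative on $\nonnegative_R$, and satisfies $L(g_j)\geq0$, hence lifts to a charge in $\mathcal{P}$, and conversely every $\mu\in\mathcal{P}$ restricts to such an $L$; thus $\bdomain^\circ\cap\mathcal{E}=\mathcal{P}|_{\gambles_R}$, which is empty iff $\mathcal{P}=\emptyset$. The genuinely technical content is therefore the identification $\mathcal{E}=\{L:L(1)=1,\,L\geq0\text{ on }\nonnegative_R\}$ together with the lifting of a classical functional on $\gambles_R$ to an honest probability charge on $\pspace$ (a truncated moment problem); once this is in hand the remaining equivalences are bookkeeping on top of the coherence--duality of Section~\ref{sec:dg}.
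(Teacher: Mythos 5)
Your proposal is correct, and it assembles the theorem from the same circle of ideas as the paper, but with a genuinely different arrangement of the equivalences. The paper's own proof (Appendix~\ref{app:complex}) is terse and pivots on the dual side: it first observes that $\bdomain^\bullet$ consists exactly of the charge-extensions of the functionals in $\bdomain^\circ$, so $(3)\Leftrightarrow(4)$ is immediate; it obtains $(2)\Leftrightarrow(4)$ by citing Theorem~\ref{cor:noncoherent} (non-degenerate closed convex cones that are incoherent are precisely those whose duals contain no probability charge); and it proves $(1)\Leftrightarrow(3)$ contrapositively via the domination fact that $g\leq f$ for every $g\in\bdomain$ and $f\in\posi(\nonnegative\cup\bdomain)\setminus\bdomain$. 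You instead pivot on $(2)$ and argue on the primal side where possible: your decomposition $h=n+\sum_j\lambda_j g_j$ with $h'\coloneqq h-n\leq h$ in $(2)\Rightarrow(1)$ is exactly the paper's domination inequality, run forwards rather than contrapositively; your $(2)\Leftrightarrow(4)$ unfolds by hand what the paper delegates to Theorem~\ref{cor:noncoherent}; and your $(2)\Leftrightarrow(3)$ isolates as an explicit lemma --- the Haviland/Riesz-type identification of mixtures of classical evaluation functionals with restrictions of probability charges --- what the paper treats as essentially definitional, backed only by the Riesz--Kantorovich extension remark in Appendix~\ref{app:comp}. That lemma is indeed the genuinely technical content, and it does hold in the paper's setting; note, though, that in the finitely additive framework it needs no compactness of $\pspace$: a normalized functional on $\gambles_R$ that is non-negative on $\nonnegative_R$ extends by Riesz--Kantorovich to a positive normalized functional on $\gambles$, hence is integration against a probability charge, and probability charges are weak$^*$ limits of convex combinations of point evaluations. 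A further merit of your write-up is that you disambiguate statement $(3)$ correctly, reading it as $\bdomain^\circ\cap\mathcal{E}=\emptyset$ (no element of the dual is a classical mixture): this is the reading the paper intends, and the only one under which the theorem is true --- for instance with $\mathcal{G}=\emptyset$ and $m\geq2$ the dual is the full set of density matrices, which contains both separable (classical) and entangled (non-classical) elements, so the ``subset'' or ``equality'' readings of $(3)$ would break the equivalence with $(1)$, $(2)$ and $(4)$. In sum: the paper's route buys brevity by leaning on Theorem~\ref{cor:noncoherent} and on a definitional reading of ``mixtures of evaluation functionals''; yours buys a self-contained and more carefully quantified argument at the cost of re-proving, in the pivot $(2)$, facts the paper has already packaged.
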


Theorem~\ref{th:fundamental} is the central result of this paper (the proof is in Appendix~\ref{app:complex}). It states that whenever $\bdomain$ includes a negative gamble (item~1), there is no classical probabilistic interpretation for it (item~2). The other points suggest alternative solutions to overcome this deadlock: either to change the notion of evaluation functional (item~3) or to use quasi-probabilities (probability distributions that admit negative values) as a means for interpreting $\btheory$ (item~4).
The latter case means that, when we write $L(g)=\int_{\pspace} g(\omega)d\mu(\omega)$, then $\mu(\omega)$ satisfies
$1=L(1)=\int_{\pspace} d\mu(\omega)=1$ but it is not a probability  charge.

In what follows, we are going to show that QT can be deduced from a particular instance of the theory $\btheory$. As a consequence, we get that the computation postulate, and in particular \ref{eq:btaut}, 
is 
the unique reason for all its paradoxes, which all boil down to a rephrasing of the various statements of Theorem~\ref{th:fundamental} in the considered quantum context.

\section{QT as computational rationality}\label{sec:qt}
We briefly recall the framework discussed in the Introduction. Consider first a single particle system with $n$-degrees of freedom and  
$$
\overline{\complex}^{n}\coloneqq\{ x\in \complex^{n}: ~x^{\dagger}x=1\}.
$$
We can interpret an element $\tilde{x} \in \overline{\complex}^{n}$ as ``input data'' for some classical preparation procedure.
For instance,  in the case of the spin-$1/2$ particle ($n = 2$),  if $\theta = [\theta_1 , \theta_2 , \theta_3 ]$ is the direction of a filter
in the Stern-Gerlach experiment, then $\tilde{x}$ is its one-to-one mapping into $\overline{\complex}^{2}$ (apart from a phase term).
For spin greater than $1/2$, the variable $\tilde{x} \in \overline{\complex}^{n}$ 
 cannot directly be interpreted in terms only of ``filter direction''.
Nevertheless,  at least on the formal level, $\tilde{x}$ plays the role of  a ``hidden variable'' in our model
and $\overline{\complex}^{n}$ of the possibility space $\pspace$.
This hidden-variable model for QT is also discussed in \cite[Sec.~1.7]{holevo2011probabilistic}, where the author  explains
why this model does not contradict the existing ``no-go'' theorems for hidden-variables, see also Section \ref{sec:hidden1d} and \ref{sec:tensorproduct}.

In QT any real-valued observable  is described by a Hermitian operator.
This naturally imposes restrictions on the type of functions $g$ in \eqref{eq:dec}:
$$
g(x)=x^\dagger G x,
$$
where  $x \in \pspace$ and $G\in \He^{n\times n}$, with $\He^{n\times n}$ being the set of Hermitian matrices of dimension $n \times n$.
Since $G$ is Hermitian and  $x$ is bounded ($x^{\dagger}x=1$), $g$ is a real-valued bounded function ($g(x)=\expval{G}{x}$ in bra-ket notation).

More generally speaking, we can consider composite systems of $m$ particles, each one with $n_j$ degrees of freedom. The  possibility space is the Cartesian product $\pspace=\times_{j=1}^m \overline{\complex}^{n_j}$ and the functions are $m$-quadratic forms:
\begin{equation}
 \label{eq:gamble_many}
 g(x_1,\dots,x_m)=(\otimes_{j=1}^m x_j)^\dagger G (\otimes_{j=1}^m x_j),
\end{equation}
with $G \in \He^{n \times n}$, $n=\prod_{j=1}^m n_j$ and where $\otimes$ denotes the tensor product between vectors regarded as column matrices.
Notice that in our setting the tensor product is ultimately a derived notion, not a primitive one (see  Section \ref{sec:tensorproduct}), as it follows by the properties of $m$-quadratic forms.

For $m=1$ (a single particle), evaluating the non-negativity of the quadratic form $x^\dagger G x$ boils down to checking whether the matrix $G$ is Positive Semi-Definite (PSD) and therefore can be performed in polynomial time.
This is no longer true for $m\geq 2$: indeed,  in this case there exist polynomials of type \eqref{eq:gamble_many} that are non-negative, but whose matrix $G$ is indefinite (it has at least one negative eigenvalue).
Moreover, it turns out that problem \eqref{eq:dec} is not \emph{tractable}:

\begin{proposition}[\cite{gurvits2003classical}]
\label{prop:gurvitis}
The problem of checking the non-negativity of functions of type \eqref{eq:gamble_many} is NP-hard for $m\geq2$. 
\end{proposition}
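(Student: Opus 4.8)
The plan is to prove NP-hardness by reducing a known NP-hard problem to the (complementary) task of \emph{detecting} whether a function of type \eqref{eq:gamble_many} attains a negative value; equivalently, the non-negativity problem itself is co-NP-hard, and in either case it is not in P unless P${}={}$NP. I would carry out the reduction already for the bipartite case $m=2$, after which $m\geq 3$ comes for free. Indeed, setting $n_3=\dots=n_m=1$ turns each factor $\overline{\complex}^{1}$ into a set of unit scalars (pure phases), so that $\otimes_{j=1}^m x_j$ equals $x_1\otimes x_2$ up to a global phase that cancels in the Hermitian form; since then $n=n_1 n_2$, the $m$-particle instance is literally a $2$-particle instance. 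Thus it suffices to establish hardness for $m=2$.

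First I would recast the problem geometrically. Writing $\rho=(x_1 x_1^\dagger)\otimes(x_2 x_2^\dagger)$ for the pure product state built from the unit vectors $x_1,x_2$, one has
\[
g(x_1,x_2)=(x_1\otimes x_2)^\dagger G\,(x_1\otimes x_2)=\operatorname{tr}(G\rho).
\]
Hence $g$ is non-negative on all product vectors if and only if $\operatorname{tr}(G\rho)\geq 0$ for every pure product state, and, by taking convex combinations of the linear functional $\rho\mapsto\operatorname{tr}(G\rho)$, for every separable state $\rho$. The non-negativity problem is therefore exactly the membership problem for $G$ in the dual cone of the separable states (the recognition of the Hermitian operators that are block-positive, i.e.\ that fail to witness any entanglement), and its complement asks whether the separable cone crosses the hyperplane $\operatorname{tr}(G\rho)=0$.

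Next I would build the gadget. Given a subspace $V\subseteq\complex^{n_1}\otimes\complex^{n_2}$ presented by a basis, let $P$ be the orthogonal projector onto $V$ and set $G=\epsilon I-P$ for a rationally chosen threshold $\epsilon\in(0,1)$. Then
\[
g(x_1,x_2)=\epsilon-\lVert P(x_1\otimes x_2)\rVert^2,
\]
so $g$ takes a negative value exactly when some product vector has overlap with $V$ exceeding $\epsilon$, that is, when $\max_{x_1,x_2}\lVert P(x_1\otimes x_2)\rVert^2>\epsilon$. Since deciding whether $V$ contains a product vector is NP-hard, and a product vector inside $V$ forces overlap $1$, the reduction is completed once one shows that on NO-instances (no product vector in $V$) the maximal product overlap drops below $1$ by an inverse-polynomial gap, so that a single intermediate threshold $\epsilon$ of polynomial bit-size separates the two cases. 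An alternative route is to encode CLIQUE/independent set through a Motzkin–Straus-type construction, designing $G$ so that the optimiser is forced onto nonnegative real coordinates and the product overlap reads off the clique number.

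The main obstacle is precisely this gap-and-attainment analysis: the optimisation ranges over continuous complex spheres, so I must certify that the supremum of $\lVert P(x_1\otimes x_2)\rVert^2$ is attained and is quantitatively separated from $1$ on NO-instances, with every quantity controllable in polynomial bit-size. This separation is what makes the statement nontrivial rather than a routine continuity argument, and it is exactly where Gurvits's construction does the real work. I would therefore lean on his explicit gadget---Proposition~\ref{prop:gurvitis} being his result---to supply the polynomial separation, after which the equivalence ``combinatorial YES $\iff$ $g$ takes a negative value'' together with the lift from $m=2$ to $m\geq 2$ closes the proof.
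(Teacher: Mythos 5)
First, a point of calibration: the paper does not prove Proposition~\ref{prop:gurvitis} at all --- it is imported verbatim from Gurvits's paper, which is why your closing admission matters so much. Your outline contains correct and genuinely useful pieces: the lift from $m\geq 2$ to $m=2$ by setting $n_3=\dots=n_m=1$ (unit scalars contribute a global phase that cancels in the Hermitian form) is clean and valid, and the reformulation of non-negativity of \eqref{eq:gamble_many} as membership of $G$ in the dual cone of the separable states (block-positivity) is the standard and correct geometric picture, consistent with the entanglement-witness discussion in Section~\ref{sec:witness}. But as a proof the attempt has a genuine circularity: the two ingredients you lean on --- (i) NP-hardness of deciding whether a subspace $V$ contains a product vector, and (ii) the inverse-polynomial gap separating YES- from NO-instances of your gadget $G=\epsilon I-P$ --- are each theorems of essentially the same depth as the proposition itself, and (ii) is, as you say yourself, ``exactly where Gurvits's construction does the real work.'' Invoking ``his explicit gadget --- Proposition~\ref{prop:gurvitis} being his result'' to supply the separation is assuming the conclusion; nothing in the sketch certifies attainment or the quantitative drop below $1$ on NO-instances, and without that the reduction does not go through.

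Ironically, the route you relegate to an aside is the one that actually closes the argument: Gurvits's hardness (like the later Ling--Nie--Qi--Ye treatment of biquadratic optimisation) runs through CLIQUE via Motzkin--Straus. Concretely, for the adjacency matrix $A$ of a graph one builds the biquadratic form $f(x,y)=\sum_{ij}A_{ij}x_iy_ix_jy_j=(x\circ y)^T A\,(x\circ y)$, notes that over unit spheres the Hadamard product $z=x\circ y$ ranges over the $\ell_1$-ball, and the Motzkin--Straus identity $\max_{z\in\Delta}z^TAz=1-1/\omega(\mathcal{G})$ turns the clique number into an explicitly computable threshold with polynomial bit-size --- precisely the gap your first gadget lacks. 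One then rewrites $f$ as $(x\otimes y)^\dagger B\,(x\otimes y)$ for a suitable Hermitian $B$ and checks that passing from real to complex unit spheres does not change the optimum for forms of this shape (phases can be aligned coordinatewise), so the threshold question for $\omega(\mathcal{G})$ becomes exactly the question of whether $cI-B$ yields a function of type \eqref{eq:gamble_many} taking a negative value. Had you developed that branch instead of the subspace gadget, your proof would be self-contained; as written, the subspace route is a proof outline whose load-bearing step is the cited result itself.
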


What to do? As discussed previously,
the solution  is to change the meaning of ``being non-negative''   by considering a subset $\bnonnegative \subsetneq \nonnegative$ for which the membership problem, and thus \eqref{eq:dec}, is in P.
For  functions of type  \eqref{eq:gamble_many}, we can extend the notion of non-negativity that holds for a single particle 
to $m>1$ particles:
\begin{equation}
\label{eq:b0} \Sigma^{\geq}\coloneqq\{g(x_1,\dots,x_m)=(\otimes_{j=1}^m x_j)^\dagger G (\otimes_{j=1}^m x_j): G\geq0\}.
\end{equation}
That is, the function is ``non-negative'' whenever $G$ is PSD (note that $\Sigma^{\geq}$ is the so-called cone of \textit{Hermitian sum-of-squares} polynomials, see  Section~\ref{sec:sos}, and that, in $\Sigma^{\geq}$ the  non-negative constant functions have the form $g(x_1,\dots,x_m)=c (\otimes_{j=1}^m x_j)^\dagger I (\otimes_{j=1}^m x_j)$ 
with $c\geq0$).
Now, consider any set of desirable gambles $\bdomain$ satisfying \ref{eq:btaut}--\ref{eq:b2} with the given definition of \eqref{eq:b0}: Eureka! We have just derived the first postulate of QT (see Postulate~1 in \cite[p.~110]{nielsen2010quantum}):
 \begin{quote}
 \bluenew{ Associated to any isolated physical system is a complex vector space
with inner product (that is, a Hilbert space) known as the state space of the
system. The system is completely described by its density operator, which is a
positive operator $\rho$ with trace one, acting on the state space of the system.}
 \end{quote}

Indeed, let $\mathcal{G}$ be a finite set of assessments, and $\domain$ the deductive closure as defined by~\ref{eq:b1}; it is not difficult to prove that the dual of $\domain$  is
\begin{align}
\label{eq:credaldef}
\mathscr{Q}&=\{ \rho \in \mathscr{S} \mid Tr(G \rho) \geq  0,~ ~\forall g \in \mathcal{G}\},
\end{align} 
where  $\mathscr{S}=\{ \rho \in \He^{n\times n} \mid \rho\geq0,~~Tr(\rho)=1\}$ is the set of all density matrices. As before, whenever the set $\bdomain$ representing Alice's beliefs about the experiment is coherent, Equation \eqref{eq:credaldef} means that desirability implies non-negative ``expected value'' for all models in $\mathscr{Q}$.
Note that in QT the expectation of $g$ is $Tr(G \rho)$. This follows by Born's rule, a law giving the probability that a measurement on a quantum system will yield a given result.

The agreement with Born's rule is an important constraint in any alternative axiomatisation of QT. \rednew{This is also the case of our theory, but in the sense that Born's rule can be derived from it}.  In fact, in the  view of a density matrix as a  dual operator, $\rho$ is formally equal to
\begin{equation}
  \label{eq:momepr}
\rho=L\left((\otimes_{j=1}^m x_j)(\otimes_{j=1}^m x_j)^\dagger\right),
\end{equation}
with $L$ defined in \eqref{eq:bdec}. 
\rednew{
\begin{example}
 Consider the case $n=m=2$, then
 \begin{equation}
\label{eq:exlinearoperator}
L\left((\otimes_{j=1}^2 x_j)(\otimes_{j=1}^2 x_j)^\dagger\right)= L\left(\left[\begin{smallmatrix}x_{11} x_{11}^{\dagger} x_{21} x_{21}^{\dagger} & x_{11}^{\dagger} x_{12} x_{21} x_{21}^{\dagger} & x_{11} x_{11}^{\dagger} x_{21}^{\dagger} x_{22} & x_{11}^{\dagger} x_{12} x_{21}^{\dagger} x_{22}\\x_{11} x_{12}^{\dagger} x_{21} x_{21}^{\dagger} & x_{12} x_{12}^{\dagger} x_{21} x_{21}^{\dagger} & x_{11} x_{12}^{\dagger} x_{21}^{\dagger} x_{22} & x_{12} x_{12}^{\dagger} x_{21}^{\dagger} x_{22}\\x_{11} x_{11}^{\dagger} x_{21} x_{22}^{\dagger} & x_{11}^{\dagger} x_{12} x_{21} x_{22}^{\dagger} & x_{11} x_{11}^{\dagger} x_{22} x_{22}^{\dagger} & x_{11}^{\dagger} x_{12} x_{22} x_{22}^{\dagger}\\x_{11} x_{12}^{\dagger} x_{21} x_{22}^{\dagger} & x_{12} x_{12}^{\dagger} x_{21} x_{22}^{\dagger} & x_{11} x_{12}^{\dagger} x_{22} x_{22}^{\dagger} & x_{12} x_{12}^{\dagger} x_{22} x_{22}^{\dagger}\end{smallmatrix}\right]\right),
\end{equation}
and so
$\rho_{11}=L(x_{11} x_{11}^{\dagger} x_{21} x_{21}^{\dagger})$, $\rho_{12}=L(x_{11}^{\dagger} x_{12} x_{21} x_{21}^{\dagger} )$ etc..
\end{example}}

Hence, when a projection-valued measurement characterised by the projectors $\Pi_1,\dots,\Pi_n$ is considered, it holds that
$$
L( (\otimes_{j=1}^m x_j)^\dagger \Pi_i (\otimes_{j=1}^m x_j))=Tr(\Pi_i  L((\otimes_{j=1}^m x_j)(\otimes_{j=1}^m x_j)^\dagger))=Tr(\Pi_i \rho ).
$$
Since $\Pi_i\geq0$ and the polynomials $(\otimes_{j=1}^m x_j)^\dagger \Pi_i (\otimes_{j=1}^m x_j)$ for $i=1,\dots,n$ form a partition of unity, i.e.:
$$
\sum_{i=1}^n (\otimes_{j=1}^m x_j)^\dagger \Pi_i (\otimes_{j=1}^m x_j)=  (\otimes_{j=1}^m x_j)^\dagger I (\otimes_{j=1}^m x_j)=1,
$$
we have that
$$
Tr(\Pi_i \rho )\in[0,1] \text{ and } \sum_{i=1}^n Tr(\Pi_i \rho )=1.
$$
For this reason,  $Tr(\Pi_i \rho )$ is usually interpreted as a probability. But \emph{the projectors $\Pi_i$'s are not indicator functions}, whence, strictly speaking, the traditional interpretation is incorrect. This can be seen clearly in the special case where postulates~\ref{eq:taut} and~\ref{eq:btaut} coincide, as in the case of a single particle, that is, where the theory can be given a classical probabilistic interpretation, see Section \ref{sec:momentmat}. In such a case, the corresponding $\rho$ is just a (\emph{truncated}) \emph{moment matrix}, i.e., one for which there is
at least one probability  such that $E[xx^\dagger]=\rho$. In summary, our standpoint here is that $Tr(\Pi_i \rho )$ should rather be interpreted as the expectation of the $m$-quadratic form $x^\dagger \Pi_i x$. This makes quite a difference with the traditional interpretation since in our case there can be (and usually there will be) more than one charge compatible with such an expectation, as we will point out more precisely later on.

 \bluenew{Quantum measurements are discrete: when we perform a measurement, we observe a detection (along one of the directions $ \Pi_i$).
This phenomenon of \emph{quantisation} is one of the major differences between quantum and classical physics.
We took it into account in the choice of the framework,  the possibility space being (only) the ``directions''  of the particle's spins
 and the measurement apparatus sensing only certain fixed ``directions''   ($x^\dagger \Pi_i x=x^\dagger v_iv_i^{\dagger} x$ is a function of two ``directions'' $x$ and $v_i$).  
Despite its centrality, we want however to point out that quantisation is \emph{not} the source of Bell-like inequalities and entanglement.  As said before, this is because ``quantum weirdness'' is intrinsic to any theory of computational rationality, and  is thence not confined to QT only.}


\begin{remark}
 \bluenew{It is often claimed that QT  includes classical probability theory (CPT) as a special case, or better 
  that QT includes \emph{discrete} CPT.
 In the present framework, this is due to properties of quadratic forms. Indeed $x^\dagger \Pi_1 x,\dots,x^\dagger \Pi_n x$
 form a partition of unity, and therefore $E[x^\dagger \Pi_i x]=Tr(\Pi_i\rho)=p_i$, whenever 
 $\rho=E[xx^\dagger]=\sum_{i=1}^{n}p_i \Pi_i$ (with $p_i\geq0$ and $\sum_{i=1}^{n} p_i=1$).
\\
On the contrary, as the possibility space $\Omega$ is infinite (e.g., the ``directions''  of the particle's spins), in this paper when we speak about CPT (and compare it with QT), we mean \emph{continuous}  classical probability theory (in the complex numbers).
Hence again, since both~\ref{eq:b1},\ref{eq:b2} and~\ref{eq:NE},\ref{eq:sl} are the same logical postulates parametrised by the appropriate meaning of ``being negative/non-negative'', the only axiom truly separating (continuous) classical probability theory from the quantum one is~\ref{eq:btaut} (with the specific form of~\eqref{eq:b0}), thus implementing the requirement of computational efficiency.\\
In other words,  we claim that QT is ``easier'' than CPT because, as the possibility space $\Omega$ is infinite, inference in \underline{continuous} CPT is NP-hard.
In QT, we realize that, when we try to address the question whether or not an experimentally generated  state  is  entangled
(many quantum  technologies  rely  on  the  presence  of entanglement). We will discuss in  Section \ref{sec:entang} that  determining entanglement of a general state
is equivalent to prove the nonnegativity of a polynomial that, as we  discussed in Proposition \ref{prop:gurvitis}, is NP-hard.  In fact, we can reformulate the Entanglement Witness Theorem
as the clash between the  classical notion of coherence and P-coherence, see Theorem \ref{th:ewt}.
}
 \end{remark}

\rednew{By providing a connection between the present work and \cite{benavoli2016quantum}, in Section \ref{subsec:otherax}  we  discuss how to get  the other postulates and rules of QT:  L\"{u}ders rule (measurement updating) and Schr\"odinger rule (time evolution).}

\subsection{Truncated moment matrices vs.\ density matrices}
\label{sec:momentmat}
In a single particle system of dimension $n$, $\rho=L(x x^{\dagger})$.
 In such case, $ \rho$ can be interpreted as a truncated moment matrix, i.e., there exists a probability distribution on the complex vector variable $x\in\Omega$ such that
 \begin{equation}
  \label{eq:ccc}
   \rho=\int_{x\in \Omega} xx^{\dagger} d\mu(x).
 \end{equation}
In fact, consider the eigenvalue-eigenvector decomposition of the density matrix:
$$
\rho=\sum\limits_{i=1}^n \lambda_i v_iv_i^{\dagger},
$$
with $\lambda_i\geq0$ and $v_i \in \complex^{n}$  being orthonormal.
We can define the probability distribution
$$
\mu(x)= \sum\limits_{i=1}^n \lambda_i \delta_{v_i}(x),
$$
where $\delta_{v_i}$ is an atomic charge (Dirac's delta) on $v_i$.
Then it is immediate to verify that 
$$
\int_{x\in \Omega} x x^{\dagger}d\mu(x)=\sum\limits_{i=1}^n \lambda_i v_iv_i^{\dagger}=\rho.
$$
Note also that,  a truncated moment matrix does not uniquely define a probability distribution, i.e., for a given $\rho$ there may exist two probability distributions $\mu_1(x)\neq\mu_2(x)$ such that
$$
 \rho=\int_{x\in \Omega} xx^{\dagger} d\mu_1(x)=\int_{x\in \Omega} xx^{\dagger} d\mu_2(x).
$$
This means that, if we interpret  $\rho$ as a truncated moment matrix  \rednew{and thus defining via \eqref{eq:ccc} a closed convex set of probabilities (more precisely charges)}, QT is a theory of imprecise probability \cite{walley1991}. We will discuss more on this topic in Appendix \ref{sec:hidden1d}.
In fact, Karr \cite{karr1983extreme}
has  proved that the set of probabilities, which are feasible for the truncated moment constraint, e.g., $\rho=L(x x^{\dagger})$, is convex and
compact with respect to the weak$^*$-topology. Moreover, the extreme points of this set are probabilities that have at finite number of distinct points of support (e.g., they are finite mixtures of Dirac's deltas).
A similar characterisation for  POVM measurements is discussed in the QT context in \cite{chiribella2007continuous}.

However, we have clarified that for $m>1$ particles, $\rho$ can be interpreted as a truncated moment matrix only when $\bdomain=\{(\otimes_{j=1}^m x_j)^{\dagger} G (\otimes_{j=1}^m x_j): Tr(G\rho)\geq0\}$, the P-coherent
set of desirable gambles associated to $\rho$, does not satisfy the first condition of Theorem \ref{th:fundamental}. More discussions on this point are presented in the next sections. 

%


\section{Entanglement}\label{sec:entang}
Entanglement is usually presented as a  characteristic of QT. In this section we are going to show that it is actually an immediate consequence of computational tractability, meaning that entanglement phenomena are not confined to QT but can be observed in other contexts too. An example of a non-QT entanglement is provided  in Section~\ref{sec:ent_not_only}.

To illustrate the emergence of entanglement from P-coherence, we verify  that  the set of desirable gambles whose dual is an entangled density matrix $\rho_{e}$ includes a negative gamble that is not in $\bnegative$, and thus, although
being logically coherent, it cannot be given a classical probabilistic interpretation.

In what follows we focus only on bipartite systems $\pspace_A \times \pspace_B$, with $n=m=2$. The results are nevertheless general.

Let $(x,y) \in \pspace_A \times \pspace_B$, where
$x=[x_1,x_2]^T$ and $y=[y_1,y_2]^T$. 
We aim at showing that there exists a gamble $h(x,y)=(x \otimes y)^{\dagger} H (x \otimes y) $ satisfying:
\begin{equation}
\label{eq:condepr0}
\begin{aligned}
 Tr(H \rho_{e})&\geq0 \text{ and }\\
h(x,y)=(x \otimes y)^{\dagger} H (x \otimes y) &< 0 \text{ for all } (x,y)\in \pspace_A\times \pspace_B.\\
\end{aligned}
\end{equation}
The first inequality says that $h$ is desirable in $\btheory$. That is, $h$ is a gamble desirable to Alice whose beliefs are represented by $\rho_{e}$. The second inequality says that $h$ is negative and, therefore, leads to a sure loss in $\theory$.
By~\ref{eq:btaut}--\ref{eq:b2}, the inequalities in~\eqref{eq:condepr0} imply that $H$ must be an indefinite Hermitian matrix.

 Assume that $n=m=2$ and consider the entangled density matrix: 
$$
\rho_{e}=\frac{1}{2}\begin{bmatrix}
      1 & 0 & 0 &1\\
      0 & 0 & 0 &0\\
      0 & 0 & 0 &0\\
      1 & 0 & 0 &1\\
     \end{bmatrix},
$$
and the Hermitian matrix:
$$
H=\left[\begin{matrix}0.0 & 0.0 & 0.0 & 1.0\\0.0 & -2.0 & 1.0 & 0.0\\0.0 & 1.0 & -2.0 & 0.0\\1.0 & 0.0 & 0.0 & 0.0\end{matrix}\right].
$$
This matrix is indefinite (its eigenvalues are $\{1, -1, -1, -3\}$) and is such that $Tr(H\rho_{e})=1$.
Since $Tr(H\rho_{e})\geq0$, the gamble
\begin{equation}
\label{eq:sosqt}
\begin{aligned}
(x \otimes y)^{\dagger} H (x \otimes y)=&
- 2 x_{1} x_1^{\dagger} y_{2} y_2^{\dagger} +  x_{1} x_2^{\dagger} y_{1} y_2^{\dagger} +  x_{1} x_2^{\dagger} y_1^{\dagger} y_{2} +  x_1^{\dagger} x_{2} y_{1} y_2^{\dagger} +  x_1^{\dagger} x_{2} y_1^{\dagger} y_{2} - 2 x_{2} x_2^{\dagger} y_{1} y_1^{\dagger},
 \end{aligned}
\end{equation}
is desirable for Alice in $\btheory$. 

Let  $x_i=x_{ia}+\iota x_{ib}$ and $y_i=y_{ia}+\iota y_{ib}$ with $x_{ia},x_{ib},y_{ia},y_{ib}\in \reals$, for $i=1,2$, denote the real and imaginary
components of $x,y$. Then 
\begin{equation}
\label{eq:realim}
\begin{aligned}
(x \otimes y)^{\dagger} H (x \otimes y)=&- 2 x_{1a}^{2} y_{2a}^{2} - 2 x_{1a}^{2} y_{2b}^{2} + 4 x_{1a} x_{2a} y_{1a} y_{2a} + 4 x_{1a} x_{2a} y_{1b} y_{2b} \\
&- 2 x_{1b}^{2} y_{2a}^{2} - 2 x_{1b}^{2} y_{2b}^{2} + 4 x_{1b} x_{2b} y_{1a} y_{2a} + 4 x_{1b} x_{2b} y_{1b} y_{2b}\\
&- 2 x_{2a}^{2} y_{1a}^{2} - 2 x_{2a}^{2} y_{1b}^{2} - 2 x_{2b}^{2} y_{1a}^{2} - 2 x_{2b}^{2} y_{1b}^{2}\\
=&-(\sqrt{2}x_{1a}y_{2a}-\sqrt{2}x_{2a}y_{1a})^2-(\sqrt{2}x_{1a}y_{2b}-\sqrt{2}x_{2a}y_{1b})^2\\
&-(\sqrt{2}x_{1b}y_{2b}-\sqrt{2}x_{2b}y_{1b})^2
-(\sqrt{2}x_{2b}y_{1a}-\sqrt{2}x_{2a}y_{1b})^2<0.
 \end{aligned}
\end{equation}

This is the essence of the quantum puzzle:  $\bdomain$ is P-coherent but (Theorem~\ref{th:fundamental}) there is no $\mathcal{P}$ associated to it and therefore, from the point of view of a classical probabilistic interpretation, it is not coherent (in any classical description of the composite quantum system, the variables $x,y$ appear to be entangled in a way unusual for classical subsystems). 

As previously mentioned, there are two possible ways out from this impasse:
  to claim the existence of either non-classical evaluation functionals or of negative probabilities. Let us examine them in turn.

\begin{description}
\item[(1) Existence of non-classical evaluation functionals:]
From an informal betting perspective, the effect  of a quantum experiment on $h(x,y)$ is to evaluate this polynomial to return the payoff for Alice. By Theorem~\ref{th:fundamental}, there is no compatible classical evaluation functional, and thus in particular no 
value of the variables $x,y\in \pspace_A \times  \pspace_B$, such that 
$ h(x,y)=1$. Hence, if we adopt this point of view, we have to find another, non-classical, explanation for $ h(x,y)=1$. The following evaluation functional, denoted as $ev(\cdot)$, may do the job:
$$
\text{ev}\hspace{-1.1mm}\left(\begin{bmatrix}
x_1y_1\\
x_2y_1\\
x_1y_2\\
x_2y_2\\
\end{bmatrix}\right)=\begin{bmatrix}
\tfrac{\sqrt{2}}{2}\\
0\\
0\\
\tfrac{\sqrt{2}}{2}\\
\end{bmatrix},~\text{which implies}~ \text{ev}\hspace{-1mm}\left((x \otimes y)^{\dagger} H (x \otimes y)\right)=1.
$$

Note that, $x_1y_1=\tfrac{\sqrt{2}}{2}$ and $x_2y_1=0$ together imply that
$x_2=0$, which contradicts $x_2y_2=\tfrac{\sqrt{2}}{2}$. Similarly,
 $x_2y_2=\tfrac{\sqrt{2}}{2}$ and $x_1y_2=0$ together imply that
$x_1=0$, which contradicts $x_1y_1=\tfrac{\sqrt{2}}{2}$. Hence, as expected, the above evaluation functional is non-classical. It amounts to  assigning a value to the products $x_iy_j$ but not to the single components of $x$ and $y$ separately. Quoting \cite[Supplement~3.4]{holevo2011probabilistic}, ``entangled states are holistic entities in which the single components only exist virtually''. 



\item[(2) Existence of negative probabilities:] Negative probabilities  are not an intrinsic characteristic of QT. They
appear whenever one attempts to explain QT ``classically'' by looking at the space of charges on $\pspace$.
To see this, consider $\rho_e$, and assume that, based on \eqref{eq:momepr}, one calculates: 
  \begin{equation}
  \label{eq:momepr}
\int  \begin{bmatrix}x_{1} x_1^{\dagger} y_{1} y_1^{\dagger} & x_1^{\dagger} x_{2} y_{1} y_1^{\dagger} & x_{1} x_1^{\dagger} y_1^{\dagger} y_{2} & x_1^{\dagger} x_{2} y_1^{\dagger} y_{2}\\x_{1} x_2^{\dagger} y_{1} y_1^{\dagger} & x_{2} x_2^{\dagger} y_{1} y_1^{\dagger} & x_{1} x_2^{\dagger} y_1^{\dagger} y_{2} & x_{2} x_2^{\dagger} y_1^{\dagger} y_{2}\\x_{1} x_1^{\dagger} y_{1} y_2^{\dagger} & x_1^{\dagger} x_{2} y_{1} y_2^{\dagger} & x_{1} x_1^{\dagger} y_{2} y_2^{\dagger} & x_1^{\dagger} x_{2} y_{2} y_2^{\dagger}\\x_{1} x_2^{\dagger} y_{1} y_2^{\dagger} & x_{2} x_2^{\dagger} y_{1} y_2^{\dagger} & x_{1} x_2^{\dagger} y_{2} y_2^{\dagger} & x_{2} x_2^{\dagger} y_{2} y_2^{\dagger}\end{bmatrix} d \mu(x,y)=\frac{1}{2}\begin{bmatrix}
      1 & 0 & 0 &1\\
      0 & 0 & 0 &0\\
      0 & 0 & 0 &0\\
      1 & 0 & 0 &1\\
     \end{bmatrix}.
\end{equation}
Because of Theorem~\ref{th:fundamental},  there is no probability charge $\mu$ satisfying these moment constraints, the only compatible being signed ones.
Box~1 reports the nine components and corresponding weights of one of them:
\begin{equation}
\label{eq:momsigned}
 \mu(x,y)=\sum\limits_{i=1}^{9}w_i\delta_{\{(x^{(i)},y^{(i)})\}}(x,y) ~~\text{ with }~~ (w_i,x^{(i)},y^{(i)}) ~~\text{ as in Box~1. } 
\end{equation}
Note that some of the weights are negative but $\sum_{i=1}^{9}w_i=1$, meaning that we have an affine combination of atomic charges (Dirac's deltas).

{

\begin{framed}

\vspace{-0.5ex}

\noindent {{\bf Box 1: charge compatible with \eqref{eq:momepr}}}

\vspace{-0.9ex}

{\footnotesize

\noindent 
\begin{center}
 \tiny{
\begin{tabular}{|c|c|c|c|c|c|}
\multicolumn{1}{c}{} & \multicolumn{1}{c}{1} & \multicolumn{1}{c}{2} & \multicolumn{1}{c}{3} & \multicolumn{1}{c}{4} & \multicolumn{1}{c}{5}\\
\hline
 \multirow{ 2}{*}{x} & -0.0963 - 0.6352$\iota$ & 0.251 - 0.9665$\iota$ & 0.7884 + 0.2274$\iota$ & 0.5702 - 0.4006$\iota$ & 0.3452 - 0.4539$\iota$\\
 & -0.0065 - 0.7663$\iota$ & 0.0387 + 0.0381$\iota$ & -0.1263 +  0.5574$\iota$ & 0.0027 + 0.7172$\iota$ & 0.4872 + 0.6613$\iota$\\\hline
 \multirow{ 2}{*}{y} &  -0.3727 - 0.3899$\iota$ & 0.6359 - 0.5716$\iota$ & 0.1553 - 0.4591$\iota$ & -0.3515 + 0.2848$\iota$ & 0.2129 - 0.2004$\iota$\\
 & -0.4385 - 0.7189$\iota$ & 0.3725 + 0.3608$\iota$ & 0.4039 +  0.7759$\iota$ & 0.5911 - 0.6678$\iota$ & 0.2032 - 0.9345$\iota$\\ \hline
 w &     0.4805&  0.7459 & -0.892 &  0.7421 &  0.4724 \\ \hline  
 \end{tabular}\\
 \vspace{0.5cm}
  \begin{tabular}{|c|c|c|c|c|}
\multicolumn{1}{c}{} & \multicolumn{1}{c}{6} & \multicolumn{1}{c}{7} & \multicolumn{1}{c}{8} & \multicolumn{1}{c}{9} \\
  \hline
  \multirow{ 2}{*}{x} & 0.818 + 0.2654$\iota$ & -0.0541 - 0.8574$\iota$ & -0.3179 - 0.1021$\iota$ & -0.1255 - 0.3078$\iota$  \\
 & -0.486 + 0.1556$\iota$ & 0.4995 + 0.1112$\iota$ & 0.5198 + 0.7864$\iota$ & 0.2943 - 0.8961$\iota$  \\
 \hline
 \multirow{ 2}{*}{y} &  0.446 + 0.6996$\iota$ & -0.1628 + 0.561$\iota$ & 0.6285 - 0.4852$\iota$ & 0.0933 - 0.4588$\iota$  \\
 & -0.5474 - 0.1096$\iota$ & -0.8105 + 0.0419$\iota$ & -0.0035 - 0.6079$\iota$ & 0.8455 + 0.2568$\iota$  \\
 \hline
 w &      0.3297 &  -0.7999 &        -0.2544 &  0.1755\\
 \hline
\end{tabular}}\end{center}
The table reports the components of a  charge $\sum_{i=1}^{9}w_i\delta_{\{(x^{(i)},y^{(i)})\}}(x,y)$ that satisfies \eqref{eq:momepr}. The $i$-th column
of the row denoted as $x$ (resp. $y$) corresponds to the element $x^{(i)}$ (resp. $y^{(i)}$). The $i$-th column of the vector
$w$ corresponds to $w_i$. 
Consider for instance the first monomial $x_{1} x_1^{\dagger} y_{1} y_1^{\dagger}$ in \eqref{eq:momepr}, its expectation w.r.t.\ the above charge is
{\scriptsize
$$
\begin{aligned}
&\int x_{1} x_1^{\dagger} y_{1} y_1^{\dagger} \left(\sum_{i=1}^{9}w_i\delta_{\{(x^{(i)},y^{(i)})\}}(x,y)\right) dxdy=\sum_{i=1}^{9}w_i x^{(i)}_{1} {x^{(i)}_1}^{\dagger} y^{(i)}_{1} {y^{(i)}_1}^{\dagger}\\
&= 0.4805 (-0.0963 - 0.6352\iota)(-0.0963 + 0.6352\iota)(-0.3727 - 0.3899\iota)(-0.3727 + 0.3899\iota)\\
&+ 0.7459 (0.251 - 0.9665\iota)(0.251 + 0.9665\iota)(-0.1628 + 0.561\iota)(-0.1628 - 0.561\iota)\\
&+\dots\\
&+ 0.1755(-0.1255 - 0.3078\iota)(-0.1255 + 0.3078\iota)(0.0933 - 0.4588\iota)(0.0933 + 0.4588\iota)=\frac{1}{2}.
\end{aligned}
$$}
 }

\vspace{0ex}

\end{framed}
}

The  charge described in Box~1  is one among the many that satisfy~\eqref{eq:momepr} and has been derived numerically.
Explicit procedure for constructing such negative-probability representations have been developed in \citep{schack2000explicit,sperling2009necessary,
gerke2016multipartite,gerke2018numerical}.
%
%
\end{description}

Again, we want to stress that the two above paradoxical interpretations are a consequence of Theorem~\ref{th:fundamental}, and therefore can emerge when considering any instance of a theory of P-coherence in which
the hypotheses of  this result hold.

\subsection{Local realism}\label{sec:local}
The issue of \emph{local realism} in QT arises when one performs measurements on a pair of separated but entangled particles.
This again shows the impossibility of a peaceful agreement between the internal logical consistency of a P-coherent theory and the attempt to provide an external coherent (classical) interpretation.
Let us discuss it from the latter perspective. Firstly, notice that, since $(x_{1} x_1^{\dagger}+x_{2} x_2^{\dagger})=1$, the linear operator
\begin{align}
\label{eq:matL}
L\left(\left[\begin{matrix}x_{1} x_1^{\dagger} y_{1} y_1^{\dagger} & x_1^{\dagger} x_{2} y_{1} y_1^{\dagger} & x_{1} x_1^{\dagger} y_1^{\dagger} y_{2} & x_1^{\dagger} x_{2} y_1^{\dagger} y_{2}\\x_{1} x_2^{\dagger} y_{1} y_1^{\dagger} & x_{2} x_2^{\dagger} y_{1} y_1^{\dagger} & x_{1} x_2^{\dagger} y_1^{\dagger} y_{2} & x_{2} x_2^{\dagger} y_1^{\dagger} y_{2}\\x_{1} x_1^{\dagger} y_{1} y_2^{\dagger} & x_1^{\dagger} x_{2} y_{1} y_2^{\dagger} & x_{1} x_1^{\dagger} y_{2} y_2^{\dagger} & x_1^{\dagger} x_{2} y_{2} y_2^{\dagger}\\x_{1} x_2^{\dagger} y_{1} y_2^{\dagger} & x_{2} x_2^{\dagger} y_{1} y_2^{\dagger} & x_{1} x_2^{\dagger} y_{2} y_2^{\dagger} & x_{2} x_2^{\dagger} y_{2} y_2^{\dagger}\end{matrix}\right]\right)
\end{align}
satisfies the properties:
\begin{align}
\nonumber
L(x_{1} x_1^{\dagger} y_{1} y_1^{\dagger})+L(x_{2} x_2^{\dagger} y_{1} y_1^{\dagger})&=L((x_{1} x_1^{\dagger}+x_{2} x_2^{\dagger})y_{1} y_1^{\dagger})=L(y_{1} y_1^{\dagger}),\\
\nonumber
L(x_{1} x_1^{\dagger} y_{2} y_2^{\dagger})+L(x_{2} x_2^{\dagger} y_{2} y_2^{\dagger})&=L((x_{1} x_1^{\dagger}+x_{2} x_2^{\dagger})y_{2} y_2^{\dagger})=L(y_{2} y_2^{\dagger}),\\
\label{eq:PTrace}
L(x_{1} x_1^{\dagger} y_{1} y_2^{\dagger})+L(x_{2} x_2^{\dagger} y_{1} y_2^{\dagger})&=L((x_{1} x_1^{\dagger}+x_{2} x_2^{\dagger})y_{1} y_2^{\dagger})=L(y_{1} y_2^{\dagger}).
\end{align}
Hence, by summing up some of the components of the matrix \eqref{eq:matL}, we can recover  the marginal linear operator
$$
M_y=L\left(\left[\begin{matrix}y_{1} y_1^{\dagger} & y_{1} y_2^{\dagger} \\
 y_{2} y_1^{\dagger} & y_{2} y_2^{\dagger}\end{matrix}\right]\right)=\left[\begin{matrix}\frac{1}{2} & 0\\ 
0 & \frac{1}{2}\end{matrix}\right],
$$
where the last equality holds when $L((x\otimes y)(x\otimes y)^{\dagger})=\rho_e$, i.e., $M_y=\frac{1}{2}I_2$ is the reduced density matrix of $\rho_e$ on system $B$. The operation we have just described, when applied to a density matrix, is known in QT as \emph{partial trace}. Given the interpretation of $\rho$ as a  dual operator, the 
operation of partial trace simply  follows by Equation \eqref{eq:PTrace}.

Similarly,  we can obtain  
$$
M_x=L\left(\left[\begin{matrix}x_{1} x_1^{\dagger} & x_{1} x_2^{\dagger} \\
 x_{2} x_1^{\dagger} & x_{2} x_2^{\dagger}\end{matrix}\right]\right)=\left[\begin{matrix}\frac{1}{2} & 0\\ 
0 & \frac{1}{2}\end{matrix}\right].
$$
Matrix $M_x$ (analogously to $M_y$) is compatible with probability: there are marginal probabilities whose $M_x$ is the moment matrix, an example being
$$
\frac{1}{2}\delta_{\left\{\begin{bmatrix}
            1\\
            0
           \end{bmatrix}\right\}}(x)+\frac{1}{2}\delta_{\left\{\begin{bmatrix}
            0\\
            1
           \end{bmatrix}\right\}}(x).
$$ 
In other words, we are brought to believe that \emph{marginally} the physical properties $x,y$ of the two particles
have a meaning, i.e., they can be explained through probabilistic mixtures of classical evaluation functionals. We can now ask Nature, by means of a real experiment, to decide between our common-sense notions of how the world works, and Alice's one.
Experimental verification of this phenomenon can be obtained by a CHSH-like experiment, which aims at experimentally reproducing
a situation where~\eqref{eq:condepr0} holds, as explained in Box~2. In this interpretation, the CHSH experiment is an \textit{entanglement witness}, we 
discuss the connection between \eqref{eq:condepr0} and the entanglement witness theorem  in Section~\ref{sec:witness}.

{

\begin{framed}

\vspace{-0.5ex}

\noindent {{\bf Box 2: CHSH experiment}}

\vspace{-0.9ex}

{\footnotesize

\noindent 
\begin{center}
\includegraphics[width=12cm]{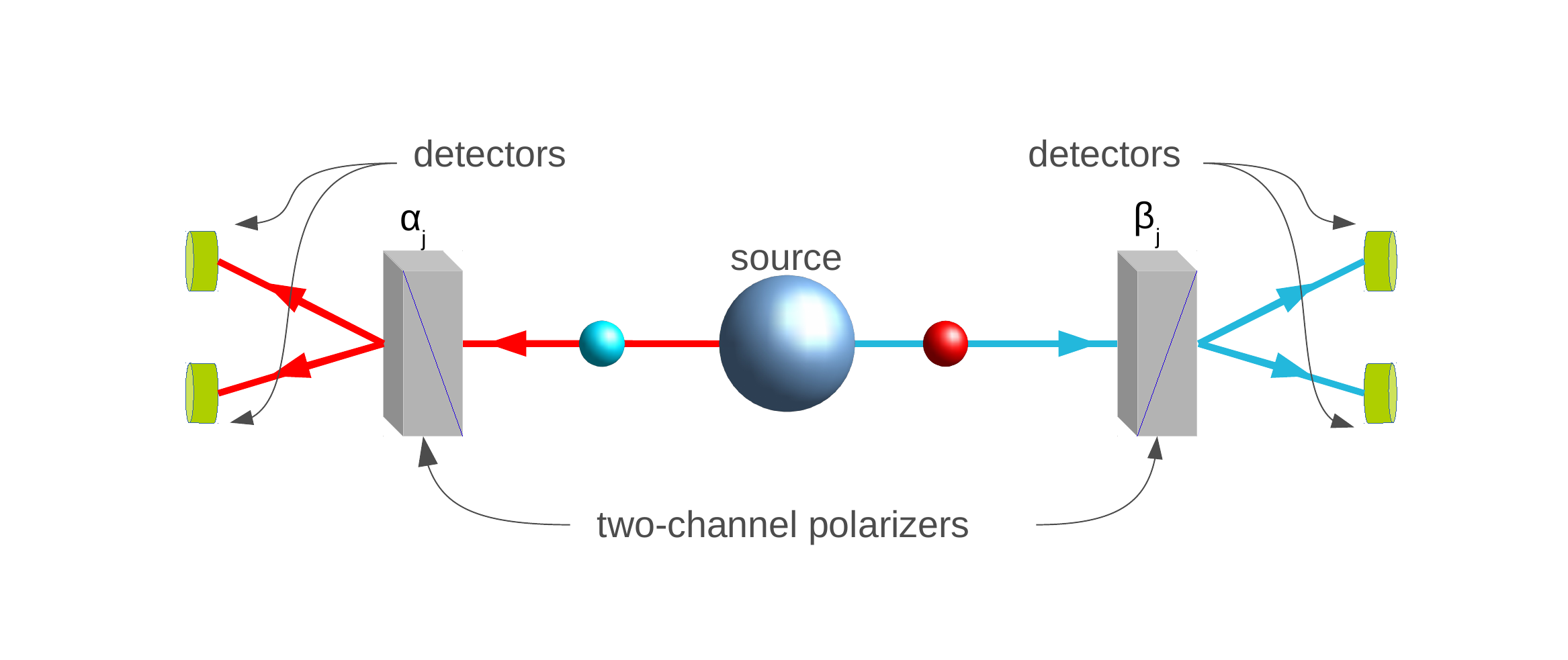}
\end{center}
The source produces pairs of entangled photons, sent in opposite directions. 
Each photon encounters a two-channel polariser whose orientations  can be set by the experimenter. 
Emerging signals from each channel are captured by detectors. 
Four possible orientations $\alpha_i,\beta_j$ for $i,j=1,2$  of the  polarisers are tested.  Consider the Hermitian matrices
$G_{\alpha_i}=\sin(\alpha_i)\sigma_{x}+\cos(\alpha_i)\sigma_{z}$, $G_{\beta_j}=\sin(\beta_j)\sigma_{x}+\cos(\beta_j)\sigma_{z}$,
where $\sigma_{x},\sigma_{z}$ are the 2D Pauli's matrices, and define the gamble 
$$
(x^\dagger G_{\alpha_i} x)(y^\dagger G_{\beta_j} y )=(x \otimes y)^\dagger G_{\alpha_i\beta_j} (x \otimes y)
$$
on the result of experiment with $G_{\alpha_i\beta_j}=G_{\alpha_i} \otimes G_{\beta_j}$. Consider then the sum gamble
$$
h(x,y)=(x \otimes y)^\dagger (G_{\alpha_1\beta_2}-G_{\alpha_1\beta_2}+G_{\alpha_2\beta_1}+G_{\alpha_2\beta_2})(x \otimes y)
$$
and observe that
$$
\begin{aligned}
h(x,y)&=(x \otimes y)^\dagger (G_{\alpha_1}\otimes (G_{\beta_2}-G_{\beta_2}))(x \otimes y) +(x \otimes y)^\dagger (G_{\alpha_2}\otimes (G_{\beta_1}+G_{\beta_2}))(x \otimes y)\\
&=(x^\dagger G_{\alpha_2} x)(y^\dagger (G_{\beta_1}+G_{\beta_2}) y)\\
&\leq y^\dagger (G_{\beta_1}+G_{\beta_2}) y\\
&\leq 2,\\
\end{aligned}
$$
this is the CHSH inequality.  For a small $\epsilon>0$, we have that 
$$
h(x,y)-2-\epsilon=(x \otimes y)^\dagger (-(2+\epsilon) I_4+G_{\alpha_1\beta_2}-G_{\alpha_1\beta_2}+G_{\alpha_2\beta_1}+G_{\alpha_2\beta_2})(x \otimes y)<0
$$ 
\textbf{but} 
$$
~~~~~~~~~~~~~~Tr((-(2+\epsilon)I_4+G_{\alpha_1\beta_2}-G_{\alpha_1\beta_2}+G_{\alpha_2\beta_1}+G_{\alpha_2\beta_2}) \rho_e)=-2+\epsilon+2\sqrt{2}\geq0
$$
for $\alpha_1=\pi/2,\alpha_2=0,\beta_1=\pi/4,\beta_2=-\pi/4$. We are again in a situation like \eqref{eq:condepr0}.
The experiment in the figure certifies the entanglement by measuring the QT expectation of the four components of $h(x,y)$.
 }

\vspace{0ex}

\end{framed}
}


The situation we have just described is the playground of Bell's theorem, stating the impossibility of Einstein's postulate of local realism.

The argument goes as follows. 
If we assume that the physical properties $x,y$ of the two particles (the polarisations of the photons) have definite values
$\tilde{x},\tilde{y}$ that exist independently of observation (\emph{realism}), then the measurement on the first qubit
must influence the result of the measurement on the second qubit. 
 Vice versa if we assume \emph{locality}, then $x,y$ cannot exist independently of the observations.
To sum up, a local hidden variable theory that is compatible with QT results cannot exist \citep{bell1964einstein}.

But is there really anything contradictory here? The message we want to convey is that this is not the case.
Indeed, since Theorem~\ref{th:fundamental} applies, there is no probability compatible with the moment matrix $\rho_e$.
Ergo, although they may seem to be compatible
with probabilities, the marginal  matrices $M_x,M_y$ are not  moment matrices of any probability. 

The conceptual mistake in the situation we are considering is to forget that $M_x,M_y$ come from $\rho_e$. A joint linear operator uniquely defines its marginals but not the other way round. 
There are \emph{infinitely many} joint probability charges whose $M_x,M_y$  are the marginals, e.g.,
\[
d\mu(x,y)=\left(\frac{1}{2}\delta_{\left\{\begin{bmatrix}
            1\\
            0
           \end{bmatrix}\right\}}(x)+\frac{1}{2}\delta_{\left\{\begin{bmatrix}
            0\\
            1
           \end{bmatrix}\right\}}(x)\right)\left(\frac{1}{2}\delta_{\left\{\begin{bmatrix}
            1\\
            0
           \end{bmatrix}\right\}}(y)+\frac{1}{2}\delta_{\left\{\begin{bmatrix}
            0\\
            1
           \end{bmatrix}\right\}}(y)\right),
\]
but none of them satisfy  Equation~\eqref{eq:momepr}. In fact, such an intrinsic non-uniqueness of the compatible joints is another amazing characteristic of QT: from this perspective, it 
 is not only a  theory of probability, but a theory  of \emph{imprecise} probability, see Section~\ref{sec:momentmat}. 
 
Instead, the reader can verify that the  charge in Equation~\eqref{eq:momsigned} satisfies
both Equation~\eqref{eq:momepr} and:

\[
\displaystyle{
\int \begin{bmatrix}x_{1} x_1^{\dagger} & x_{1} x_2^{\dagger} \\
 x_{2} x_1^{\dagger} & x_{2} x_2^{\dagger}\end{bmatrix} d\mu(x,y)=\displaystyle{\int \begin{bmatrix}y_{1} y_1^{\dagger} & y_{1} y_2^{\dagger} \\
 y_{2} y_1^{\dagger} & y_{2} y_2^{\dagger}\end{bmatrix} d\mu(x,y)=\left[\begin{matrix}\frac{1}{2} & 0\\ 
0 & \frac{1}{2}\end{matrix}\right].
}
}
\]

The take-away message of this subsection is that we should only interpret $M_x,M_y$  as marginal operators  and keep in mind that QT is a logical theory of P-coherence.
We see paradoxes when we try to force a physical interpretation upon QT, whose nature is instead computational.
In other words, if we accept that computation is more primitive than our classical interpretation of physics, all paradoxes disappear.


\subsection{Entanglement witness theorem}
\label{sec:witness}

In the previous Subsections, we have seen that all paradoxes of QT emerge because of disagreement between its internal coherence and the attempt to force on it a classical coherent interpretation.

Do quantum and classical probability sometimes agree?
Yes they do, but when at play there are density matrices $\rho$ such that Equation \eqref{eq:condepr0} does not hold, and thus in particular for \emph{separable density matrices}.
We make this claim precise by providing a link between Equation \eqref{eq:condepr0} and the \emph{entanglement witness theorem}
\citep{horodecki1999reduction,horodecki2009quantum}.

We first report the definition of entanglement witness \citep[Sec.~6.3.1]{heinosaari2011mathematical}:
\begin{definition}[Entanglement witness]
\label{def:entangle}
 A Hermitian operator $W \in \He^{n_1 \times n_2} $ is an \emph{entanglement
witness} if and only if $W$ is not a positive operator but $(x_1 \otimes x_2)^{\dagger} W (x_1 \otimes x_2) \geq 0$ for all 
vectors $(x_1,x_2)\in \pspace_1\times \pspace_2$.\footnote{In  \citep[Sec.~6.3.1]{heinosaari2011mathematical}, the last part of this definition says
``for all factorized vectors $x_1 \otimes x_2$''. This is equivalent to considering the pair $(x_1,x_2)$.} 
\end{definition}

The next well-known result (see, e.g.,  \citep[Theorem~6.39, Corollary~6.40]{heinosaari2011mathematical}) provides a characterisation of entanglement 
and separable states in terms of entanglement
witness.
\begin{proposition}
\label{prop:witn}
 A state $\rho_e$ is entangled if and only if there exists an entanglement
witness $W$ such that $Tr(\rho_e W ) < 0$. A state is separable if and only if $Tr(\rho_e W ) \geq 0$ for all entanglement
witnesses $W$.
\end{proposition}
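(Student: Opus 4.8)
The plan is to prove both biconditionals by separating the entangled state from the convex set of separable states with a hyperplane, and to read off the witness from that hyperplane. Throughout I work in the real vector space $\He^{n\times n}$ ($n=n_1 n_2$) equipped with the Hilbert--Schmidt inner product $\langle A,B\rangle=Tr(AB)$, and I write $\mathrm{Sep}$ for the convex hull of the pure product states $(x_1\otimes x_2)(x_1\otimes x_2)^\dagger$, i.e.\ the set of separable states. The first fact I would record is that any entanglement witness $W$ is non-negative against every separable state: on a pure product state one has $Tr\big(W(x_1\otimes x_2)(x_1\otimes x_2)^\dagger\big)=(x_1\otimes x_2)^\dagger W(x_1\otimes x_2)\geq0$ by Definition~\ref{def:entangle}, and this extends to all of $\mathrm{Sep}$ by linearity of $Tr$ and convexity. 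This single observation already gives the two ``easy'' directions: if some witness $W$ satisfies $Tr(\rho_e W)<0$ then $\rho_e\notin\mathrm{Sep}$, hence $\rho_e$ is entangled; contrapositively, a separable state satisfies $Tr(\rho_e W)\geq0$ for every witness.

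For the substantive direction I would assume $\rho_e$ is entangled, i.e.\ $\rho_e\notin\mathrm{Sep}$. Since $\mathrm{Sep}$ is convex and compact (the convex hull, in a finite-dimensional space, of the compact set of pure product states) and $\{\rho_e\}$ is compact and disjoint from it, the Hahn--Banach separation theorem supplies a Hermitian $W'$ and a real constant $c$ with
\[
Tr(\rho_e W')<c\leq Tr(\rho_{sep} W')\qquad\text{for all }\rho_{sep}\in\mathrm{Sep}.
\]
I would then normalise by setting $W\coloneqq W'-cI$. Because $Tr(\rho I)=1$ for every density matrix, this shift yields $Tr(\rho_e W)<0$ and $Tr(\rho_{sep}W)\geq0$ for all separable states; specialising the latter to pure product states gives $(x_1\otimes x_2)^\dagger W(x_1\otimes x_2)\geq0$, which is the positivity-on-products half of Definition~\ref{def:entangle}. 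Finally $W$ is not a positive operator, for if $W\geq0$ then $Tr(\rho_e W)\geq0$ (as $\rho_e\geq0$), contradicting $Tr(\rho_e W)<0$. Hence $W$ is a genuine entanglement witness with $Tr(\rho_e W)<0$, completing the first biconditional; the second is its logical complement, since a state is separable exactly when it is not entangled.

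The step I expect to be the crux is the separation itself---verifying that $\mathrm{Sep}$ is genuinely compact, so that a point strictly outside it can be strictly separated, and then engineering the separating functional into a \emph{witness} via the shift by $cI$; it is precisely the trace-one normalisation of density matrices that converts an arbitrary affine separating hyperplane into one tangent to the witness cone. It is worth noting that the paper's own machinery delivers this direction directly: condition~\eqref{eq:condepr0} exhibits, for an entangled $\rho_e$, a Hermitian $H$ with $(x\otimes y)^\dagger H(x\otimes y)<0$ everywhere and $Tr(H\rho_e)\geq0$, so $W\coloneqq -H$ is positive on all product vectors, indefinite, and satisfies $Tr(\rho_e W)\leq0$ (strictly when $Tr(H\rho_e)>0$). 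Thus the existence of an entanglement witness is exactly the translation, into the language of Hermitian operators, of the first item of Theorem~\ref{th:fundamental} applied to the P-coherent set of desirable gambles dual to $\rho_e$.
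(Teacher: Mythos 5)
Your proof is correct, but note that the paper itself offers no proof of Proposition~\ref{prop:witn} at all: it is imported as a known result, with a pointer to \citep[Theorem~6.39, Corollary~6.40]{heinosaari2011mathematical}. What you have written is essentially the standard argument from that literature, and it is sound in every step: $\mathrm{Sep}$ is compact because it is the convex hull (Carath\'eodory, in finite dimension) of the continuous image of the compact set $\overline{\complex}^{n_1}\times\overline{\complex}^{n_2}$; real-linear functionals on Hermitian matrices under the Hilbert--Schmidt pairing are exactly $A\mapsto Tr(AW')$ with $W'$ Hermitian, so Hahn--Banach separation applies; the shift $W=W'-cI$ exploits trace-one normalisation exactly as you say; and $W\not\geq 0$ follows since $Tr(\rho_e W)<0$ with $\rho_e\geq 0$ would otherwise be impossible ($Tr(AB)=Tr(A^{1/2}BA^{1/2})\geq0$ for PSD $A,B$). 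The one point to fix is your closing remark: within this paper the logical flow runs in the opposite direction --- condition~\eqref{eq:condepr0} is \emph{derived from} Proposition~\ref{prop:witn} (via the passage from \eqref{eq:witn} to \eqref{eq:witn1}, using closedness of the cone to absorb an $\epsilon I$), and is then combined with Theorem~\ref{th:fundamental} to obtain Theorem~\ref{th:ewt}. So invoking \eqref{eq:condepr0} as an alternative source for the hard direction of Proposition~\ref{prop:witn} would be circular if presented as part of the proof; as a heuristic cross-check it is fine, but it should be flagged as such (and, as you note yourself, it would anyway only deliver $Tr(\rho_e W)\leq 0$ rather than the strict inequality). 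Since the paper leaves this proposition unproved, your separation argument is a genuine addition rather than a variant of anything in the text.
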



Assume that  $W$ is an entanglement witness for the entangled density matrix  $\rho_e$ and consider $W'=-W$. By Definition \ref{def:entangle} and Proposition \ref{prop:witn}, it follows
that 
\begin{equation}
\label{eq:witn}
Tr(\rho_e W' ) > 0 \text{ and } (x_1 \otimes x_2)^{\dagger} W' (x_1 \otimes x_2) \leq 0. 
\end{equation}

The first inequality states that the  gamble $(x_1 \otimes x_2)^{\dagger} W' (x_1 \otimes x_2)$ is \emph{strictly} desirable for Alice (in theory $\btheory$) given her belief $\rho_e$.
Since the set of desirable gambles (\ref{eq:b1}) associated to $\rho_e$ is closed, there exists $\epsilon>0$ such that $W''=W'-\epsilon I$ is still desirable, i.e,
$Tr(\rho_e W'' )\geq0$ and 
$$
(x_1 \otimes x_2)^{\dagger} W'' (x_1 \otimes x_2) = (x_1 \otimes x_2)^{\dagger} W' (x_1 \otimes x_2) - \epsilon<0,
$$
where we have exploited that $(x_1 \otimes x_2)^{\dagger} \epsilon I (x_1 \otimes x_2)= \epsilon$.
Therefore, \eqref{eq:witn} is equivalent to
\begin{equation}
\label{eq:witn1}
Tr(\rho_e W'' ) \geq 0 \text{ and } (x_1 \otimes x_2)^{\dagger} W'' (x_1 \otimes x_2) <0,
\end{equation}
which is the same as \eqref{eq:condepr0}.

Hence, by Theorem~\ref{th:fundamental}, we can equivalently formulate the entanglement witness theorem as an arbitrage/Dutch book:
\begin{theorem}
\label{th:ewt}
 Let $\bdomain=\{g(x_1,\dots,x_m)=(\otimes_{j=1}^m x_j)^\dagger G (\otimes_{j=1}^m x_j)\mid Tr(G\tilde{\rho})\geq 0\}$ be the set of desirable gambles corresponding to some density matrix $\tilde{\rho}$. The following claims are equivalent:
 \begin{enumerate}
\item $\tilde{\rho}$ is entangled;
\item $\posi(\bdomain \cup \gambles^{\geq})$ is not coherent in $\theory$.
\end{enumerate}
\end{theorem}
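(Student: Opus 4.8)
The plan is to prove the two claims equivalent by routing both through Theorem~\ref{th:fundamental}, with the entanglement witness theorem (Proposition~\ref{prop:witn}) serving as the bridge to the notion of entanglement. First I would record the bookkeeping that makes Theorem~\ref{th:fundamental} applicable to the present $\bdomain$. Since $\tilde{\rho}\succeq0$ has unit trace, any gamble with $G\prec0$ satisfies $Tr(G\tilde{\rho})<0$, so $\bnegative\cap\bdomain=\emptyset$ and $\bdomain$ is P-coherent; moreover $\bnonnegative=\Sigma^{\geq}$ is closed and contains the positive constants, so all hypotheses of Theorem~\ref{th:fundamental} hold. Taking the generating assessments to be $\bdomain$ itself and using $\bnonnegative\subseteq\gambles^{\geq}$, we get $\posi(\bdomain\cup\gambles^{\geq})=\posi(\gambles^{\geq}\cup\bdomain)$, which is exactly the set whose incoherence is item~2 of Theorem~\ref{th:fundamental}. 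Hence claim~2 of the present statement is literally item~2 of Theorem~\ref{th:fundamental}, and by that theorem it is equivalent to item~1: that $\bdomain$ contains a negative gamble lying outside $\bnegative$. It therefore remains to show that this last condition is equivalent to $\tilde{\rho}$ being entangled.

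For the direction ``$\tilde{\rho}$ entangled $\Rightarrow$ item~1'' I would invoke Proposition~\ref{prop:witn} to obtain an entanglement witness $W$ with $Tr(\tilde{\rho}W)<0$, set $W'=-W$, and consider the gamble $h'(x,y)=(x\otimes y)^{\dagger}W'(x\otimes y)$. By Definition~\ref{def:entangle}, $h'\leq0$ everywhere while $Tr(\tilde{\rho}W')>0$, so $h'\in\bdomain$; this is exactly~\eqref{eq:witn}. Shifting by a small $\epsilon>0$, the operator $W''=W'-\epsilon I$ still satisfies $Tr(\tilde{\rho}W'')\geq0$, whereas $h''(x,y)=h'(x,y)-\epsilon\leq-\epsilon<0$ is strictly negative, giving a genuinely negative gamble as in~\eqref{eq:witn1}/\eqref{eq:condepr0}. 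Since $W$ is not positive it has a negative eigenvalue, so $W'=-W$ has a positive one, and for $\epsilon$ small so does $W''$; thus $W''$ is not negative definite and $h''\notin\bnegative=\text{Interior}(-\bnonnegative)$, which is item~1.

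For the converse ``item~1 $\Rightarrow$ $\tilde{\rho}$ entangled'', suppose $\bdomain$ contains a negative gamble $h=(x\otimes y)^{\dagger}H(x\otimes y)$ with $Tr(H\tilde{\rho})\geq0$, $h<0$ everywhere, and $H$ not negative definite. Put $W=-H$, so $(x\otimes y)^{\dagger}W(x\otimes y)=-h\geq0$ on all product vectors and $Tr(\tilde{\rho}W)\leq0$. Here I would use that the possibility space $\overline{\complex}^{2}\times\overline{\complex}^{2}$ is compact, so $\max h<0$ and the product form of $W$ is bounded below by some $\delta>0$. Then for $0<t<\delta$ the shifted operator $W_t=W-tI$ still has product form $\geq\delta-t>0$, while $Tr(\tilde{\rho}W_t)=Tr(\tilde{\rho}W)-t<0$. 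Since $\tilde{\rho}\succeq0$, any positive operator has non-negative trace against $\tilde{\rho}$; the strict inequality $Tr(\tilde{\rho}W_t)<0$ therefore forces $W_t$ to be non-positive, so $W_t$ is a bona fide entanglement witness with $Tr(\tilde{\rho}W_t)<0$, and Proposition~\ref{prop:witn} gives that $\tilde{\rho}$ is entangled.

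The main obstacle I anticipate is precisely this strictness upgrade in the converse: item~1 only delivers $Tr(\tilde{\rho}W)\leq0$ together with ``$H$ not negative definite'', which a priori permits the degenerate case where $W=-H$ is positive semidefinite but singular (hence not a witness) and where the trace vanishes. Compactness of the possibility space (supplying a uniform positive gap $\delta$ in the product form) combined with the observation that strict negativity of $Tr(\tilde{\rho}\,\cdot)$ is incompatible with positivity is exactly what lets the perturbation $W-tI$ simultaneously preserve product-positivity and produce a strictly negative, non-positive witness. The remaining steps are routine: they are the identifications already carried out between~\eqref{eq:witn},~\eqref{eq:witn1} and~\eqref{eq:condepr0}, and the direct appeal to the equivalence of items~1 and~2 in Theorem~\ref{th:fundamental}.
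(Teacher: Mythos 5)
Your proposal is correct and follows essentially the same route as the paper: both directions are bridged through Proposition~\ref{prop:witn} and Theorem~\ref{th:fundamental}, with the $\epsilon I$-shift converting the witness inequalities \eqref{eq:witn} into \eqref{eq:witn1}/\eqref{eq:condepr0}. The only difference is that you spell out the converse (item~1 $\Rightarrow$ entangled) explicitly, including the $W-tI$ perturbation that rules out the degenerate case of a singular PSD candidate witness with vanishing trace---a step the paper leaves implicit in its appeal to the biconditional of Proposition~\ref{prop:witn}.
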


This result provides another view of the entanglement witness theorem in light of P-coherence. In particular, it tells us that the existence of a witness satisfying Equation~\eqref{eq:witn} boils down to the disagreement between the classical probabilistic interpretation and the theory $\btheory$ on the rationality (coherence) of Alice, and therefore that whenever they agree on her rationality it means that $\rho_e$ is separable. 

This connection explains why the problem of characterising entanglement  is hard in QT: it amounts to proving the negativity of a function, which is NP-hard.

\section{Entangled states do not only exist in QT}\label{sec:ent_not_only}
In this Section we are going to present an example of entanglement in a P-coherence theory of probability that is different from QT.

Consider two  real variables $x=[x_1,x_2]^T$ with $x_i \in \reals$,  the possibility space $\Omega=\{x\in\reals^2\}$ and the vector space
of gambles
$$
\gambles_R=\{v^T(x)Gv(x): G \text{ is a symmetric real matrix}\},
$$
where $v(x)$ is the column vector of monomials:
$$
v(x)=[1,x_1,x_2,x_1^2,x_1 x_2,x_2^2, x_1^3,x_1^2x_2,x_1x_2^2,x_2^3]^T,
$$
whose dimension is $d=10$. $\gambles_R$ is therefore the space of all polynomials of degree $6$ of the real variables $x_1,x_2$.
It can be observed that in $\gambles_R$ the constant functions can be represented as
$$
v^T(x)Gv(x)=v^T(x)(ce_1e_1^T)v(x)=c,
$$
for any constant $c\in \reals$, $e_1$ being the first element of the canonical basis of $\reals^{d}$, i.e., $e_1=[1,0,\dots,0]$.
Compare this with the  non-negative constant functions in QT that have the form $g(x_1,\dots,x_m)=c (\otimes_{j=1}^m x_j)^\dagger I (\otimes_{j=1}^m x_j)$,
see Section \ref{sec:qt}.

We say that $\bdomain$  is a P-coherent set of desirable gambles whenever it satisfies \eqref{eq:btaut}--\eqref{eq:b2} in Section \ref{sec:comp} with 
\begin{align}
 \Sigma^{\geq}&=\{g(x_1,x_2)=v^T(x)Gv(x): G\geq0\},\\
 \Sigma^{<}&=\{g(x_1,x_2)=v^T(x)Gv(x): G<0\}.
\end{align}
The polynomial $v^T(x)Gv(x)$ with $G\geq0$ are called \textit{sum-of-square polynomials}, see also Section \ref{sec:sos}.

In this case too we can define the dual operator
\begin{equation}
\label{eq:linearoperatorx1x2}
Z:=L\left(v(x)v^T(x)\right).
\end{equation}
If we define $z_{\alpha\beta}=L(x_1^\alpha x_2^\beta)\in \reals$, then $Z$  has the following structure
\begin{align}
\label{eq:dualMatZ}
Z=\begin{bsmallmatrix}
 z_{00} &z_{10} &z_{01} &z_{20} &z_{11} &z_{02} &z_{30} &z_{21} &z_{12} &z_{03}\\
z_{10} &z_{20} &z_{11} &z_{30} &z_{21} &z_{12} &z_{40} &z_{31} &z_{22} &z_{13}\\
z_{01} &z_{11} &z_{02} &z_{21} &z_{12} &z_{03} &z_{31} &z_{22} &z_{13} &z_{04}\\
z_{20} &z_{30} &z_{21} &z_{40} &z_{31} &z_{22} &z_{50} &z_{41} &z_{32} &z_{23}\\
z_{11} &z_{21} &z_{12} &z_{31} &z_{22} &z_{13} &z_{41} &z_{32} &z_{23} &z_{14}\\
z_{02} &z_{12} &z_{03} &z_{22} &z_{13} &z_{04} &z_{32} &z_{23} &z_{14} &z_{05}\\
z_{30} &z_{40} &z_{31} &z_{50} &z_{41} &z_{32} &z_{60} &z_{51} &z_{42} &z_{33}\\
z_{21} &z_{31} &z_{22} &z_{41} &z_{32} &z_{23} &z_{51} &z_{42} &z_{33} &z_{24}\\
z_{12} &z_{22} &z_{13} &z_{32} &z_{23} &z_{14} &z_{42} &z_{33} &z_{24} &z_{15}\\
z_{03} &z_{13} &z_{04} &z_{23} &z_{14} &z_{05} &z_{33} &z_{24} &z_{15} &z_{06}\\
\end{bsmallmatrix}
\end{align}
where, for instance, $z_{00}=L(1)$, $z_{10}=L(x_1)$, $z_{20}=L(x_1^2)$, $z_{01}=L(x_2)$, $z_{02}=L(x_2^2)$ etc..

Now, the dual of $\bdomain$ can be calculated as in Section \ref{sec:comp}:
\begin{equation}
\label{eq:dualM1sos}
\begin{aligned}
\mathscr{Q}&=\left\{{z} \in \reals^{{d}}:  L(g)=Tr({G} Z)\geq0, ~Z\geq 0,~ z_{00}=1, ~~\forall g \in \bdomain\right\}.
\end{aligned}
\end{equation} 
Note that, because the way constants are represented in this context, we have $z_{00}=1$ instead of $Tr(Z)=1$.

Since sum-of-squares (SOS) implies nonnegativity, a natural question is to known whether any nonnegative polynomial
of degree $6$ of the real variables $x_1,x_2$ can be expressed
as a sum of squares. 
It turns out that this is not the case. Indeed, David Hilbert showed that equality between the
set of nonnegative polynomials of $n$ variables of degree $2d$ and SOS polynomials of $n$ variables of degree $2d$ holds only in the
following three cases: univariate polynomials (i.e., $n = 1$); quadratic polynomials ($2d = 2$); bivariate quartics ($n = 2$, $2d = 4$).
For all other cases, there always exist nonnegative polynomials that are not sums
of squares. 
The most famous counter-example is a polynomial due to Motzkin:
$$
m(x)=x_1^4x_2^2+x_1^2x_2^4-x_1^2x_2^2+1.
$$

Motzkin polynomial is nonnegative but it is not a sum-of-squares.
In what follow, it will be used to prove that a theory $\theory^*$ in the space of polynomials
of two real variables of degree $6$  has ``entangled density matrices'' in its dual space.

Consider the following PSD matrix \cite{Benavoli2017b} of type \eqref{eq:dualMatZ}
\begin{align}
\label{eq:darioMat}
Z_e=\begin{footnotesize}\begin{bsmallmatrix} 1 & 0 & 0 & 353 & 0 & 353 & 0 & 0 & 0 & 0\\ 0 & 353 & 0 & 0 & 0 & 0 & 249572 & 0 & 66 & 0\\ 0 & 0 & 353 & 0 & 0 & 0 & 0 & 66 & 0 & 249572\\ 353 & 0 & 0 & 249572 & 0 & 66 & 0 & 0 & 0 & 0\\ 0 & 0 & 0 & 0 & 66 & 0 & 0 & 0 & 0 & 0\\ 353 & 0 & 0 & 66 & 0 & 249572 & 0 & 0 & 0 & 0\\ 0 & 249572 & 0 & 0 & 0 & 0 & 706955894 & 0 & 17 & 0\\ 0 & 0 & 66 & 0 & 0 & 0 & 0 & 17 & 0 & 17\\ 0 & 66 & 0 & 0 & 0 & 0 & 17 & 0 & 17 & 0\\ 0 & 0 & 249572 & 0 & 0 & 0 & 0 & 17 & 0 & 706955894 \end{bsmallmatrix}\end{footnotesize}
\end{align}
and the negative  polynomial $-m(x)$.
Since  $L(f)=-1-z_{42}-z_{24}+z_{22}$ and   $z_{22}=66, z_{24}=z_{42}=17$ in \eqref{eq:darioMat},  we have that $L(f)=31>0$.
Therefore,  the conditions in \eqref{eq:condepr0} are met and Theorem \ref{th:fundamental} holds.
In particular, this means that the polynomial $-m(x)$ is desirable by a subject, Alice, whose beliefs are expressed by $Z_e$ and who is therefore a rational agent in 
$\theory^*$. However,  the polynomial is negative: Alice is irrational in 
$\theory$.

Notice that, if we consider the definition of entanglement in Section \ref{sec:witness},  
$Z_e$ is an entanglement matrix, since
$$
-m(x)<0 ~~\text{ but }~~ L(-m(x))>0.
$$
Moreover, from \eqref{eq:darioMat} and the definition of $Z$ we can extract the marginal operator for the variables $x_1,x_2$.
\begin{align}
 M_{x_1} &=L\left(\begin{bsmallmatrix}
                1 & x_1 & x_1^2 & x_1^3\\
                x_1 & x_1^2 & x_1^3 & x_1^4\\
                x_1^2 & x_1^3 & x_1^4 & x_1^5\\
                x_1^3 & x_1^4 & x_1^5  & x_1^6\\
               \end{bsmallmatrix}
\right)=
\begin{bsmallmatrix}
 1 & 0 & 353 & 0\\
 0 & 353 & 0& 249572\\
 353 & 0 & 249572 & 0\\
 0 & 249572 & 0 & 706955894
\end{bsmallmatrix},\\
 M_{x_2}&=L\left(\begin{bsmallmatrix}
                1 & x_2 & x_2^2 & x_2^3\\
                x_2 & x_2^2 & x_2^3 & x_2^4\\
                x_2^2 & x_2^3 & x_2^4 & x_2^5\\
                x_2^3 & x_2^4 & x_2^5  & x_2^6\\
               \end{bsmallmatrix}
\right)=
\begin{bsmallmatrix}
 1 & 0 & 353 & 0\\
 0 & 353 & 0& 249572\\
 353 & 0 & 249572\\ & 0\\
 0 & 249572 & 0 & 706955894
\end{bsmallmatrix}.
\end{align}
This time the operation is not equal to partial trace because, again, the structure of the polynomials $v^T(x)Gv(x)$ 
is different from those in QT.  

Since $ M_{x_1}, M_{x_2}$ are both PSD they are valid moment matrices.
Moreover, for single real variables, Hilbert has shown that every nonnegative polynomial is SOS and, therefore,
marginally, the above truncated moment matrices are compatible with classical probability. A subject whose marginal beliefs are  expressed by $ M_{x_1}\geq0, M_{x_2}\geq0$ is always rational in $\theory$.
This is exactly the same situation encountered when showing the impossibility of a local realistic interpretation of quantum mechanics (see Section \ref{sec:local}): the marginals seem compatible with  classical probability but the joint is not, since we can derive a Bell-type inequalities that are violated by  $Z_e$.

Again the conceptual mistake of this reading  is  to forget that $M_x$ and $M_y$ come from $Z_e$ and thus they should only be interpreted  as marginal operators .

\section{Discussions}

\subsection{The class of P-nonnegative gambles}
\label{sec:HermSOS}
The class of P-nonnegative gambles, defined in Section \ref{sec:qt}, is the closed convex cone of all \textit{Hermitian sum-of-squares} in $\gambles_R=\{(\otimes_{j=1}^m x_j)^\dagger G (\otimes_{j=1}^m x_j) \mid G\in \He^{r \times r}\}$, that is of all gambles $ g(x_1,\dots,x_m) \in \gambles_R$ for which $G$ is PSD. In particular this means that 
 Alice can efficiently determine whether a gamble is P-nonnegative or not.  
But is this class the only closed convex cone of nonnegative polynomials in $\gambles_R$ for which the membership problem can be solved efficiently (in polynomial-time)?
It turns out that the answer is negative (see for instance \cite{d2009polynomial,josz2018lasserre}): in addition of \textit{Hermitian sum-of-squares} (the one that Nature has chosen for QT)
one could also consider \textit{real sum-of-squares} in $\gambles_R$, that is polynomials of the form $(\otimes_{j=1}^m x_j)^\dagger G (\otimes_{j=1}^m x_j)$ that are sum-of-squares of polynomials of the real and imaginary part of the variables $ x_j$.

A separating example is the polynomial in \eqref{eq:sosqt}, which is not a \textit{Hermitian sum-of-squares} but it is a \textit{real sum-of-square}, as it can be seen from  \eqref{eq:realim}. This polynomial was used in our example because it can be constructed by inspection and its nonnegativity
follows immediately by \eqref{eq:realim}.  Clearly, there exist nonnegative polynomials in $\gambles_R$ that are neither Hermitian sum-of-squares nor real sum-of-squares.

Why has Nature chosen Hermitian sum-of-squares? 
This is an open question that we will investigate in future work. A possible explanation may reside
in the different size of the corresponding optimisation problems \cite{josz2018lasserre}.
Another possible explanation is that  the class of \textit{Hermitian sum-of-squares} is always strictly included in the class of \textit{real sum-of-squares} polynomials.
Therefore, the former may be the smallest class of gambles that allows one to  efficiently determine whether a gamble is P-nonnegative, but that is
still expressive enough \cite[Proposition 6]{Benavoli2019d}.

%
%
%
%


\subsection{A hidden variable model for a single quantum system}
\label{sec:hidden1d}
In \cite{kochen1968problem},  Kochen and Specker gave a  hidden variable model agreeing with the Born's rule but not preserving the structure of functional dependencies in QT.
Their idea amounts to introducing a ``hidden variable'' for each observable $H$ producing stochasticity in outcomes
of measurement of $H$. The totality of all such hidden variables is then
the phase space variable $\omega$ of the model.

It turns out that, for a single quantum system, our  model based on the phase space
$$
\Omega=\{x\in \complex^n: x^{\dagger}x=1\},
$$
is also a hidden variable model. The reason being that, for a single quantum system, $\Sigma^{\geq}=\gambles_R^{\geq}$
and $\Sigma^{<}=\gambles_R^{<}$, and therefore Alice will never accepts negative gambles. Notice that in this case, the matrix $\rho=L(xx^{\dagger})$ can be interpreted as a moment matrix as discussed in Section \ref{sec:momentmat}.

This hidden-variable model for QT is also discussed by Holevo in \cite[Sec.~1.7]{holevo2011probabilistic}, \bluenew{where the author notices that such model does  not fulfil the key requirement imposed in many existing ``no-go'' theorems stating that a state is uniquely represented.
To appreciate this point, remember that in Section \ref{sec:momentmat} we noticed that a truncated moment matrix does not define a unique probability. Based on this nonuniqueness property, we have thence shown that QT is compatible with CPT in the case of a $n$-level single particle system. 
Now, notice that a  hidden variable theory would necessarily treat as distinct two probabilities that define the same density matrix.
However,  since they  are  underdetermined by the observations, these two probabilities are unidentifiable models.
In fact, since any real-valued observable  is described by a Hermitian operator
and the expectation of a Hermitian operator  w.r.t.\ a given density matrix (truncated moment matrix) is unique ($Tr(G\rho)$), the density matrix (truncated moment matrix) is sufficient to provide an adequate characterisation of these two probabilities.
Therefore, if we accept that a hidden-variable model does not require a one-to-one correspondence (because of unidentifiability),  a
hidden-variable model may be defined as an equivalence class (the set of all probabilities associated to a given truncated moment matrix). This nonuniqueness argument is also discussed in \cite{srinivas1982hidden} and \cite[Supplementary 3.2]{holevo2011probabilistic}}.

Finally, we notice that our hidden variable model differs from Holevo's one when we consider $m>1$ particles.
This is the topic of the next Section.


\subsection{On the use of tensor product}
\label{sec:tensorproduct}

In Section \ref{sec:qt} we saw that the possibility space of composite systems of $m$ particles, each one with $n_j$ degrees of freedom, is given by 
$\pspace=\prod_{j=1}^m \overline{\complex}^{n_j}$, and that gambles (real-valued observable) on such space  are actually bounded real functions $g(x_1,\dots,x_m)=(\otimes_{j=1}^m x_j)^\dagger G (\otimes_{j=1}^m x_j)$,  where $\otimes$ denotes the tensor product between vectors, seen as column matrices.

In what follows, we justify the use of tensor product, and more specifically the type of gambles on the possibility space of composite systems, as a consequence of the way a multivariate theory of probability is usually formulated. 

As a start, let us consider the case of 
classical probability. In CPT,  structural judgements of independence/dependence between random variables are expressed through product
between random variables: 
given  factorised gambles $g(x_1,\dots,x_m)=\prod_{j=1}^m g_j(x_j)$, the variables $x_1,\dots,x_m$ are said to be independent
if $E[\prod_{j=1}^m g_j(x_j)]=\prod_{j=1}^m E[g_j(x_j)]$ for all $g_j$, where $E[\cdot]$ denote the expectation operator.
With this in mind, let us go back to our setting. Marginal gambles are of type $g_j(x_j)=x_j^{\dagger}G_j x_j$. This means that structural judgements are performed by considering factorised gambles of the form $\prod_{j=1}^m x_j^{\dagger}G_j x_j$. It is thence not difficult to verify that
$$
\prod_{j=1}^m x_j^{\dagger}G_j x_j=(\otimes_{j=1}^m x_j)^\dagger (\otimes_{j=1}^m G_j) (\otimes_{j=1}^m x_j).
$$
By closing the set of factorised gambles  under the operations of addition and scalar (real number) multiplication, 
one finally gets a vector space whose domain domain coincides with the collection of all gambles of the form $(\otimes_{j=1}^m x_j)^\dagger G (\otimes_{j=1}^m x_j)$.
Hence, structural judgements of independence/dependence are stated by Alice considering the desirability
of gambles belonging to $\gambles_R$.


In QT, there is   a confusion over  the  role of the tensor product, used to define the state space, \rednew{that can actually be elucidated by looking at Theorem~\ref{th:fundamental}. In order to see this, let us consider}
 two quantum systems $A$ and $B$, with corresponding Hilbert spaces $\mathcal{H}_A$
and $\mathcal{H}_B$.  By duality, the density matrix (state) of the joint system  lives in the tensor product space $\mathcal{H}_A\otimes \mathcal{H}_B$.
\rednew{Indeed, we have that}
$$
\begin{aligned}
L((\otimes_{j=1}^2 x_j)^\dagger G (\otimes_{j=1}^2 x_j))&=L(Tr(G(\otimes_{j=1}^2 x_j)(\otimes_{j=1}^2 x_j)^\dagger))\\
&=Tr(G\; L((\otimes_{j=1}^2 x_j)(\otimes_{j=1}^2 x_j)^\dagger)),
\end{aligned}
$$
and $L((\otimes_{j=1}^2 x_j)(\otimes_{j=1}^2 x_j)^\dagger)$ belongs to $\mathcal{H}_A\otimes \mathcal{H}_B$.
However, when \eqref{eq:condepr0} holds, we may justify  entanglement hypothesising  the existence of non classical evaluation functions
or, equivalently, a larger possibility space (Theorem \ref{th:fundamental}). This is clearly discussed in \cite[Supplement 3.4]{holevo2011probabilistic}:
\begin{quote}
 Since the set of pure states of the composite system $\Omega$ is larger
than Cartesian product $\Omega_A\times \Omega_B$, the phase space of the classical description of the
composite system will be larger than the product of phase spaces for the
components: $\Omega_A\times \Omega_B \varsubsetneq \Omega$. Therefore this classical description is not a
correspondence between the categories of classical and quantum system
preserving the operation of forming the composite systems.
Moreover, it appears that there is no way to establish such a correspondence. In any classical description of a composite quantum system
the variables corresponding to observables of the components are necessarily entangled in the way unusual for classical subsystems.
\end{quote}
\rednew{To sum up, we} argue that $\Omega_A\times \Omega_B \varsubsetneq \Omega$ is a manifestation of computational rationality.



\subsection{Sum-of-squares optimisation}
\label{sec:sos}
The theory of moments (and its dual theory of positive polynomials) are used to develop efficient numerical
schemes for polynomial optimization, i.e., global optimization problems with polynomial function. 
Such problems arise  in the analysis and control of nonlinear dynamical systems, and also in other areas such as combinatorial optimization.
This scheme consists of a hierarchy of semidefinite
programs (SDP) of increasing size which define tighter and tighter relaxations of the original
problem. Under some assumptions, it can be showed that the associated sequence of optimal values converges to the global minimum, see for instance \cite{parrilo2003semidefinite,lasserre2009moments}.
Note that, every polynomial in 
$$
\Sigma^{\geq}\coloneqq\{g(x_1,\dots,x_m)=(\otimes_{j=1}^m x_j)^\dagger G (\otimes_{j=1}^m x_j): G\geq0\},
$$
is (Hermitian) sum-of-squares because it can be rewritten as:
$$
(\otimes_{j=1}^m x_j)^\dagger H H^{\dagger} (\otimes_{j=1}^m x_j)=\sum_{i=1}^k |(H^{\dagger} (\otimes_{j=1}^m x_j))_i|^2,
$$
with $G= HH^{\dagger}$.

We have recently discussed the connection between SOS optimisation (for polynomials of real variables) and computational rationality (also called bounded rationality) in \cite{Benavoli2019b}.

In QT, SDP has been used to numerically prove that a certain state is entangled  \cite{landau1988empirical,doherty2002distinguishing,doherty2004complete, wehner2006tsirelson,doherty2008quantum,navascues2008convergent,pironio2010convergent,bamps2015sum,barak2017quantum}. 
The work \cite{doherty2002distinguishing,doherty2004complete}  realized that the set of separable
quantum  states  can  be  approximated  by  sum-of-squares  hierarchies. 
This  leads  to  the  SDP hierarchy  of  Doherty-Parrilo-Spedalieri,  which  is  extensively  employed  in  quantum  information.

The present, purely foundational, work differs from these approaches by stating that the Universe (microscopic world)  is nothing but a big ``device'' that solves SOS
optimisation problems. In fact, as discussed in Section \ref{sec:discussions},  the postulate of computational efficiency embodied by~\ref{eq:btaut} (through the above definition of $\Sigma^{\geq}$, i.e., the cone of Hermitian sum-of-squares polynomials) may indeed be the fundamental law in QT.
This totally different perspective may have a relevant impact in quantum computing, such as the development of new algorithms for quantum computing that exploit
the connection between QT and SDP highlighted in this paper.

\subsection{Comparisons }\label{sec:comparisons}
\rednew{Before comparing our approach with attempts to reconstruct QT, we make the following general remark.
\\
 Although being model-based, and thus assuming a model of the input space, what is really only used by our  approach are the
minimal assumptions---linearity, tautology (non-negative gambles) and falsum (negative gambles)---
needed to capture the common structural properties of all rational choice theories \cite{nau1991arbitrage,zaffalon2017a,zaffalon2018a}: decision theory, game theory, social choice theory and market theory. Given those assumptions, it is then possible to derive CPT and QT, that is the definition of probability or state and all the corresponding rules.
On the contrary, in many of the approaches to quantum reconstruction, either the notion of probability or the notion of state
is already assumed. 
Our framework therefore provides in several respects a complementary path to quantum reconstruction, which reveals 
the essence of the ``quantum weirdness'':  the computational postulate. }

\subsubsection{Our previous approach}\label{subsec:otherax}

Partially inspired by \cite{pitowsky2003betting}, in \cite{benavoli2016quantum} we introduced a syntactically different framework from which we were able to derive the postulates of QT. It is roughly defined as follows. 
Assume the space of outcomes of experiment on a $n$-dimensional quantum systems is represented by the set $\Omega=\{\omega_1,\dots,\omega_n\}$,  with $\omega_i$ denoting the elementary event ``detection along $i$''. 
A gamble on an experiment 
is a Hermitian matrix $G \in \He^{n\times n}$. 
By accepting  a gamble $G$, Alice commits herself to receive  $\gamma_{i}\in \reals$ utiles  if the outcome of the experiment eventually happens to be 
$\omega_i$, where $\gamma_{i}$ is defined from $G$ and the projection-valued measurement $\Pi^{*}=\{\Pi_i^*\}_{i=1}^n$, representing the $n$ orthogonal directions of the quantum state, as follows:
 \begin{equation}
  \Pi^{*}_{i} G \Pi^{*}_{i} = \gamma_{i}\Pi^{*}_{i} \text{ for } i=1,\dots,n.
 \end{equation} 

A subset $ \He^{n\times n}$ is thus said to be coherent if it is a  closed convex cone $\mathcal{C}$ containing the set  $\{G\in\He^{n\times n}:G\gneq0\}$ of all PSD matrices in $\He^{n\times n}$ and disjoint from the interior of $\{G\in\He^{n\times n}:G \leq 0\}$.
We then proved that the dual of a coherent  convex cone of matrix gambles $\mathcal{C}$  is a closed convex set of density matrices
\cite[Prop.IV.3]{benavoli2016quantum}:
\begin{equation}
\label{eq:dualrhoo}
\begin{aligned}
\mathscr{Q}
&=\left\{ \rho \in  \He^{n\times n}: \rho\geq0, Tr(\rho)=1, Tr({G} \rho)\geq0,  ~~\forall G  \in \mathcal{C}\right\}.
\end{aligned}
\end{equation} 
When  $\mathcal{C}$ is a maximal\footnote{For a  definition, see Definition \ref{def:maximal} in the Appendix.} cone,  its dual $\mathscr{Q}:=\mathcal{C}^{\bullet}$ includes a single density matrix.
This allowed us to show that QT is a generalised theory of probability: its axiomatic foundation can be derived from a logical consistency requirement in the way a subject accepts gambles on the results of a quantum experiment (similar to the axiomatic foundation of classical probability).
However, at the time of writing \cite{benavoli2016quantum}, it was not clear to us why the probability is generalised in such a way in QT, what this could possibly mean, why does entanglement exist, etc.
That is, it was not clear to us that everything follows by the computation postulate \ref{eq:btaut}.

Below, by providing a bijection between coherent  convex cones of matrices and  P-coherent systems of polynomial gambles,
we show the connection between \cite{benavoli2016quantum}  and the present work.
In order to that, we first need to replace \cite[Definition III.1,(S3)]{benavoli2016quantum}, that is the \textit{openness} property, with (S3'): 
if $G+\epsilon I \in \mathcal{C}$ for all $\epsilon>0$ then  $G$ is in $ \mathcal{C}$ (\textit{closedness}). 
Hence, the bijection $\mathfrak{f}$ between $\mathcal{C}$ and $\bdomain$ is obtained  by first noticing the correspondence between  the duals  in Section \ref{sec:qt} and \eqref{eq:dualrhoo}, and thus simply composing
the duality maps from $\bdomain$ to $\mathscr{Q}$ and from $\mathscr{Q}$ to  $\mathcal{C}$: 

{\footnotesize
\[
\begin{tikzcd}[column sep=large, row sep=large]
\mathscr{Q}   \arrow{rd}{} 
  &  \arrow{l}{} \overbrace{\left\{\sum_{i=1}^{|\mathcal{G}|}\lambda_i (\otimes_{j=1}^m x_j)^{\dagger}G_i(\otimes_{j=1}^m x_j)+ (\otimes_{j=1}^m x_j)^{\dagger}M (\otimes_{j=1}^m x_j)~: ~~\lambda_i\geq0, M\geq0 \right\}  }^{\bdomain=posi(\mathcal{G}\cup\Sigma^{\geq})}\arrow{d}{\mathfrak{f}} \\
    &  \underbrace{\left\{\sum_{i=1}^{|\mathcal{G}|}\lambda_i G_i + M~:~ ~\lambda_i\geq0, M\geq0\right\}}_{\mathcal{C}}
\end{tikzcd}
\]}


Based on $\mathfrak{f}$ and the results in \cite{benavoli2016quantum}, it is therefore almost\footnote{We need to take into account that (S3) has been replaced by (S3').} immediate to derive from \ref{eq:btaut}--\ref{eq:b2} the remaining other axioms and rules of QT (such as L\"{u}der's rule (measurement updating), Schr\"odinger's rule (time evolution)), see Appendix~\ref{app:otherax}.

\subsubsection{QBism}\label{sec:qbysm}
The foundation of QT through desirability presented in this paper shares a goal similar to that of QBism, which is to justify quantum mechanics by using coherence arguments \cite{Caves02,Appleby05a,Appleby05b,Timpson08,longPaper, Fuchs&SchackII,mermin2014physics}.

QBism aims at rewriting QT in a way that 
does not involve the density operator $\rho$ and use classical probabilistic Bayesian arguments to prove coherence of QT.
For instance, in \cite{Fuchs&SchackII} QBists have shown that it is possible 
to rewrite Born's rule in a way that does not include the density operator $\rho$ anymore, so as to give it a classical (Bayesian) interpretation.
Starting from a \textit{symmetric informationally complete} (SIC) POVM  \cite{renes2003symmetric,appleby2007symmetric}, 
they rewrite the Born's rule  as:
\begin{equation} \label{eq:bornruleBayes0}
\begin{array}{rcl}
   p_j = \sum\limits_{i=1}^{n^2} \left((n+1)q_i - \dfrac{1}{n}\right) r(j|i) .\\
\end{array} 
\end{equation}
Here, $q_i$ represent the probability  of the outcomes of an experiment (a SIC measure with $n^2$ outcomes), which
is only imagined and never performed. The terms $r(j|i)$ are instead the probabilities of seeing the outcome $j$ in the 
actually performed experiment, given  the  outcome  $i$ in the imagined one. Finally, $ p_j$ is the probability of the 
outcomes in the actual experiment.  Fuchs and Shack  call~\eqref{eq:bornruleBayes0} the ``quantum law of total probability'', because they interpret it  as a modification  of the standard law of total probability:
\begin{equation} \label{eq:lawprob}
   p_j=\sum_{i=1}^{n^2} r(j|i) q_i,
\end{equation}
which is well known to follow from coherence, i.e., Dutch-book arguments  \cite{savage1954,finetti1970}.
Thus, according to Fuchs and Shack, the rule~\eqref{eq:bornruleBayes0} can be regarded as an addition to the standard Dutch-book coherence, a sort of ``coherence plus'', which defines the valid quantum 
states: 
\begin{quote}``It expresses a kind of empirically extended coherence---not implied by Dutch-book coherence alone, but formally similar to the kind of relation one gets from Dutch-book coherence'' \cite{Fuchs&SchackII}.
\end{quote} 
Under this view, Equation~\eqref{eq:bornruleBayes0} can be interpreted as a consistency requirement, i.e., a law that 
must hold to produce a consistent quantum world. 
%
We regard it as portraying a partial view: in fact, the law~\eqref{eq:bornruleBayes0} is not total probability, and hence, strictly speaking, is incoherent according to de Finetti's (classical) definition of coherence. Note in fact that, $(n+1)q_i - \tfrac{1}{n}$ for $i=1,\dots,n^2$
is not a probability distribution, \rednew{since it is  negative for some $i$.
Therefore, for QBism an important  question is, can the rule \eqref{eq:bornruleBayes0} still be derived from a Dutch-book argument?
If not, what type of coherence does this rule represent?}  

By applying our formalism, we can answer such questions.
Assume we are considering a single particle system of dimension $n$ and, therefore, the possibility space is 
$\overline{\complex}^{n}\coloneqq\{ x\in \complex^{n}: ~x^{\dagger}x=1\}$.
Then, note that  $q_i=Tr(Q_i L(xx^{\dagger}))$ and, we have shown in Section \ref{sec:qt}, that
in the single particle case $L(xx^{\dagger})$ can always be interpreted as a moment matrix (in the single particle case, a polynomial $x^{\dagger}Gx$ is non-negative iff 
$G\geq0$).
\rednew{
Thus,  $(n+1)q_i - \tfrac{1}{n}$ is an expectation and, therefore,  is coherent in de Finetti's sense!
According to our interpretation, we should not compare  \eqref{eq:bornruleBayes0} and \eqref{eq:lawprob}.
They cannot be compared, because the first is an expectation (of a quadratic form), while the second
is the law of total probability.}

\rednew{In multiple-particle systems, the coherence of \eqref{eq:bornruleBayes0} follows by \ref{eq:btaut}--\ref{eq:b2}. 
Hence, in this case, the ``modified coherence'' implied by the rule   \eqref{eq:bornruleBayes0}  is expressed by the computation postulate.}


\subsubsection{Operational approach and information-theoretic reconstructions}\label{sec:hardy}
The  operational framework starts with a description of a typical laboratory situation where there is a preparation procedure, followed by 
some transformation procedures and finally a measurement.
The first approach to derive QT within the operational framework is due to Lucien Hardy \cite{hardy2001quantum}. Hardy starts with a  set of unnormalized probabilities (States)
describing the preparation procedure, a set of measurements, and a set of transformations, and then imposes five ``physically reasonable'' postulates on top of
that framework. He thence shows that QT is the unique theory in the framework that satisfies these postulates.

This way of proceeding is common to all reconstructions based on the operational approach: the main differences
are to be found in  the choice of postulates, being  either physically reasonable
or based on  information-theoretic principles.

The initial set of states are in general unnormalized probabilities. This definition is so generic that there are no
constraints \cite{koberinski2018quantum} other than the stipulation that it ought to describe the  probability of the measurement outcomes.
As clarified in  \cite[Assumption 7]{barrett2007information}, the initial definition of allowed states (and transformations)
is (sometimes only implicitly) an axiom in the operational approach to quantum reconstructions.
Thus, one can think about the operational approaches   \cite{barrett2007information} as first defining  a \textit{generalized probabilistic theory} and then defining postulates to derive either CPT or QT (or other theories like Generalized Non-Signalling Theory or Generalized Local Theory).

It is worth to point out the following \cite{koberinski2018quantum}:

\begin{quote}
[These] reconstructions do not provide explanatory power in an even stronger sense: they do
not typically tell us what quantum states are, or what is really going on in the world when we perform a Bell experiment, for
example. 
\end{quote}


\rednew{
The approach we followed considers an explanatory model as starting, not from the outcomes of the experiment but, from 
 (hidden) variables describing the phenomenon of interest (e.g., the ``directions''  of the particle's spins), their domains (possibility space)
 and the functions of the variables we can observe (observables) and are interested in (queries).
\\
Given this view, we have thence shown that QT is an instance of a theory satisfying both  the coherence and computation postulates.
In doing so, we shed light on both what are quantum states and on the nature of Bell-like inequalities and entanglement.
Moreover, we clarify why there are cases where QT is in agreement
with CPT---that is precisely when inferences in CPT can be computed in polynomial time---, and cases
where QT cannot be reconciled with CPT---that is precisely when inferences in CPT cannot be computed
in polynomial time.}

\rednew{We want to point out again that one of the differences between our approach and the operational reconstruction, is the choice
of the possibility space. In the operational reconstruction, CPT is defined over a finite possibility space $\widetilde{\pspace}$
(the finite outcomes of a POVM measurement). Therefore, the set of all probabilities on $\widetilde{\pspace}$ is the simplex of probability
of dimension  $|\widetilde{\pspace}|-1$. The simplex has finite extremes.
To rule out  CPT and single out QT,  one (or more than one) axiom (for instance Hardy's axiom 5)
is introduced to imply that the convex set of states must have infinitely many extremes. 
In this paper, we instead follow a different route. The possibility space  $\pspace$ we consider is infinite  and, in such a space,
 CPT has also infinite extremes: these extremes are atomic probability charges on the elements of the possibility space $\pspace$.
  Therefore, from the perspective adopted in this work, the difference between CPT and QT is not in the finite versus infinite number  of extremes
  (they both have infinite extremes).
Instead, we have shown that the classical-quantum divide  originates from the computational postulate.} 

We have also made precise the term ``generalised probability theory''.
Generalised probability theories have a long tradition outside (quantum) physics:
they have been developed to model robustness (robust Bayesian model), to model the different facets of uncertainty,
to model incomplete preferences in market and game-theory, to model conditioning with zero probability events etc..

Probability theory is at the foundation of rational decision theory, but most importantly
the laws of probability can be derived from rational requirements in the way an agent make decisions.
One of the major perils in generalisations of probability theory  is to build theories that are  so generic that there are no
constraints. In these cases, reasoning with the theory can lead to logical contradictions.
 Avoiding that should be the premise of any generalised operational probabilistic theory.

To prevent that, a generalisation of probability must begin with a notion of logical consistency (coherence
or, in other words, a definition of rational choice).
In this paper, we have shown that the one that Nature uses in QT \rednew{follows by the coherence and computation postulate.}

%
%

\section{Conclusions}\label{sec:discussions}
Since its foundation, there have been two main ways to explain the differences between QT and classical probability. The first one, that goes back to Birkhoff and von Neumann \cite{birkhoff1936logic}, explains this differences  with the premise that,
in QT, the Boolean algebra of events is taken over by the ``quantum logic'' of projection operators on a Hilbert space.
The second one is based on the view that the quantum-classical clash is due to the appearance of negative probabilities \cite{dirac1942bakerian, feynman1987negative}.

Recently, there has been a research effort, the so-called ``quantum reconstruction'', which amounts to trying to rebuild the theory from more primitive postulates.

A common trait of all these approaches is that of regarding QT as a generalised theory of probability. But why is probability generalised in such a way, and what are quantum states? Why sometimes is QT compatible with classical probability theory?  We have shown that the answer to this question rests in the computational intractability of classical probability theory contrasted to the polynomial-time complexity of QT.

Note that there have been previous investigations into the computational nature of QT but they have mostly focused on topics of undecidability (these results are usually obtained via a limiting argument, as the number of particles goes to infinity, see, e.g., \cite{cubitt2015undecidability}; this does not apply to our setting as we rather take the stance that the Universe is a finite physical system) and of potential computational advantages of non-standard theories involving modifications of quantum theory
\cite{bacon2004quantum,aaronson2004quantum,aaronson2005quantum,chiribella2013quantum}.

The key postulate that separates classical probability and QT is~\ref{eq:btaut}: the computation postulate. Because of~\ref{eq:btaut}, Theorem~\ref{th:fundamental} applies and thus the ``weirdness'' of QT follows: negative probabilities, existence of non-classical evaluation functionals and, therefore, irreconcilability with the classical probabilistic view.
The formulation of Theorem~\ref{th:fundamental} points to the fact that there are
 three possible ways out to provide a theoretical foundation of QT: (1) redefining the notion of evaluation functionals (algebra of the events), which
 is the approach adopted within the \emph{Quantum Logic} \cite[Axiom VII]{mackey2013mathematical};
 (2) the algebra of the events is classical but probabilities are replaced  by quasi-probabilities (allowing negative values), see for instance \cite{schack2000explicit,ferrie2011quasi};
 (3) the quantum-classical contrast has a purely computational character.
 The last approach starts by accepting P$\neq$NP to justify the
 separation between the microscopic quantum system and the macroscopic world.
 We quote  Aaronson \cite{aaronson2005guest}:
 \begin{quote}
... while experiment will always be the last appeal, the
 presumed intractability of NP-complete problems might be taken as a useful constraint in the search
 for new physical theories.
 \end{quote}
The postulate of computational efficiency embodied by~\ref{eq:btaut} (through~\eqref{eq:b0}) may indeed be the fundamental law in QT, similar to 
the second law of thermodynamics, or the impossibility  of superluminal signalling.

 
 \rednew{As future work we aim to address the following question: can we make the choice of nonnegative gambles (Hermitian sum-of-squares in QT) to follow from an additional postulate? This will turn our approach into a full quantum reconstruction. In particular, we would like to investigate the possibility of characterising 
 the choice of nonnegative gambles via symmetry arguments (for instance by exploiting  ideas from \cite{skilling2017symmetrical}).
 We also plan to compare our approach with that of Goyal and Knuth \cite{skilling2017symmetrical}. Here the authors prove that CPT and QT can be 
 derived from  the same principles, once  the appropriate symmetries (that are operative in each domain) are identified.
  In particular, we would like to verify if  the the compatibility (incompatibility) between 
 CPT and QT we proved for  single (respectively multiple) particle systems is also   somehow present in their approach.
 We finally aim  to extend our approach to the case of indistinguishable particles.}

\bibliographystyle{ieeetr}
\bibliography{biblio}

\begin{thebibliography}{100}

\bibitem{birkhoff1936logic}
G.~Birkhoff and J.~Von~Neumann, ``The logic of quantum mechanics,'' {\em Annals
  of mathematics}, pp.~823--843, 1936.

\bibitem{dirac1942bakerian}
P.~A.~M. Dirac, ``Bakerian lecture-the physical interpretation of quantum
  mechanics,'' {\em Proceedings of the Royal Society of London. Series A.
  Mathematical and Physical Sciences}, vol.~180, no.~980, pp.~1--40, 1942.

\bibitem{feynman1987negative}
R.~P. Feynman, ``Negative probability,'' {\em Quantum implications: essays in
  honour of David Bohm}, pp.~235--248, 1987.

\bibitem{mackey2013mathematical}
G.~W. Mackey, {\em Mathematical foundations of quantum mechanics}.
\newblock Courier Corporation, 2013.

\bibitem{jauch1963can}
J.~M. Jauch and C.~Piron, ``Can hidden variables be excluded in quantum
  mechanics,'' {\em Helv. Phys. Acta}, vol.~36, no.~CERN-TH-324, pp.~827--837,
  1963.

\bibitem{rovelli1996rovelli}
C.~Rovelli, ``{Relational Quantum Mechanics},'' {\em Int. J. Theor. Phys},
  vol.~35, p.~1637, 1996.

\bibitem{hardy2011foliable}
L.~Hardy, ``Foliable operational structures for general probabilistic
  theories,'' {\em Deep Beauty: Understanding the Quantum World through
  Mathematical Innovation; Halvorson, H., Ed}, p.~409, 2011.

\bibitem{hardy2001quantum}
L.~Hardy, ``Quantum theory from five reasonable axioms,'' {\em arXiv preprint
  quant-ph/0101012}, 2001.

\bibitem{barrett2007information}
J.~Barrett, ``Information processing in generalized probabilistic theories,''
  {\em Physical Review A}, vol.~75, no.~3, p.~032304, 2007.

\bibitem{rau2009quantum}
J.~Rau, ``On quantum vs. classical probability,'' {\em Annals of Physics},
  vol.~324, no.~12, pp.~2622--2637, 2009.

\bibitem{fivel2012derivation}
D.~I. Fivel, ``Derivation of the rules of quantum mechanics from
  information-theoretic axioms,'' {\em Foundations of Physics}, vol.~42, no.~2,
  pp.~291--318, 2012.

\bibitem{wilce2009four}
A.~Wilce, ``Four and a half axioms for finite dimensional quantum mechanics,''
  {\em arXiv preprint arXiv:0912.5530}, 2009.

\bibitem{chiribella2010probabilistic}
G.~Chiribella, G.~M. D’Ariano, and P.~Perinotti, ``Probabilistic theories
  with purification,'' {\em Physical Review A}, vol.~81, no.~6, p.~062348,
  2010.

\bibitem{barnum2011information}
H.~Barnum and A.~Wilce, ``Information processing in convex operational
  theories,'' {\em Electronic Notes in Theoretical Computer Science}, vol.~270,
  no.~1, pp.~3--15, 2011.

\bibitem{van2005implausible}
W.~Van~Dam, ``Implausible consequences of superstrong nonlocality,'' {\em arXiv
  preprint quant-ph/0501159}, 2005.

\bibitem{pawlowski2009information}
M.~Paw{\l}owski, T.~Paterek, D.~Kaszlikowski, V.~Scarani, A.~Winter, and
  M.~{\.Z}ukowski, ``Information causality as a physical principle,'' {\em
  Nature}, vol.~461, no.~7267, p.~1101, 2009.

\bibitem{dakic2009quantum}
B.~Dakic and C.~Brukner, ``Quantum theory and beyond: Is entanglement
  special?,'' {\em arXiv preprint arXiv:0911.0695}, 2009.

\bibitem{fuchs2002quantum}
C.~A. Fuchs, ``Quantum mechanics as quantum information (and only a little
  more),'' {\em arXiv preprint quant-ph/0205039, abridged version in Quantum
  Theory: Reconsideration of Foundations, edited by A. Khrennikov}, 2002.

\bibitem{brassard2005information}
G.~Brassard, ``Is information the key?,'' {\em Nature Physics}, vol.~1, no.~1,
  p.~2, 2005.

\bibitem{masanes2011derivation}
L.~Masanes and M.~P. M{\"u}ller, ``A derivation of quantum theory from physical
  requirements,'' {\em New Journal of Physics}, vol.~13, no.~6, p.~063001,
  2011.

\bibitem{mueller2016information}
M.~P. Mueller and L.~Masanes, ``Information-theoretic postulates for quantum
  theory,'' in {\em Quantum Theory: Informational Foundations and Foils},
  pp.~139--170, Springer, 2016.

\bibitem{popescu1998causality}
S.~Popescu and D.~Rohrlich, ``Causality and nonlocality as axioms for quantum
  mechanics,'' in {\em Causality and Locality in Modern Physics}, pp.~383--389,
  Springer, 1998.

\bibitem{navascues2010glance}
M.~Navascu{\'e}s and H.~Wunderlich, ``A glance beyond the quantum model,'' in
  {\em Proceedings of the Royal Society of London A: Mathematical, Physical and
  Engineering Sciences}, vol.~466, pp.~881--890, The Royal Society, 2010.

\bibitem{abramsky2004categorical}
S.~Abramsky and B.~Coecke, ``A categorical semantics of quantum protocols,'' in
  {\em Proceedings of the 19th Annual IEEE Symposium on Logic in Computer
  Science, 2004.}, pp.~415--425, IEEE, 2004.

\bibitem{coecke2010quantum}
B.~Coecke, ``Quantum picturalism,'' {\em Contemporary physics}, vol.~51, no.~1,
  pp.~59--83, 2010.

\bibitem{cho2015introduction}
K.~Cho, B.~Jacobs, B.~Westerbaan, and A.~Westerbaan, ``An introduction to
  effectus theory,'' {\em arXiv preprint arXiv:1512.05813}, 2015.

\bibitem{caticha1998consistency}
A.~Caticha, ``Consistency, amplitudes, and probabilities in quantum theory,''
  {\em Physical Review A}, vol.~57, no.~3, p.~1572, 1998.

\bibitem{Caves02}
C.~M. Caves, C.~A. Fuchs, and R.~Schack, ``Unknown quantum states: the quantum
  de finetti representation,'' {\em Journal of Mathematical Physics}, vol.~43,
  no.~9, pp.~4537--4559, 2002.

\bibitem{Appleby05a}
D.~Appleby, ``Facts, values and quanta,'' {\em Foundations of Physics},
  vol.~35, no.~4, pp.~627--668, 2005.

\bibitem{Appleby05b}
D.~Appleby, ``Probabilities are single-case or nothing,'' {\em Optics and
  spectroscopy}, vol.~99, no.~3, pp.~447--456, 2005.

\bibitem{Timpson08}
C.~G. Timpson, ``Quantum {B}ayesianism: a study,'' {\em Studies in History and
  Philosophy of Science Part B: Studies in History and Philosophy of Modern
  Physics}, vol.~39, no.~3, pp.~579--609, 2008.

\bibitem{longPaper}
C.~A. Fuchs and R.~Schack, ``Quantum-{B}ayesian coherence,'' {\em Reviews of
  Modern Physics}, vol.~85, no.~4, p.~1693, 2013.

\bibitem{Fuchs&SchackII}
C.~A. Fuchs and R.~Schack, ``A quantum-bayesian route to quantum-state space,''
  {\em Foundations of Physics}, vol.~41, no.~3, pp.~345--356, 2011.

\bibitem{mermin2014physics}
N.~D. Mermin, ``Physics: Qbism puts the scientist back into science,'' {\em
  Nature}, vol.~507, no.~7493, pp.~421--423, 2014.

\bibitem{holevo2011probabilistic}
A.~S. Holevo, {\em Probabilistic and statistical aspects of quantum theory},
  vol.~1.
\newblock Springer Science \& Business Media, 2011.

\bibitem{pitowsky2003betting}
I.~Pitowsky, ``Betting on the outcomes of measurements: a bayesian theory of
  quantum probability,'' {\em Studies in History and Philosophy of Science Part
  B: Studies in History and Philosophy of Modern Physics}, vol.~34, no.~3,
  pp.~395--414, 2003.

\bibitem{Pitowsky2006}
I.~Pitowsky, {\em Physical Theory and its Interpretation: Essays in Honor of
  Jeffrey Bub}, ch.~Quantum Mechanics as a Theory of Probability, pp.~213--240.
\newblock Dordrecht: Springer Netherlands, 2006.

\bibitem{goyal2010origin}
P.~Goyal, K.~H. Knuth, and J.~Skilling, ``Origin of complex quantum amplitudes
  and feynman’s rules,'' {\em Physical Review A}, vol.~81, no.~2, p.~022109,
  2010.

\bibitem{khrennikov2009interpretations}
A.~Khrennikov, {\em Interpretations of probability}.
\newblock Walter de Gruyter, 2009.

\bibitem{khrennikov2013non}
A.~Y. Khrennikov, {\em Non-Archimedean analysis: quantum paradoxes, dynamical
  systems and biological models}, vol.~427.
\newblock Springer Science \& Business Media, 2013.

\bibitem{goyal2011quantum}
P.~Goyal and K.~H. Knuth, ``Quantum theory and probability theory: their
  relationship and origin in symmetry,'' {\em Symmetry}, vol.~3, no.~2,
  pp.~171--206, 2011.

\bibitem{goyal2014derivation}
P.~Goyal, ``Derivation of quantum theory from feynman's rules,'' {\em Physical
  Review A}, vol.~89, no.~3, p.~032120, 2014.

\bibitem{skilling2017symmetrical}
J.~Skilling and K.~H. Knuth, ``The symmetrical foundation of measure,
  probability, and quantum theories,'' {\em Annalen der Physik}, p.~1800057,
  2017.

\bibitem{benavoli2016quantum}
A.~Benavoli, A.~Facchini, and M.~Zaffalon, ``{Quantum mechanics: The Bayesian
  theory generalized to the space of Hermitian matrices},'' {\em Physical
  Review A}, vol.~94, no.~4, p.~042106, 2016.

\bibitem{benavoli2017gleason}
A.~Benavoli, A.~Facchini, and M.~Zaffalon, ``{A Gleason-type theorem for any
  dimension based on a gambling formulation of Quantum Mechanics},'' {\em
  Foundations of Physics}, vol.~47, no.~7, pp.~991--1002, 2017.

\bibitem{nau1991arbitrage}
R.~F. Nau and K.~F. McCardle, ``Arbitrage, rationality, and equilibrium,'' {\em
  Theory and Decision}, vol.~31, no.~2-3, pp.~199--240, 1991.

\bibitem{Benavoli2019d}
A.~Benavoli, A.~Facchini, and M.~Zaffalon, ``Bernstein's socks and
  polynomial-time provable coherence,'' tech. rep., 2019.

\bibitem{finetti1937}
B.~{d}e Finetti, ``La pr\'evision: ses lois logiques, ses sources
  subjectives,'' {\em Annales de l'Institut Henri Poincar\'e}, vol.~7,
  pp.~1--68, 1937.

\bibitem{williams1975}
P.~M. Williams, ``Notes on conditional previsions,'' tech. rep., School of
  Mathematical and Physical Science, University of Sussex, UK, 1975.

\bibitem{walley1991}
P.~Walley, {\em Statistical Reasoning with Imprecise Probabilities}.
\newblock New York: Chapman and Hall, 1991.

\bibitem{anscombe1963}
F.~J. Anscombe and R.~J. Aumann, ``A definition of subjective probability,''
  {\em The Annals of Mathematical Statistics}, vol.~34, pp.~199--2005, 1963.

\bibitem{zaffalon2017a}
M.~Zaffalon and E.~Miranda, ``Axiomatising incomplete preferences through sets
  of desirable gambles,'' {\em Journal of Artificial Intelligence Research},
  vol.~60, pp.~1057--1126, 2017.

\bibitem{zaffalon2018a}
M.~Zaffalon and E.~Miranda, ``Desirability foundations of robust rational
  decision making,'' {\em Synthese}.
\newblock Accepted for publication.

\bibitem{finetti1974}
B.~{d}e Finetti, {\em Theory of Probability: A Critical Introductory
  Treatment}, vol.~1.
\newblock Chichester: John Wiley \& Sons, 1974.

\bibitem{aliprantisborder}
C.~Aliprantis and K.~Border, {\em Infinite Dimensional Analysis: A Hitchhiker's
  Guide}.
\newblock Springer, 2007.

\bibitem{tarski1951decision}
A.~Tarski, ``A decision method for elementary algebra and geometry,'' 1951.

\bibitem{seidenberg1954new}
A.~Seidenberg, ``A new decision method for elementary algebra,'' {\em Annals of
  Mathematics}, pp.~365--374, 1954.

\bibitem{gurvits2003classical}
L.~Gurvits, ``Classical deterministic complexity of edmonds' problem and
  quantum entanglement,'' in {\em Proceedings of the thirty-fifth annual ACM
  symposium on Theory of computing}, pp.~10--19, ACM, 2003.

\bibitem{nielsen2010quantum}
M.~A. Nielsen and I.~L. Chuang, {\em Quantum computation and quantum
  information}.
\newblock Cambridge university press, 2010.

\bibitem{karr1983extreme}
A.~F. Karr, ``Extreme points of certain sets of probability measures, with
  applications,'' {\em Mathematics of Operations Research}, vol.~8, no.~1,
  pp.~74--85, 1983.

\bibitem{chiribella2007continuous}
G.~Chiribella, G.~M. D’Ariano, and D.~Schlingemann, ``How continuous quantum
  measurements in finite dimensions are actually discrete,'' {\em Physical
  review letters}, vol.~98, no.~19, p.~190403, 2007.

\bibitem{schack2000explicit}
R.~Schack and C.~M. Caves, ``Explicit product ensembles for separable quantum
  states,'' {\em Journal of Modern Optics}, vol.~47, no.~2-3, pp.~387--399,
  2000.

\bibitem{sperling2009necessary}
J.~Sperling and W.~Vogel, ``Necessary and sufficient conditions for bipartite
  entanglement,'' {\em Physical Review A}, vol.~79, no.~2, p.~022318, 2009.

\bibitem{gerke2016multipartite}
S.~Gerke, J.~Sperling, W.~Vogel, Y.~Cai, J.~Roslund, N.~Treps, and C.~Fabre,
  ``Multipartite entanglement of a two-separable state,'' {\em Physical review
  letters}, vol.~117, no.~11, p.~110502, 2016.

\bibitem{gerke2018numerical}
S.~Gerke, W.~Vogel, and J.~Sperling, ``Numerical construction of multipartite
  entanglement witnesses,'' {\em Physical Review X}, vol.~8, no.~3, p.~031047,
  2018.

\bibitem{bell1964einstein}
J.~S. Bell, ``On the einstein podolsky rosen paradox,'' {\em Physics Physique
  Fizika}, vol.~1, no.~3, p.~195, 1964.

\bibitem{horodecki1999reduction}
M.~Horodecki and P.~Horodecki, ``Reduction criterion of separability and limits
  for a class of distillation protocols,'' {\em Physical Review A}, vol.~59,
  no.~6, p.~4206, 1999.

\bibitem{horodecki2009quantum}
R.~Horodecki, P.~Horodecki, M.~Horodecki, and K.~Horodecki, ``Quantum
  entanglement,'' {\em Reviews of modern physics}, vol.~81, no.~2, p.~865,
  2009.

\bibitem{heinosaari2011mathematical}
T.~Heinosaari and M.~Ziman, {\em The mathematical language of quantum theory:
  from uncertainty to entanglement}.
\newblock Cambridge University Press, 2011.

\bibitem{Benavoli2017b}
A.~Benavoli, A.~Facchini, D.~Piga, and M.~Zaffalon, ``Sos for bounded
  rationality,'' in {\em Proc. ISIPTA'17 Int. Symposium on Imprecise
  Probability: Theories and Applications,}, pp.~1--12, PJMLR, 2017.

\bibitem{d2009polynomial}
J.~P. D’Angelo and M.~Putinar, ``Polynomial optimization on odd-dimensional
  spheres,'' in {\em Emerging applications of algebraic geometry}, pp.~1--15,
  Springer, 2009.

\bibitem{josz2018lasserre}
C.~Josz and D.~K. Molzahn, ``Lasserre hierarchy for large scale polynomial
  optimization in real and complex variables,'' {\em SIAM Journal on
  Optimization}, vol.~28, no.~2, pp.~1017--1048, 2018.

\bibitem{kochen1968problem}
S.~Kochen and E.~Specker, ``The problem of hidden variables in quantum
  mechanics,'' {\em J. Math. Mech.}, vol.~17, pp.~59--87, 1968.

\bibitem{srinivas1982hidden}
M.~Srinivas, ``When is a hidden variable theory compatible with quantum
  mechanics?,'' {\em Pramana}, vol.~19, no.~2, pp.~159--173, 1982.

\bibitem{parrilo2003semidefinite}
P.~A. Parrilo, ``Semidefinite programming relaxations for semialgebraic
  problems,'' {\em Mathematical programming}, vol.~96, no.~2, pp.~293--320,
  2003.

\bibitem{lasserre2009moments}
J.~B. Lasserre, {\em Moments, positive polynomials and their applications},
  vol.~1.
\newblock World Scientific, 2009.

\bibitem{Benavoli2019b}
A.~Benavoli, A.~Facchini, D.~Piga, and M.~Zaffalon, ``Sum-of-squares for
  bounded rationality,'' {\em International Journal of Approximate Reasoning},
  vol.~105, pp.~130 -- 152, 2019.

\bibitem{landau1988empirical}
L.~J. Landau, ``Empirical two-point correlation functions,'' {\em Foundations
  of Physics}, vol.~18, no.~4, pp.~449--460, 1988.

\bibitem{doherty2002distinguishing}
A.~C. Doherty, P.~A. Parrilo, and F.~M. Spedalieri, ``Distinguishing separable
  and entangled states,'' {\em Physical Review Letters}, vol.~88, no.~18,
  p.~187904, 2002.

\bibitem{doherty2004complete}
A.~C. Doherty, P.~A. Parrilo, and F.~M. Spedalieri, ``Complete family of
  separability criteria,'' {\em Physical Review A}, vol.~69, no.~2, p.~022308,
  2004.

\bibitem{wehner2006tsirelson}
S.~Wehner, ``Tsirelson bounds for generalized clauser-horne-shimony-holt
  inequalities,'' {\em Physical Review A}, vol.~73, no.~2, p.~022110, 2006.

\bibitem{doherty2008quantum}
A.~C. Doherty, Y.-C. Liang, B.~Toner, and S.~Wehner, ``The quantum moment
  problem and bounds on entangled multi-prover games,'' in {\em Computational
  Complexity, 2008. CCC'08. 23rd Annual IEEE Conference on}, pp.~199--210,
  IEEE, 2008.

\bibitem{navascues2008convergent}
M.~Navascu{\'e}s, S.~Pironio, and A.~Ac{\'\i}n, ``A convergent hierarchy of
  semidefinite programs characterizing the set of quantum correlations,'' {\em
  New Journal of Physics}, vol.~10, no.~7, p.~073013, 2008.

\bibitem{pironio2010convergent}
S.~Pironio, M.~Navascu{\'e}s, and A.~Acin, ``Convergent relaxations of
  polynomial optimization problems with noncommuting variables,'' {\em SIAM
  Journal on Optimization}, vol.~20, no.~5, pp.~2157--2180, 2010.

\bibitem{bamps2015sum}
C.~Bamps and S.~Pironio, ``Sum-of-squares decompositions for a family of
  clauser-horne-shimony-holt-like inequalities and their application to
  self-testing,'' {\em Physical Review A}, vol.~91, no.~5, p.~052111, 2015.

\bibitem{barak2017quantum}
B.~Barak, P.~K. Kothari, and D.~Steurer, ``Quantum entanglement, sum of
  squares, and the log rank conjecture,'' in {\em Proceedings of the 49th
  Annual ACM SIGACT Symposium on Theory of Computing}, pp.~975--988, ACM, 2017.

\bibitem{renes2003symmetric}
J.~M. Renes, R.~Blume-Kohout, A.~J. Scott, and C.~M. Caves, ``Symmetric
  informationally complete quantum measurements,'' {\em arXiv preprint
  quant-ph/0310075}, 2003.

\bibitem{appleby2007symmetric}
D.~Appleby, ``Symmetric informationally complete measurements of arbitrary
  rank,'' {\em Optics and Spectroscopy}, vol.~103, no.~3, pp.~416--428, 2007.

\bibitem{savage1954}
L.~J. Savage, {\em The Foundations of Statistics}.
\newblock New York: Wiley, 1954.
\newblock First edition, second revised edition published in 1972.

\bibitem{finetti1970}
B.~{d}e Finetti, {\em Teoria delle Probabilit\`a}.
\newblock Turin: Einaudi, 1970.

\bibitem{koberinski2018quantum}
A.~Koberinski and M.~P. M{\"u}ller, ``Quantum theory as a principle theory:
  insights from an information-theoretic reconstruction,'' {\em Physical
  Perspectives on Computation, Computational Perspectives on Physics; Fletcher,
  SC, Cuffaro, ME, Eds}, pp.~257--280, 2018.

\bibitem{cubitt2015undecidability}
T.~S. Cubitt, D.~Perez-Garcia, and M.~M. Wolf, ``Undecidability of the spectral
  gap,'' {\em Nature}, vol.~528, no.~7581, p.~207, 2015.

\bibitem{bacon2004quantum}
D.~Bacon, ``Quantum computational complexity in the presence of closed timelike
  curves,'' {\em Physical Review A}, vol.~70, no.~3, p.~032309, 2004.

\bibitem{aaronson2004quantum}
S.~Aaronson, ``Quantum computing and hidden variables ii: the complexity of
  sampling histories,'' {\em arXiv preprint quant-ph/0408119}, 2004.

\bibitem{aaronson2005quantum}
S.~Aaronson, ``Quantum computing, postselection, and probabilistic
  polynomial-time,'' in {\em Proceedings of the Royal Society of London A:
  Mathematical, Physical and Engineering Sciences}, vol.~461, pp.~3473--3482,
  The Royal Society, 2005.

\bibitem{chiribella2013quantum}
G.~Chiribella, G.~M. D’Ariano, P.~Perinotti, and B.~Valiron, ``Quantum
  computations without definite causal structure,'' {\em Physical Review A},
  vol.~88, no.~2, p.~022318, 2013.

\bibitem{ferrie2011quasi}
C.~Ferrie, ``Quasi-probability representations of quantum theory with
  applications to quantum information science,'' {\em Reports on Progress in
  Physics}, vol.~74, no.~11, p.~116001, 2011.

\bibitem{aaronson2005guest}
S.~Aaronson, ``{NP-complete problems and physical reality},'' {\em ACM Sigact
  News}, vol.~36, no.~1, pp.~30--52, 2005.

\bibitem{Walley91}
P.~Walley, {\em Statistical Reasoning with Imprecise Probabilities}.
\newblock Chapman \& Hall/CRC Monographs on Statistics \& Applied Probability,
  Taylor \& Francis, 1991.

\bibitem{bhaskara1983}
K.~P.~S. Bhaskara~Rao and M.~Bhaskara~Rao, {\em Theory of Charges}.
\newblock London: Academic Press, 1983.

\bibitem{pmlr-v62-benavoli17b}
A.~Benavoli, A.~Facchini, M.~Zaffalon, and J.~Vicente-Pérez, ``A polarity
  theory for sets of desirable gambles,'' in {\em Proceedings of the Tenth
  International Symposium on Imprecise Probability: Theories and Applications}
  (A.~Antonucci, G.~Corani, I.~Couso, and S.~Destercke, eds.), vol.~62 of {\em
  Proceedings of Machine Learning Research}, pp.~37--48, PMLR, 10--14 Jul 2017.

\end{thebibliography}

\clearpage
 
\appendix
 \setcounter{section}{18}

 \section{Further results and derivations }
 \label{sec:appendix0}
 We introduce the following notations and definitions that will be used in the rest of the Appendix.
 Recall that $\posi$ denotes the conic hull operator, whereas by $\cl$ we denote the closure operator.
 \begin{itemize}
  \item $\mathbbm{1}$ denotes the unitary gamble in $\gambles$.  We have introduced the symbol $\mathbbm{1}$ to distinguish the unitary function in $\gambles$, i.e.,  $\mathbbm{1}(\omega)=1$ for all $\omega \in  \Omega$, from the scalar (real number) $1$.
  \item $\mathbbm{1}_R$ denotes the unitary gamble in $\gambles_R$.
  \item $N(\assess):=\cl\posi(\nonnegative \cup \assess)$.
  \item $N_R(\assess):=\cl\posi(\nonnegative_R \cup \assess)$.
  \item $N_B(\assess):=\cl\posi(\bnonnegative \cup \assess)$.
 \end{itemize}

 \begin{definition}
 \label{def:maximal}
  A coherent (equiv. P-coherent) set of desirable gambles $\domain$ is said to be \textbf{maximal} when
  for every $g\in \gambles \backslash \domain$ (respectively  $g\in \gambles_R \backslash \domain$) $-1 \in \cl \posi(g \cup \domain)$.
 \end{definition}
 It means that, if $\domain$ is maximal, we cannot enlarge it while preserving coherence (equiv. P-coherence).

\subsection{The unitary constant plays the role of falsum}
\label{sec:onefalsum}
The following result, in addition to providing a necessary and sufficient condition for coherence,  states that $-\mathbbm{1}$ can be regarded as playing the role of the Falsum and \ref{eq:sl} can be reformulated as $-\mathbbm{1} \notin \domain$. 
It is an immediate consequence of  Theorem 3.8.5 and Claim 3.7.4 in \cite{Walley91}.
\begin{proposition}\label{prop:coco1}
Let $\assess$ be a set of gambles. The following claims are equivalent
\begin{enumerate}
\item $N(\assess)$ is coherent, 
\item $\posi (\assess) \cap \negative = \emptyset$,
\item $-\mathbbm{1} \notin \posi (\assess \cup \nonnegative) $,
\item $g \notin N(\assess)$, for some gamble $g\in \gambles$.
\end{enumerate}
\end{proposition}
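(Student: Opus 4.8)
The plan is to establish the four statements equivalent through two short loops of implications, isolating the single genuinely topological step. Throughout I would exploit the inclusions $\posi(\assess)\subseteq\posi(\nonnegative\cup\assess)\subseteq N(\assess)$ together with the fact that $\gambles$ carries the supremum norm, so that $\negative=\{g:\sup g<0\}$ is an \emph{open} set (a gamble with $\sup g<0$ stays negative under any sufficiently small uniform perturbation), while $\nonnegative$ is a closed convex cone. These two observations do essentially all the work.

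First I would dispatch the trivial direction $1\Rightarrow 3$: since $-\mathbbm{1}\in\negative$, coherence of $N(\assess)$ gives $-\mathbbm{1}\notin N(\assess)\supseteq\posi(\assess\cup\nonnegative)$, which is $3$. For $3\Rightarrow 2$ I argue by contraposition. If some $f\in\posi(\assess)\cap\negative$ existed, then $f\leq-\epsilon\mathbbm{1}$ for some $\epsilon>0$, and the identity $-\mathbbm{1}=\tfrac1\epsilon f+\tfrac1\epsilon(-\epsilon\mathbbm{1}-f)$ exhibits $-\mathbbm{1}$ as a conic combination of $f\in\posi(\assess)$ and the nonnegative gamble $-\epsilon\mathbbm{1}-f$, hence $-\mathbbm{1}\in\posi(\assess\cup\nonnegative)$, contradicting $3$.

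The crux is $2\Rightarrow 1$, and this is where I expect the only real obstacle. I would first upgrade $2$ from $\posi(\assess)$ to $\posi(\nonnegative\cup\assess)$: writing a generic element as $g+h$ with $g\in\posi(\assess)$ and $h\in\nonnegative$, if the sum lay in $\negative$ then $g=f-h\leq f$ would force $\sup g<0$, i.e. $g\in\posi(\assess)\cap\negative=\emptyset$, a contradiction; hence $\posi(\nonnegative\cup\assess)\cap\negative=\emptyset$. The remaining, and main, point is that passing to the \emph{closure} $N(\assess)=\cl\posi(\nonnegative\cup\assess)$ introduces no negative gamble. Here I would take any $f\in\negative\cap N(\assess)$ with $\sup f=-\delta<0$ and a sequence $f_n\in\posi(\nonnegative\cup\assess)$ converging uniformly to $f$; for $n$ large, $\sup f_n\leq-\delta/2<0$, so $f_n\in\negative\cap\posi(\nonnegative\cup\assess)$, contradicting the previous step. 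This is precisely where openness of $\negative$ in the supremum norm is indispensable; without it the closure could, in principle, create a sure loss.

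Finally I would close the second loop $1\Leftrightarrow 4$. The direction $1\Rightarrow 4$ is immediate, as $-\mathbbm{1}$ is a gamble lying outside the coherent $N(\assess)$. For $4\Rightarrow 1$ I argue contrapositively via a fill-up lemma: if $N(\assess)$ is incoherent it contains some $f\leq-\epsilon\mathbbm{1}$, and since $N(\assess)$ is a closed convex cone containing $\nonnegative$, boundedness of an arbitrary $h\in\gambles$ gives $h+M\mathbbm{1}\in\nonnegative\subseteq N(\assess)$, while $\tfrac{M}{\epsilon}f\in N(\assess)$ and the nonnegative remainder $-M\mathbbm{1}-\tfrac{M}{\epsilon}f$ yield $-M\mathbbm{1}\in N(\assess)$; adding, $h\in N(\assess)$, so $N(\assess)=\gambles$ and $4$ fails. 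Everything outside the closure step is routine cone bookkeeping, and one could alternatively just cite Theorem~3.8.5 and Claim~3.7.4 of \cite{Walley91} as the authors do, but the argument above is self-contained.
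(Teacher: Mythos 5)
Your proof is correct: the cycle $1\Rightarrow3\Rightarrow2\Rightarrow1$ and the loop $1\Leftrightarrow4$ all check out, and you have correctly isolated the two non-trivial points, namely (i) that passing from $\posi(\assess)$ to $\posi(\nonnegative\cup\assess)$ cannot create a negative gamble (your decomposition $g+h$ with $h\in\nonnegative$ and the monotonicity $g\le g+h$), and (ii) that taking the closure cannot either, which is exactly where openness of $\negative$ in the supremum norm enters. The comparison with the paper is, however, somewhat vacuous at the level of this proposition: the authors give no proof at all, remarking only that the statement ``is an immediate consequence of Theorem 3.8.5 and Claim 3.7.4 in Walley'', so your self-contained argument is a genuine addition rather than a variant. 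It is worth noting that your two key steps reproduce, in the classical setting, precisely the arguments the paper \emph{does} spell out for the P-coherent analogues: the openness-of-the-negative-cone trick is the proof of Proposition~\ref{prop:cohe} (where the paper uses the equivalent ``$f+\delta\in\posi(\cdot)$ for all $\delta>0$'' characterisation of closure points instead of your sequential one --- both are legitimate since $\gambles$ with the sup norm is a metric space), and your fill-up lemma in $4\Rightarrow1$ is the proof of Proposition~\ref{prop:eqq}, whose hypothesis (*) (for every $f$ there is $\epsilon>0$ with $f+\epsilon\in\bnonnegative$) holds automatically in the classical theory because gambles are bounded, which is exactly how you use boundedness of $h$. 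So your route buys a self-contained verification that makes visible why the classical theory satisfies the abstract conditions the paper later needs for $\btheory$, whereas the paper's citation route buys brevity at the cost of hiding the role of the sup-norm topology; the one cosmetic caveat is that your $4\Rightarrow1$ step shows incoherence forces $N(\assess)=\gambles$, which is slightly stronger than the negation of item 4 and could be stated as such.
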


Postulate \ref{eq:sl}, which presupposes postulates \ref{eq:taut} and \ref{eq:NE},  provides the normative definition of TDG, referred to by $\theory$. Based on it, in Subsection \ref{subsec:dual} we derive the axioms of classical, Bayesian, probability theory. 
This is simply based on the fact that, geometrically,  $\domain$ is a closed convex cone.  It is thence clear from the above definition that $\nonnegative$ is the minimal coherent set of desirable gambles. It characterises a state of full ignorance -- a subject
without knowledge about $\omega$ should only  accept nonnegative gambles.
Conversely, a coherent set of desirable gambles is called \emph{maximal} if there is no other coherent set of desirable gambles including it
(see Definition \ref{def:maximal}). 
In terms of rationality, a maximal coherent set of desirable gambles is a set of gambles that Alice cannot extend  by accepting
other gambles while keeping at the same time rationality. It also represents a situation in
which Alice is sure about the state of the system. 

\subsection{Inference in $\theory$}
\label{sec:inferTDG}
In the operational interpretation of $\theory$, agents can buy/sell gambles from/to each other.
Therefore, an agent must be able to determine the selling/buying prices for gambles.
This can be formulated as an inference procedure on  $\domain$. For simplicity, we consider finite sets of assessments, and denote by $| A|$  the cardinality of a set $A$.

\begin{definition}
 Let $\assess$ be a finite set of assessments of desirability, and $N(\assess)$ be a coherent set of
desirable gambles. Given  $f \in \gambles$, we denote with
\begin{equation}
\label{eq:lp}
\begin{aligned}
 \underline{E}(f):=&\sup_{\gamma_0\in \reals, \lambda_i \in \reals^+} \gamma_0\\
 &s.t:\\
 &f(\omega) - \gamma_0  -\sum\limits_{i=1}^{|\mathcal{G}|} \lambda_i g_i(\omega) \geq0~~~~\forall \omega \in \Omega,
\end{aligned}
\end{equation}
the lower prevision of $f$. The upper prevision of $f$ is equal to $\overline{E}(f)=-\underline{E}(-f)$.
\end{definition}

The lower prevision of a gamble is Alice's supremum buying price for $f$, i.e., how much she should pay to buy
the gamble $f$. The upper prevision is Alice's infimum selling price for $f$, i.e., how much she should ask to sell 
the gamble $f$. We will show in Appendix \ref{sec:inferenceCPT} that the lower and upper prevision are just 
 the lower and upper expectation for the gamble $f$.
By exploiting \ref{eq:NE}, we can equivalently rewrite \eqref{eq:lp} as:
\begin{equation}
\label{eq:lpne0}
\begin{aligned}
 \underline{E}(f)=&\sup_{\gamma_0\in \reals, \lambda_i \in \reals^+} \gamma_0\\
 &s.t:\\
 &f - \gamma_0\mathbbm{1}  -\sum\limits_{i=1}^{|\mathcal{G}|} \lambda_i g_i \in \nonnegative,
\end{aligned}
\end{equation}
or equivalently,
\begin{equation}
\label{eq:lpne1}
\begin{aligned}
 \underline{E}(f)=&\sup_{\gamma_0\in \reals} \gamma_0 ~~~s.t.~~~ f- \gamma_0\mathbbm{1}\in N(\assess).
\end{aligned}
\end{equation}

In other words, we have expressed the constraint in the above optimisation problems as a membership.

\subsection{Probabilistic interpretation through duality}\label{subsec:dual}
We aim to show that the \textit{dual} of a coherent set of desirable gambles is a closed convex set
of probability charges:
\begin{equation}
\mathcal{P}=\left\{\mu\in \mathcal{M}^{\geq}: \int \mathbbm{1} d\mu=1,~~\int gd\mu \geq0,~ ~\forall g \in \domain \right\},
\end{equation}
where $\mathcal{M}^{\geq}$ is the set of nonnegative charges.

The key point in the duality proof  is that $\gambles^{\geq}$ (the set of all nonegative gambles (real-valued bounded function) on $\pspace$)
includes indicator functions.\footnote{An indicator function defined on a subset $A\subseteq \pspace$ is a function that is equal to one 
for all elements in $A$ and $0$ for all elements outside $A$.}
This is crucial to prove that the dual of $\domain$ is always included in $\mathcal{M}^{\geq}$. When this is not the case,
the dual of a coherent set of desirable gambles is not anymore a convex set of probabilities.

Note that, equipped with the supremum norm, $\gambles$  constitutes a Banach space, and its topological dual $\mathcal{L}^*$ is the space  of all bounded  functionals on it. We assume the weak${}^*$ topology on $\mathcal{L}^*$.

Let $\mathcal{A}$ be the algebra of subsets of $\Omega$  and $\mu:\mathcal{A} \rightarrow [-\infty,\infty]$ 
denotes a charge: that is $\mu$ is a  finitely additive set function of $\mathcal{A}$  \citep[Ch.11]{aliprantisborder}, \citep{bhaskara1983},
that can take positive and negative values.
We have that 
 every gamble on $\mathcal{L}$ 
 is integrable with respect to any finite charge \citep[Th.11.8]{aliprantisborder}. 
Therefore, for any gamble $g$ and finite charge $\mu$ we can define  $\int gd\mu$, which we can interpret as
a linear functional  $L(\cdot):=\langle \cdot, \mu\rangle$ on  
$g$. 
We denote by $\mathcal{M}$ the set of all finite charges on $\mathcal{L} $ and  by $\mathcal{M}^{\geq}$ the set of nonnegative charges. 
 $\mathcal{M}$ is isometrically isomorphic to $\mathcal{L}^*$. The duality bracket between  $\mathcal{L} $ and $\mathcal{M}$ is given by $\langle f, \mu \rangle := \int f d\mu$, with $f \in \mathcal{L} $ and $\mu \in \mathcal{M}$. 

A linear functional $L$ of gambles is said to be \emph{nonnegative} whenever it  satisfies : $L(g) \geq 0$, for $g \in \nonnegative$. 
A nonnegative linear functional is called a \emph{state} if moreover it preserves the unitary constant gamble. 
In our context, this means $L(\mathbbm{1}) = \langle \mathbbm{1}, \mu\rangle=\int \mathbbm{1} d\mu=1$, i.e., the linear functional is scale preserving. Hence,  the set of states $\mathscr{S}$ corresponds to the closed convex set of  all probability charges.

We define the \emph{dual} of a subset $\domain$ of $\mathcal{L}$ as:\\
\begin{equation}
\domain^\bullet=\left\{\mu\in \mathcal{M}: \int gd\mu \geq0, ~\forall g \in \domain \right\}.
\end{equation}


Similarly, the dual of a subset $\mathcal{R}$ of $\mathcal{M}$ is the set:\\
\begin{equation}
\mathcal{R}^\bullet=\left\{g\in \mathcal{L}: \int gd\mu \geq0, ~\forall \mu \in \mathcal{R}\right\}.
\end{equation}

Note that in both cases $(\cdot)^\bullet$ is always a closed convex cone \citep[Lem.5.102(4)]{aliprantisborder}. Furthermore,  one has that $(\cdot)^\bullet{^\bullet}=(\cdot)$, whenever $(\cdot)$ is a closed convex cone  \citep[Th.5.103]{aliprantisborder}, and   $(\cdot)_1 \subseteq (\cdot)_2$ if and only if $(\cdot)_2^\bullet \subseteq (\cdot)_1^\bullet$ \citep[Lem.5.102(1)]{aliprantisborder}. In particular, whenever $(\cdot)_1$ and $(\cdot)_2$ are closed convex cones, $(\cdot)_1 \subsetneq (\cdot)_2$ if and only if $(\cdot)_2^\bullet \subsetneq (\cdot)_1^\bullet$.

Based on those facts, it is thus possible to verify that the dual of a coherent set of desirable gambles can actually be completely described in terms of a (closed convex) set of states (probability charges). 
In this aim, we start by the following observations.

\begin{proposition}\label{prop:czesc}
 It holds that 
 \begin{enumerate}
   \item $(\mathcal{L})^\bullet=\{0\}$ and $\mathcal{L}=(\{0\})^\bullet$;
 \item $(\nonnegative)^\bullet=\mathcal{M}^{\geq}$ and $\nonnegative=(\mathcal{M}^{\geq})^\bullet$;
  \end{enumerate}
\end{proposition}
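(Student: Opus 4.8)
The plan is to compute each of the four duals directly from the definition of $(\cdot)^\bullet$, leaning on two elementary inputs already available in the excerpt: first, the integration pairing $\langle g,\mu\rangle=\int g\,d\mu$ between $\mathcal{L}$ and $\mathcal{M}$ is nondegenerate, since $\mathcal{M}$ is isometrically isomorphic to $\mathcal{L}^*$; second, both cones in play carry a rich supply of ``test'' elements, namely indicator functions $\mathbbm{1}_A$ inside $\nonnegative$ and point-mass (Dirac) charges $\delta_\omega$ inside $\mathcal{M}^{\geq}$. These test objects are exactly what the paper flags as the reason the dual of a coherent set lands among the probability charges, so they will do the real work here.

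For item~1, to show $(\mathcal{L})^\bullet=\{0\}$ I would take any $\mu\in(\mathcal{L})^\bullet$ and apply the defining inequality $\int g\,d\mu\ge 0$ to both $g$ and $-g$ for every gamble $g\in\mathcal{L}$; this forces $\int g\,d\mu=0$ for all $g$, and nondegeneracy of the pairing then yields $\mu=0$. The reverse inclusion $\{0\}\subseteq(\mathcal{L})^\bullet$ is immediate. For $\mathcal{L}=(\{0\})^\bullet$ one simply notes that integrating any gamble against the zero charge gives $0\ge 0$, so every gamble qualifies; equivalently this is $(\mathcal{L})^{\bullet\bullet}$, which equals $\mathcal{L}$ by the bipolar property $(\cdot)^{\bullet\bullet}=(\cdot)$ recorded above, $\mathcal{L}$ being a closed convex cone.

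For item~2, the equality $(\nonnegative)^\bullet=\mathcal{M}^{\geq}$ splits into two inclusions. The inclusion $\mathcal{M}^{\geq}\subseteq(\nonnegative)^\bullet$ is just monotonicity of the integral: a nonnegative charge integrates a nonnegative gamble to a nonnegative number. Conversely, given $\mu\in(\nonnegative)^\bullet$ and any $A\in\mathcal{A}$, the indicator $\mathbbm{1}_A$ lies in $\nonnegative$, so $\mu(A)=\int \mathbbm{1}_A\,d\mu\ge 0$, whence $\mu\in\mathcal{M}^{\geq}$. Finally $\nonnegative=(\mathcal{M}^{\geq})^\bullet$ can be obtained by the bipolar property, since $(\mathcal{M}^{\geq})^\bullet=((\nonnegative)^\bullet)^\bullet=\nonnegative$ once we observe that $\nonnegative$ is a closed convex cone (a uniform limit of pointwise nonnegative gambles is pointwise nonnegative). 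Alternatively, and symmetrically, I would argue directly: $\nonnegative\subseteq(\mathcal{M}^{\geq})^\bullet$ again by monotonicity, while if $g\notin\nonnegative$ then $g(\omega_0)<0$ for some $\omega_0\in\pspace$, and testing against $\delta_{\omega_0}\in\mathcal{M}^{\geq}$ gives $\int g\,d\delta_{\omega_0}=g(\omega_0)<0$, so $g\notin(\mathcal{M}^{\geq})^\bullet$.

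The computations are routine, so the only genuinely load-bearing points are the two I would be careful to justify cleanly: the nondegeneracy of the pairing (that a finite charge annihilating every bounded gamble must vanish, which is precisely the isomorphism $\mathcal{M}\cong\mathcal{L}^*$), and the verification that the test objects $\mathbbm{1}_A$ and $\delta_{\omega}$ indeed belong to the relevant cones and integrate as claimed. I expect no real obstacle beyond this bookkeeping; the closedness of $\nonnegative$ (needed if one routes the last equality through the bipolar theorem rather than through the Dirac argument) is the one place where I would make the topology explicit.
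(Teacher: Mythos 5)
Your proof is correct and follows essentially the same route as the paper's: the load-bearing step is identical (testing a charge against indicator gambles $I_B\in\nonnegative$ to force $(\nonnegative)^\bullet\subseteq\mathcal{M}^{\geq}$), with the companion equalities in each item obtained via the bipolar identity $(\cdot)^{\bullet\bullet}=(\cdot)$ for closed convex cones, exactly as the paper does. Your direct alternatives (the $\pm g$ argument for $(\mathcal{L})^\bullet=\{0\}$ and the Dirac-charge test $\delta_{\omega_0}$ for $\nonnegative=(\mathcal{M}^{\geq})^\bullet$) are sound but only minor variations on the same idea.
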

\begin{proof} Since $(\cdot)^\bullet{^\bullet}=(\cdot)$, whenever $(\cdot)$ is a closed convex cone, in both cases it is enough to verify only one of the claims.
For the first item, the second claim is immediate.
For the second item, we verify the first claim.
The inclusion from right to left being clear, for the other direction  observe that: (i) $ g=I_{\{B\}}$ (with $I_B$ being the indicator function on $B \in \mathcal{A}$), is a nonnegative gamble and, therefore, is in $\nonnegative$; (ii) if $\mu$ is negative in $B \in \mathcal{A}$, i.e.,  then $\int I_{B}d\mu$ is negative too and, thus, $\mu$ cannot be in $(\nonnegative)^\bullet$.  
\end{proof}
\begin{proposition}\label{prop:czesc2}
Let $\domain $ be a closed convex cone. The following claims are equivalent
 \begin{enumerate}
   \item $\domain$ is coherent;
     \item $\domain\supseteq \nonnegative$ and $\domain \neq \mathcal{L}$;
   \item $(\domain)^\bullet\subseteq \mathcal{M}^{\geq}$ and $(\domain)^\bullet \neq \{0\}$.
  \end{enumerate}
\end{proposition}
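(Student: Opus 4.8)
The plan is to prove the cycle of equivalences by splitting it into two halves: (1)$\Leftrightarrow$(2), which I would handle at the level of gambles and which rests on the fact that $-\mathbbm{1}$ plays the role of the Falsum, and (2)$\Leftrightarrow$(3), which is pure duality and follows mechanically from Proposition~\ref{prop:czesc} together with the properties of $(\cdot)^\bullet$ recalled in Subsection~\ref{subsec:dual}. Throughout I would use that $\domain$ is a closed convex cone (the standing hypothesis) and that, by Definition~\ref{def:avs}, coherence presupposes \ref{eq:taut} and \ref{eq:NE}, so that ``$\domain$ coherent'' unpacks to ``$\nonnegative\subseteq\domain$ and $\negative\cap\domain=\emptyset$''.

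For (1)$\Rightarrow$(2): if $\domain$ is coherent then $\nonnegative\subseteq\domain$ by \ref{eq:taut}, and since $-\mathbbm{1}\in\negative$ while $\negative\cap\domain=\emptyset$ we get $-\mathbbm{1}\notin\domain$, hence $\domain\neq\mathcal{L}$. For the converse (2)$\Rightarrow$(1) the only thing left to check is $\negative\cap\domain=\emptyset$, and here I would argue by contraposition. Suppose some $g\in\negative\cap\domain$, so $\sup g=-\delta<0$ for some $\delta>0$. Then $-\delta\mathbbm{1}-g\geq0$ pointwise, i.e.\ $-\delta\mathbbm{1}-g\in\nonnegative\subseteq\domain$; adding $g$ and using that $\domain$ is a convex cone yields $-\delta\mathbbm{1}\in\domain$ and hence $-\mathbbm{1}\in\domain$. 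Finally, for an arbitrary $h\in\mathcal{L}$ boundedness gives $h+M\mathbbm{1}\in\nonnegative$ for $M$ large, and $h=(h+M\mathbbm{1})+M(-\mathbbm{1})\in\domain$; thus $\domain=\mathcal{L}$, contradicting (2). This establishes the equivalence, the content being exactly the remark after Definition~\ref{def:avs} that incoherence amounts to $-\mathbbm{1}\in\domain$.

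For (2)$\Leftrightarrow$(3) I would translate each of the two conditions through duality, using the three facts recalled in the excerpt: $(\cdot)^{\bullet\bullet}=(\cdot)$ on closed convex cones, the antitone law $A\subseteq B \iff B^\bullet\subseteq A^\bullet$, and its strict version $A\subsetneq B \iff B^\bullet\subsetneq A^\bullet$. Since $\nonnegative$ is a closed convex cone, $\nonnegative\subseteq\domain$ is equivalent to $\domain^\bullet\subseteq\nonnegative^\bullet$, and by Proposition~\ref{prop:czesc}(2) $\nonnegative^\bullet=\mathcal{M}^{\geq}$; this gives the first clause. For the second clause, $\domain\subseteq\mathcal{L}$ always holds, so $\domain\neq\mathcal{L}$ means $\domain\subsetneq\mathcal{L}$, which by the strict antitone law is equivalent to $\mathcal{L}^\bullet\subsetneq\domain^\bullet$; since $\mathcal{L}^\bullet=\{0\}$ by Proposition~\ref{prop:czesc}(1), this reads $\{0\}\subsetneq\domain^\bullet$, i.e.\ $\domain^\bullet\neq\{0\}$. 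Combining the two translations yields (2)$\Leftrightarrow$(3).

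The delicate points I expect to be bookkeeping rather than any deep idea. The main obstacle is the (2)$\Rightarrow$(1) step, where one must propagate a single negative gamble to all of $\mathcal{L}$; the argument hinges on $\nonnegative\subseteq\domain$ (so that $-\delta\mathbbm{1}-g$ and $h+M\mathbbm{1}$ are available) and on closure under positive combinations, so I would invoke \ref{eq:taut} and \ref{eq:NE} explicitly there. A secondary point of care is to keep strict versus non-strict inclusions straight in (2)$\Leftrightarrow$(3): the negation of $\domain=\mathcal{L}$ must be matched with the \emph{strict} duality law, which is precisely why the excerpt records both the monotone and the strict form of the antitone correspondence.
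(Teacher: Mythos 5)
Your proof is correct, and its overall architecture matches the paper's: the same decomposition into (1)$\Leftrightarrow$(2) and (2)$\Leftrightarrow$(3), the same (1)$\Rightarrow$(2), and the same treatment of (2)$\Leftrightarrow$(3) as a mechanical translation through Proposition~\ref{prop:czesc} (the paper dispatches it in one sentence; your explicit appeal to the strict antitone law for matching $\domain\neq\gambles$ with $\{0\}\subsetneq\domain^\bullet$ is exactly what that sentence tacitly uses). The one step where you genuinely diverge is (2)$\Rightarrow$(1). The paper argues dually: given a negative $g\in\domain$, every $\mu\in\domain^\bullet\subseteq\nonnegative^\bullet=\mathcal{M}^{\geq}$ must satisfy $\langle g,\mu\rangle\geq0$, which for a nonnegative charge forces $\mu=0$; hence $\domain^\bullet=\{0\}$, and biduality on the closed convex cone $\domain$ gives $\domain=\{0\}^\bullet=\gambles$, a contradiction. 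You instead stay entirely primal: from $g\in\negative\cap\domain$ with $\sup g=-\delta<0$ you get $-\delta\mathbbm{1}-g\in\nonnegative\subseteq\domain$, hence $-\delta\mathbbm{1}\in\domain$ and $-\mathbbm{1}\in\domain$, and then absorb an arbitrary bounded $h$ via $h=(h+M\mathbbm{1})+M(-\mathbbm{1})$. Both arguments are sound. Yours is the more elementary: it needs only \ref{eq:taut}, \ref{eq:NE}, boundedness of gambles and the cone structure, with no bipolar theorem and no identification $\mathcal{M}\cong\gambles^*$, and it is precisely the ``$-\mathbbm{1}$ as Falsum'' mechanism of Proposition~\ref{prop:coco1}, which the paper imports from Walley rather than reproving. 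What the paper's route buys in exchange is an explicit exhibition of the dual collapse---incoherence annihilates $\domain^\bullet$---which is the fact recycled immediately afterwards in Theorem~\ref{cor:noncoherent} and in the signed-charge characterisation. One bookkeeping remark: when you use the non-strict antitone law as a two-way equivalence ($\nonnegative\subseteq\domain\iff\domain^\bullet\subseteq\nonnegative^\bullet$), the converse direction silently passes through biduality and therefore needs \emph{both} cones closed; you have this, since $\nonnegative$ is closed and $\domain$ is closed by the standing hypothesis, but it deserves an explicit mention.
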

\begin{proof} (2) $\Leftrightarrow$ (3): From Proposition \ref{prop:czesc}, $(\domain)^\bullet = \{0\}$ if and only if $\domain = \mathcal{L}$, and $\domain\supseteq \nonnegative$ if and only if $(\domain)^\bullet\subseteq \mathcal{M}^{\geq}$.
\\ (1) $\Rightarrow$ (2): Assume $\domain$ is coherent. By A.1 $\domain \supseteq \nonnegative$ and by A.2 there is $g \in \mathcal{L}$ such that $\sup g < 0$ and $g \notin \domain$. 
\\ (2) $\Rightarrow$ (1): Let $\nonnegative \subseteq \domain \subsetneq \mathcal{L}$. First of all, notice that, by Proposition \ref{prop:czesc}, $(\domain)^\bullet\subseteq (\nonnegative)^\bullet = \mathcal{M}^{\geq}$. 
Now, assume that $\domain$ is not coherent. This means A.2 fails, i.e. there is $g \in \mathcal{L}$ such that $\sup g < 0$ and $g \in \domain$. 
Hence, consider $\mu \in \mathcal{M}^{\geq}$ and pick $g \in \domain$ such that $\sup g < 0$. It holds that $\langle g, \mu\rangle \geq 0$ if and only if $\mu=0$, meaning that $(\domain)^\bullet=\{0\}$ and therefore, by Proposition \ref{prop:czesc} again, $\domain=\mathcal{L}$, a contradiction.
\end{proof}

Finally, we can prove 

\begin{theorem}
\label{prop:dualcharges0}
The map
\[\domain \mapsto  \mathcal{P}:=\domain^\bullet \cap \mathscr{S} \]
establishes a bijection between coherent sets of desirable gambles and non-empty closed convex sets of states.
\end{theorem}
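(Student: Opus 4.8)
The plan is to prove that the stated map is a bijection by exhibiting its inverse $\mathcal{P}\mapsto\mathcal{P}^\bullet$ and factoring the whole claim through the polar correspondence already established above. Recall that $(\cdot)^\bullet$ sends closed convex cones to closed convex cones, reverses inclusions, and satisfies $(\cdot)^{\bullet\bullet}=(\cdot)$ on closed convex cones; hence $\domain\mapsto\domain^\bullet$ is an inclusion-reversing bijection between the closed convex cones of $\mathcal{L}$ and those of $\mathcal{M}$. By Proposition~\ref{prop:czesc2}, a closed convex cone $\domain$ is coherent exactly when its polar satisfies $\{0\}\neq\domain^\bullet\subseteq\mathcal{M}^{\geq}$. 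Thus coherent sets of desirable gambles are in bijection with the \emph{nontrivial closed convex subcones} $K$ of $\mathcal{M}^{\geq}$, and it remains only to show that $K\mapsto K\cap\mathscr{S}$ is a bijection between such cones and nonempty closed convex sets of states, with inverse $\mathcal{P}\mapsto\posi(\mathcal{P})$. Composing the two bijections yields exactly $\domain\mapsto\domain^\bullet\cap\mathscr{S}=\mathcal{P}$, while $\mathcal{P}^\bullet=(\posi\mathcal{P})^\bullet=(\domain^\bullet)^\bullet=\domain$ identifies the inverse.

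The cone--base correspondence rests on the fact that $\mathscr{S}$ is a compact base of $\mathcal{M}^{\geq}$. First, any nonzero $\mu\in\mathcal{M}^{\geq}$ has $\int\mathbbm{1}\,d\mu>0$, since a nonnegative charge of zero total mass is zero by monotonicity and finite additivity; hence $\mu/(\int\mathbbm{1}\,d\mu)$ is a state, so every ray of $\mathcal{M}^{\geq}$ meets $\mathscr{S}$ in exactly one point. For a nontrivial closed convex subcone $K$ this gives $K\cap\mathscr{S}\neq\emptyset$ and $\posi(K\cap\mathscr{S})=K$ (each nonzero element of $K$ normalises to an element of $K\cap\mathscr{S}$ and is a positive multiple of it, and conversely $\posi(K\cap\mathscr{S})\subseteq K$). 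Since $\mathscr{S}$ and $K$ are weak$^*$-closed and convex, $K\cap\mathscr{S}$ is a nonempty closed convex set of states, so the map $K\mapsto K\cap\mathscr{S}$ is well defined into the target class.

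For the reverse map, $\mathscr{S}$ is weak$^*$-compact: it is a weak$^*$-closed subset of the unit ball of $\mathcal{M}=\mathcal{L}^*$ (each state has norm $L(\mathbbm{1})=1$), which is compact by Banach--Alaoglu. Hence a closed $\mathcal{P}\subseteq\mathscr{S}$ is compact, and because $\mathcal{P}$ is convex one has $\posi(\mathcal{P})=\{\lambda\nu:\lambda\geq0,\ \nu\in\mathcal{P}\}$, the cone over a compact convex base not containing $0$, which is therefore weak$^*$-closed; it is convex and contained in the closed convex cone $\mathcal{M}^{\geq}$, so it is a nontrivial closed convex subcone. Moreover $\posi(\mathcal{P})$ recovers $\mathcal{P}$ on the hyperplane $H=\{\mu:\int\mathbbm{1}\,d\mu=1\}$: a conic combination of states has total mass equal to the sum of its coefficients, so it lies in $H$ iff that sum equals $1$, i.e.\ iff it is a convex combination of elements of $\mathcal{P}$, hence in $\mathcal{P}$; thus $\posi(\mathcal{P})\cap\mathscr{S}=\posi(\mathcal{P})\cap\mathcal{M}^{\geq}\cap H=\mathcal{P}$. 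Together with $\posi(K\cap\mathscr{S})=K$ this shows the two maps are mutually inverse, and injectivity and surjectivity of $\domain\mapsto\mathcal{P}$ follow by composition with the polar bijection and Proposition~\ref{prop:czesc2}.

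The main obstacle is the analytic input rather than the bookkeeping: establishing that $\mathscr{S}$ is a \emph{compact} base of $\mathcal{M}^{\geq}$ and that the cone over a compact convex base not containing $0$ is weak$^*$-closed. These are precisely what guarantee that $\posi(\mathcal{P})$ needs no extra closure and that intersecting it back with $\mathscr{S}$ does not enlarge $\mathcal{P}$; everything else reduces to the bipolar identity $(\cdot)^{\bullet\bullet}=(\cdot)$ together with Propositions~\ref{prop:czesc} and~\ref{prop:czesc2}.
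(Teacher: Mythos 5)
Your proof is correct and takes essentially the same route as the paper: Proposition~\ref{prop:czesc2} together with the bipolar identity reduces the claim to the correspondence between nontrivial closed convex subcones of $\mathcal{M}^{\geq}$ and their bases in $\mathscr{S}$, which is precisely how the paper obtains injectivity (via $\domain=(\mathbb{R}_+\mathcal{P})^\bullet$) and surjectivity. The only difference is that you explicitly establish the weak$^*$-closedness of $\posi(\mathcal{P})=\mathbb{R}_+\mathcal{P}$ (via Banach--Alaoglu compactness of $\mathscr{S}$ and the fact that the cone over a compact convex base not containing $0$ is closed), a step the paper's proof asserts without argument, so your write-up is if anything more complete.
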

\begin{proof}
We want to verify that the map
\[\domain \mapsto  \mathcal{P}:=\domain^\bullet \cap \mathscr{S} \]
establishes a bijection between coherent sets of desirable gambles and non-empty closed convex sets of states.
The proof is analogous to that by \citep[Th.4]{pmlr-v62-benavoli17b}.
 Let $\domain$ be a coherent set of desirable gambles. By Proposition \ref{prop:czesc2}, we get that $\domain^\bullet$ is a closed convex cone included in $\mathcal{M}^{\geq}$ that does not reduce to the origin.  
Thus, after normalisation, $\mathcal{P}$ is nonempty. Preservation of closedness and convexity by finite intersections yields that $\mathcal{P}$ is closed and convex. Furthermore $\mathbb{R}_+\mathcal{P}=\domain^\bullet$, and therefore $\domain= (\mathbb{R}_+\mathcal{P})^\bullet$, where $\mathbb{R}_+\mathcal{P}:=\{ \lambda\mu : \lambda \geq 0, \mu \in \mathcal{P}\}$, meaning that the map is an injection.
We finally verify that the map is also a surjection. To do this, let $\mathcal{P}$ be a non empty closed convex set of probability charges. It holds that $\mathbb{R}_+\mathcal{P}$ is a closed convex cone included in $\mathcal{M}^{\geq}$ different from $\{0\}$. Again by Proposition \ref{prop:czesc2},  we  conclude that the dual $(\mathbb{R}_+\mathcal{P})^\bullet$ of $\mathbb{R}_+\mathcal{P}$ is a coherent set of desirable gambles and $\mathcal{P}=\mathbb{R}_+\mathcal{P}\cap \mathscr{S}= (\mathbb{R}_+\mathcal{P}){{}^\bullet{^\bullet}}\cap \mathscr{S}$. 
\end{proof}

This means that we can write the dual of $\domain$ as the set
\begin{equation}
\label{eq:ddual}
 \mathcal{P}=\left\{\mu\in \states: L_\mu(g) \geq0,~\forall g \in \domain\right\},
\end{equation} 
which is a closed convex-set of probability charges.  We have derived the axioms of probability: a non-negative function that integrates to one.
Moreover, it can be easily proven that $ \mathcal{P}$ is a singleton whenever $\domain$ is a \textit{maximal} coherent set of desirable gambles.
In other words, for \textit{maximal} $\domain$, there is only one probability charge $\mu$ that is compatible with the constraints 
$L_\mu(g) \geq 0$.

As an immediate corollary of the previous results, we finally obtain Theorem \ref{cor:noncoherent}.
It provides us with a characterisation of the dual of a closed convex cone which is not coherent.
\begin{theorem}\label{cor:noncoherent}
Let $\domain $ be a non empty closed convex cone. Then the following are equivalent
\begin{enumerate}
 \item $\domain  \neq \mathcal{L}$  and $\domain$ is not coherent;
 \item 
 $\domain^\bullet\not\subseteq \mathcal{M}^{\geq}$,
 \item $\domain^\bullet \cap \{ \mu \in \mathcal{M} \mid \langle \mathbbm{1}, \mu \rangle =1 \} \not \subseteq \mathscr{S}$.
  \item $\{0\}\subsetneq\domain^\bullet$ and $\domain^\bullet \cap   \mathscr{S}= \emptyset$.
 \end{enumerate} 
\end{theorem}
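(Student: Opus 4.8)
The plan is to translate each of the four conditions into a statement about the dual cone $\domain^\bullet$ and then read them off from Proposition~\ref{prop:czesc}, Proposition~\ref{prop:czesc2} and Theorem~\ref{prop:dualcharges0}. The dictionary I would set up first consists of the biduality identity $\domain^{\bullet\bullet}=\domain$ (valid since $\domain$ is a closed convex cone), the inclusion-reversal $\domain_1\subseteq\domain_2 \Leftrightarrow \domain_2^\bullet\subseteq\domain_1^\bullet$, and the two special duals $\mathcal{L}^\bullet=\{0\}$ and $(\nonnegative)^\bullet=\mathcal{M}^{\geq}$ coming from Proposition~\ref{prop:czesc}. From these, $\domain=\mathcal{L}$ is equivalent to $\domain^\bullet=\{0\}$, so the clause $\domain\neq\mathcal{L}$ in~(1) is exactly $\{0\}\subsetneq\domain^\bullet$, and similarly $\domain\supseteq\nonnegative$ is exactly $\domain^\bullet\subseteq\mathcal{M}^{\geq}$.

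With this dictionary, the equivalence (1)$\Leftrightarrow$(2) is immediate from Proposition~\ref{prop:czesc2}: that result says a closed convex cone is coherent iff $\domain^\bullet\subseteq\mathcal{M}^{\geq}$ and $\domain^\bullet\neq\{0\}$, so under $\domain\neq\mathcal{L}$ (i.e.\ $\domain^\bullet\neq\{0\}$) the failure of coherence is precisely $\domain^\bullet\not\subseteq\mathcal{M}^{\geq}$; moreover $\domain^\bullet\not\subseteq\mathcal{M}^{\geq}$ already forces $\domain^\bullet\neq\{0\}$ since $0\in\mathcal{M}^{\geq}$, so condition~(2) alone captures~(1). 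For (1)$\Leftrightarrow$(4) I would invoke Theorem~\ref{prop:dualcharges0}, which makes $\domain\mapsto\domain^\bullet\cap\mathscr{S}$ a bijection between coherent cones and non-empty closed convex sets of states; reading this correspondence in both directions, a cone with $\{0\}\subsetneq\domain^\bullet$ is coherent iff $\domain^\bullet\cap\mathscr{S}\neq\emptyset$, so non-coherence is $\domain^\bullet\cap\mathscr{S}=\emptyset$, which together with $\{0\}\subsetneq\domain^\bullet$ is exactly~(4).

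The remaining link is (2)$\Leftrightarrow$(3), and this is where the real work sits. Writing $H:=\{\mu\in\mathcal{M}\mid\langle\mathbbm{1},\mu\rangle=1\}$ and using $\mathscr{S}=\mathcal{M}^{\geq}\cap H$, condition~(3) reads $\domain^\bullet\cap H\not\subseteq\mathcal{M}^{\geq}$, so I must show that $\domain^\bullet$ contains a charge outside $\mathcal{M}^{\geq}$ iff it already contains such a charge normalized to unit mass. The forward direction is the crux: given $\nu\in\domain^\bullet\setminus\mathcal{M}^{\geq}$ I would produce a witness in $H$, either by rescaling $\nu$ when $\langle\mathbbm{1},\nu\rangle>0$, or, when $\nu$ has nonpositive mass, by adding a large multiple of $\nu$ to a fixed normalized element of $\domain^\bullet$ and using that $\domain^\bullet$ is a convex cone to stay inside it while driving some value negative. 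The main obstacle is precisely controlling the total mass $\langle\mathbbm{1},\nu\rangle$ of the escaping charge: a nonnegative charge has strictly positive mass, but a general dual charge need not, so the normalization onto the affine slice $H$ is the delicate step, and it is exactly here that one must lean on $\{0\}\subsetneq\domain^\bullet$ and on the decomposition $\mathscr{S}=\mathcal{M}^{\geq}\cap H$. Once this normalization is secured, conditions (2), (3) and (4) collapse to the single statement that $\domain^\bullet$ leaves $\mathcal{M}^{\geq}$, and the four-way equivalence closes.
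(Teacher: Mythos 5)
Your dictionary and the (1)$\Leftrightarrow$(2) step are correct and coincide with the paper's own proof, which likewise quotes Proposition~\ref{prop:czesc2}. The first genuine gap is in (1)$\Leftrightarrow$(4). Theorem~\ref{prop:dualcharges0} gives you only one direction: a coherent cone satisfies $\domain^\bullet\cap\mathscr{S}\neq\emptyset$, which yields (4)$\Rightarrow$(1). The converse you assert --- that $\{0\}\subsetneq\domain^\bullet$ together with $\domain^\bullet\cap\mathscr{S}\neq\emptyset$ forces coherence --- is not contained in that theorem: its surjectivity clause says every non-empty closed convex set of states is the image of \emph{some} coherent cone, not that only coherent cones have states in their dual, and the claim fails for, e.g., $\domain=\{0\}$, whose dual is all of $\mathcal{M}$ and so certainly meets $\mathscr{S}$. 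What (1)$\Rightarrow$(4) actually needs, and what the paper supplies in one line that your proposal never produces, is the negative-gamble argument: incoherence (read, as in the paper's proof of this theorem and in Definition~\ref{def:avs}, as the presence of a sure loss) puts a gamble $g$ with $\sup g<0$ inside $\domain$, and then every probability charge $\mu$ gives $\int g\,d\mu\leq\sup g<0$, so no element of $\mathscr{S}$ can belong to $\domain^\bullet$.

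You also misallocate the work. The paper dismisses (2)$\Leftrightarrow$(3) as obvious ((3)$\Rightarrow$(2) is trivial; for the converse one rescales a signed witness of positive mass, a nonzero nonnegative charge having positive mass automatically), whereas your sketch stalls exactly where you flag it: when the signed witness $\nu$ has nonpositive mass you invoke ``a fixed normalized element of $\domain^\bullet$'' whose existence you never establish, and when $\langle\mathbbm{1},\nu\rangle<0$ adding a large multiple of $\nu$ drives the total mass negative, so rescaling back onto the slice $\langle\mathbbm{1},\cdot\rangle=1$ is only possible for small multiples, which need not make the combination signed. More seriously, your closing claim that (2), (3) and (4) ``collapse to the single statement that $\domain^\bullet$ leaves $\mathcal{M}^{\geq}$'' cannot be right at the level of generality you work at: (4) is universal (no probability charge at all lies in $\domain^\bullet$) while (2)--(3) are existential, and a dual cone can contain both a probability charge and a signed charge --- take $\domain=\{g\in\gambles:\int g\,d\mu=0\}$ for a fixed probability charge $\mu$, so that $\domain^\bullet=\reals\mu$, which contains both $\mu$ and $-\mu$. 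Hence (4) cannot be reached from (2) by normalization alone; it must pass through the coherence characterisation, i.e.\ through the negative-gamble step above, which is precisely how the paper's proof routes it.
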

Essentially, Theorem \ref{cor:noncoherent} is telling us that, from the dual point of view, non degenerated closed convex cones of gambles that are not coherent are characterised by signed (non positive) charges. 

\begin{proofof}{Theorem~\ref{cor:noncoherent}}
Given  a non empty closed convex cone $\domain $, we need to check  that the following claims are equivalent
\begin{enumerate}
 \item $\domain  \neq \mathcal{L}$  and $\domain$ is not coherent;
 \item 
 $(\domain)^\bullet\not\subseteq \mathcal{M}^{\geq}$,
 \item $\domain^\bullet \cap \{ \mu \in \mathcal{M} \mid \langle 1, \mu \rangle =1 \} \not \subseteq \mathscr{S}$.
  \item $\{0\}\subsetneq\domain^\bullet$ and $\domain^\bullet \cap   \mathscr{S}= \emptyset$.
 \end{enumerate} 
Notice that $\domain^\bullet \supseteq \{0\}$, and that, by Proposition  \ref{prop:czesc}, all claims imply $\domain^\bullet \neq \{0\}$ and $\domain \neq \gambles$. The equivalence between (2) and (3) being obvious, the one between (1) and (2) is an immediate consequence of Proposition \ref{prop:czesc2}.  For the equivalence between (1) and (4), first of all notice that, since $\domain $ is a non empty closed convex cone, 
$\{0\}\subseteq\domain^\bullet$ if and only if $\domain  \neq \mathcal{L}$. Now, assume  $\domain$ is not coherent.  This means there is a negative gamble $g$ in  $\domain$. Hence, $\domain^\bullet \cap   \mathscr{S}= \emptyset$ since probability charges do not preserves negative gambles. Finally, if $\domain$ is coherent, by Theorem \ref{prop:dualcharges0},   $\domain^\bullet \cap   \mathscr{S} $ is a non-empty closed convex cone of probability charges. 
\end{proofof}

%
 
 \subsection{Inference in CPT}
 \label{sec:inferenceCPT}
As already mentioned, once we have defined the duality between TDG and probability theory, we can immediately reformulate the lower and upper previsions by means of probabilities.
Indeed, let $\assess$ be a finite set of assessments, and $\domain:= N(\assess)$ be a coherent set of
desirable gambles. Given  $f \in \gambles$, the lower prevision of $f$ defined in Equation \eqref{eq:lp} can also be computed as:
\begin{equation}
\label{eq:lpdual}
\begin{aligned}
 \underline{E}(f):=&\inf_{\mu \in \mathscr{S}} \int_{\Omega} f(\omega) d\mu\\
 &s.t:\\
  &\int_{\Omega} g_i(\omega) d\mu\geq0, \forall ~ g_i\in \mathcal{G}.\\
\end{aligned}
\end{equation}
which is equivalent to
\begin{equation}
\label{eq:lpdual1}
\begin{aligned}
 \underline{E}(f):=&\inf_{\mu \in \mathcal{P}} \int_{\Omega} f(\omega) d\mu.
\end{aligned}
\end{equation}
 The upper prevision of $f$ is defined $\overline{E}(f)=-\underline{E}(-f)$.

Hence, the lower and upper prevision of $f$ w.r.t.\ $\domain$ are just the \textit{lower and upper expectation} of $f$
w.r.t.\ $\mathcal{P}$.
In case $\domain$ is maximal, then $\mathcal{P}$ includes only a single probability and, therefore, in this case:
$$
\underline{E}(f)=\overline{E}(f)=E(f),
$$
i.e., the expectation of $f$.
That is, the solution of \eqref{eq:lpdual} coincides with the expectation of $f$.
We have considered the more general case $\underline{E}(f)\leq \overline{E}(f)$  because, as we have discussed in Section \ref{sec:momentmat}, QT is a theory of ``imprecise'' probability \cite{walley1991}.\footnote{The term ``imprecise'' refers to the fact that the closed convex set $\mathcal{P}$ may not be a singleton, that is the probability may not be ``precisely'' specified. Imprecise probability theory  is also referred as robust Bayesian.}

  \subsection{Relative coherence}\label{app:comp}

   Why do we need $\theory$ if it is dual to classical probability theory?
   The point is that, $\theory$, having the structure of an abstract logic, is independent of the specific contingent properties of the underlying space of gambles. From this perspective, it is thence  more general than probability theory and can be used to model any circumstance in which an agent has to make a rational choice.
   
For instance, we can define the desirability postulates on any vector subspace of $\gambles$ that includes
constant gambles. This is useful in applications where the gambles Alice can examine are a subset of $\gambles$.\footnote{
For example, in Finance the tradable gambles are usually piecewise polynomials.
An European call option on the future value $x$ of an underlying security  with strike $k$ is mathematically expressed by the gamble $\max(x - k, 0)$.}

In what follows we verify under which circumstances the  dual is still the classical theory of probability. Conversely, when we change the notion of being nonnegative,  we can get  weaker theory of probabilities. 

Let  $\gambles_R$ denote a closed linear subspace of $\gambles$ that includes all constant gambles. By  $\nonnegative_R = \nonnegative \cap \gambles_R$ we denote the subset of nonnegative gambles, and by  $\negative_R = \negative \cap \gambles_R$ we denote the subset of negative gambles. 
We thus relativise postulates \ref{eq:taut}--\ref{eq:sl} as follows. 
Firstly, we restrict tautologies, and thus the extension of the predicate being nonnegative, to the considered subspace.
\begin{enumerate}[label=\upshape A$_R$0.,ref=\upshape A$_R$0]
\item\label{eq:tautR} $\nonnegative_R$ should always be desirable.
\end{enumerate}

Secondly, we modify the deductive closure accordingly.
\begin{enumerate}[label=\upshape A$_R$1.,ref=\upshape A$_R$1]
\item\label{eq:NER} $N_R(\assess_R)\coloneqq\cl \posi(\nonnegative_R\cup \mathcal{G}_R)$.
\end{enumerate}
Notice that $N_R(\assess_R)\subseteq \gambles_R$, for $\assess_R \subseteq \gambles_R$. In such case we sometimes denote $N_R(\assess_R)$ by $\domain_R$. 

Hence, finally, we state that: 
\begin{definition}[Relative coherence postulate]
\label{def:avsR}
 A set $\domain_R \subseteq \gambles_R$  is \emph{coherent relative to } $\gambles_R$  if and only if
 \begin{enumerate}[label=\upshape A$_R$2.,ref=\upshape A$_R$2]
\item\label{eq:slR} $ \negative_R \cap \domain_R=\emptyset$.
\end{enumerate}
\end{definition}

It is easy to check that $\theory$ and its relativisation to $\gambles_R$ defined by Postulates \ref{eq:tautR}--\ref{eq:slR}, denoted by $\theory_R$, are fully compatible. That is: 

\begin{theorem}\label{thm:cocore}
Let $\mathcal{G}_R$ be a set of assessments in $\gambles_R$.
$N_R(\mathcal{G}_R)$ is coherent in $\theory_R$ if and only if $N(\mathcal{G}_R)$ is coherent in $\theory$. Moreover $N_R(\mathcal{G}_R)= \gambles_R \cap N(\mathcal{G}_R)$.
\end{theorem}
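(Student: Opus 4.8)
The statement has two parts: the coherence equivalence and the set identity $N_R(\mathcal{G}_R)=\gambles_R\cap N(\mathcal{G}_R)$. The plan is to dispatch the coherence equivalence by an elementary ``finite conic combination'' argument, and then to establish the identity, whose nontrivial inclusion is the real work. First I would record the easy inclusion $N_R(\mathcal{G}_R)\subseteq\gambles_R\cap N(\mathcal{G}_R)$: since $\nonnegative_R\subseteq\nonnegative$ and $\mathcal{G}_R$ is common to both generating sets, monotonicity of $\posi$ and of the closure gives $N_R(\mathcal{G}_R)\subseteq N(\mathcal{G}_R)$; and since $\gambles_R$ is a closed linear subspace containing $\nonnegative_R\cup\mathcal{G}_R$, it contains $\cl\posi(\nonnegative_R\cup\mathcal{G}_R)=N_R(\mathcal{G}_R)$ (closure in $\gambles_R$ and in $\gambles$ coincide because $\gambles_R$ is closed).

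For the coherence equivalence I would use Proposition~\ref{prop:coco1}, whose item~(3) characterises incoherence of $N(\assess)$ by $-\mathbbm{1}\in\posi(\assess\cup\nonnegative)$ --- crucially with the bare cone, no closure. Apply it with $\assess=\mathcal{G}_R$. The direction ``$\theory_R$-incoherent $\Rightarrow$ $\theory$-incoherent'' is immediate from the easy inclusion: any witness $g\in\negative_R\cap N_R(\mathcal{G}_R)$ lies in $\negative\cap N(\mathcal{G}_R)$. For the converse, suppose $N(\mathcal{G}_R)$ is incoherent; then $-\mathbbm{1}=\sum_i\lambda_i g_i+h$ for finitely many $g_i\in\mathcal{G}_R$, $\lambda_i\ge 0$, and $h\in\nonnegative$. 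Because $-\mathbbm{1}$ and all the $g_i$ lie in the subspace $\gambles_R$, the remainder $h=-\mathbbm{1}-\sum_i\lambda_i g_i$ lies in $\gambles_R$ too, whence $h\in\nonnegative\cap\gambles_R=\nonnegative_R$. This exhibits $-\mathbbm{1}\in\posi(\mathcal{G}_R\cup\nonnegative_R)\subseteq N_R(\mathcal{G}_R)$, and since $-\mathbbm{1}\in\negative_R$, the relative theory is incoherent. The finiteness of the conic combination, guaranteed by Proposition~\ref{prop:coco1}(3), is essential here: a mere closure membership would not permit splitting off a remainder inside $\gambles_R$.

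The remaining inclusion $\gambles_R\cap N(\mathcal{G}_R)\subseteq N_R(\mathcal{G}_R)$ is the crux, and I would argue its contrapositive by separation. Fix $g\in\gambles_R$ with $g\notin N_R(\mathcal{G}_R)$. Since $N_R(\mathcal{G}_R)$ is a nonempty closed convex cone in the Banach space $\gambles_R$, standard separation of a point from a closed convex cone yields a continuous $L_R\in\gambles_R^{*}$ with $L_R\ge 0$ on $N_R(\mathcal{G}_R)$ and $L_R(g)<0$; in particular $L_R$ is nonnegative on $\nonnegative_R$ and on each $g_i\in\mathcal{G}_R$. The key step is to extend $L_R$ to a functional on all of $\gambles$ that remains nonnegative on $\nonnegative$: here I would invoke the positive (Krein-type) extension theorem, applicable because the constant $\mathbbm{1}\in\gambles_R$ is an order unit of $\gambles$ (every gamble is bounded). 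The extension $L$ is automatically continuous, since a positive functional relative to an order unit satisfies $|L(f)|\le L(\mathbbm{1})\,\lVert f\rVert_\infty$, and it agrees with $L_R$ on $\gambles_R$. Then $L\ge 0$ on $\nonnegative\cup\mathcal{G}_R$, hence on $\posi(\nonnegative\cup\mathcal{G}_R)$ and, by continuity, on $N(\mathcal{G}_R)$; but $L(g)=L_R(g)<0$, so $g\notin N(\mathcal{G}_R)$, which is the contrapositive.

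I expect this positive-extension step to be the main obstacle: plain Hahn--Banach only preserves a dominating sublinear bound, and one must genuinely use the order-unit structure of $\gambles$ --- equivalently, the duality with nonnegative charges of Theorem~\ref{prop:dualcharges0} --- to guarantee that the separating functional is realised by a \emph{nonnegative} charge on the whole space. Everything else (monotonicity, the finite-combination splitting, and continuity of positive functionals) is routine. As a consistency check, once the identity is in hand the coherence equivalence also follows a posteriori from $\negative_R=\negative\cap\gambles_R$ together with the falsum reduction $-\mathbbm{1}\in N(\mathcal{G}_R)$, confirming that the two parts of the statement agree.
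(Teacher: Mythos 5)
Your proof is correct, and it follows exactly the route the paper itself indicates: Theorem~\ref{thm:cocore} is stated without an explicit proof, but the paragraph immediately following it invokes precisely your key step---the Riesz--Kantorovich (Krein-type) extension of a positive functional from $\gambles_R$ to $\gambles$, available because $\gambles_R$ contains the constants and $\mathbbm{1}$ is an order unit of $\gambles$. Your remaining ingredients (the finite conic-combination splitting via Proposition~\ref{prop:coco1}(3) for the coherence equivalence, and Hahn--Banach separation of a point from the closed convex cone $N_R(\mathcal{G}_R)$ before extending the separating functional) are sound and correctly fill in the details the paper leaves implicit.
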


 
 Finally, we compare inference in theory $\theory_R$ with inference in the classical theory $\theory$.
 
 Let $\assess$ be a finite set of assessment in $\gambles_R$, and $N_R(\assess)$ be a coherent set of
desirable gambles in $\theory_R$. The lower prevision of a gamble $f \in \gambles_R$  is defined  as
\begin{equation}
\label{eq:lpner}
\begin{aligned}
 \underline{E}_R(f):=&\sup_{\gamma_0\in \reals, \lambda_i \in \reals^+} \gamma_0\\
 &s.t:\\
 &f - \gamma_0\mathbbm{1}_R  -\sum\limits_{i=1}^{|\mathcal{G}|} \lambda_i g_i \in \nonnegative_R,
\end{aligned}
\end{equation}
where $\mathbbm{1}_R$ denotes the unitary gamble in $\gambles_R$, i.e.,  $\mathbbm{1}_R(\omega)=1$ for all $\omega \in  \Omega$.
The upper prevision is denoted as $ \overline{E}_R(f)=-\underline{E}_R(-f)$. Comparing \eqref{eq:lpne0} and \eqref{eq:lpner}, the reader can notice that
 $f - \gamma_0\mathbbm{1}  -\sum_{i=1}^{|\mathcal{G}|} \lambda_i g_i \in \nonnegative$ in  \eqref{eq:lpne0}
becomes $f - \gamma_0\mathbbm{1}_R  -\sum_{i=1}^{|\mathcal{G}|} \lambda_i g_i \in \nonnegative_R$ in \eqref{eq:lpner}.

Since the inclusion in $\nonnegative$ is a weaker constraint then the inclusion in $\nonnegative_R$,
we have that $\underline{E}_R(f)\leq \underline{E}(f)$.
However, since $\gambles_R$ is a linear subspace of $\gambles$ that includes all constant gambles, with $\nonnegative_R = \nonnegative \cap \gambles_R$, and $N_R(\assess)$ is coherent (in $\theory$), we therefore have that, for every $f \in \gambles_R$
$$
\stackrel{\substack{\theory_R\\~}}{\underline{E}_R(f)} ~~=~~ \stackrel{\substack{\theory\\~}}{\underline{E}(f)} ~~\leq ~~  \stackrel{\substack{\theory\\~}}{\overline{E}(f)}  ~~=~~ \stackrel{\substack{\theory_R\\~}}{\overline{E}_R(f)},
$$
meaning that $\underline{E}_R(f)$ can be interpreted as a lower expectation. In particular, this tell us that, from an operational point of view, $\theory_R$ works exactly as $\theory$.

The same can also be observed by looking at characteristics of the dual of a coherent set in $\theory_R$.
Indeed,   assume $\gambles_R$ is a closed subspace of $\gambles$. Then, endowed with the relative topology, the continuous linear functions on $\gambles_R$ are exactly the restriction to $\gambles_R$ of the continuous linear functionals on $\gambles$  \citep[Theorem 5.87]{aliprantisborder}. 
As $\gambles_R$ includes all constant gambles, by the  Riesz-Kantorovich extension theorem, any positive functional on $\gambles_R$ can be extended (possibly non uniquely) to a positive functional on $\gambles$. This means that, given the correspondence in Theorem \ref{thm:cocore},  the dual of a set $\domain_R$ coherent in $\theory_R$ can be identified with the closed convex set of probability charges $(N(\domain_R))^\bullet$.

To conclude, whenever the subspace $\gambles_R$ is clear and Theorem \ref{thm:cocore} holds, we can identify $\theory$ and $\theory_R$.

 \subsection{P-coherence and Theorem \ref{th:fundamental}}\label{app:complex}
 

 In what follows we provide some characterisation of P-coherence.

  \begin{proposition}\label{prop:cohe}
  Assume that $\bnonnegative$ is closed (and $\bnegative$ is not empty).
Let $\assess \subseteq \gambles_R$ a set of assessments. The following are equivalent
\begin{enumerate}
 \item $-\mathbbm{1}_R \notin \posi ( \assess \cup \bnonnegative)$
    \item $\posi ( \assess \cup \bnonnegative) \cap \bnegative = \emptyset$
     \item $N_B(\assess)$ is P-coherent
\end{enumerate}
  \end{proposition}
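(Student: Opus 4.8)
The plan is to introduce the convex cone $P\coloneqq\posi(\assess\cup\bnonnegative)$, so that $\bnonnegative\subseteq P$, $P$ is closed under addition and positive scaling, and $N_B(\assess)=\cl P$ by the very definition of $N_B$. With this notation the three claims read $-\mathbbm{1}_R\notin P$, $P\cap\bnegative=\emptyset$, and $\cl P\cap\bnegative=\emptyset$. I would establish the equivalences in two blocks: the purely topological half $(2)\Leftrightarrow(3)$, and the conic half $(1)\Leftrightarrow(2)$, since together these give the full chain.

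For $(2)\Leftrightarrow(3)$: the implication $(3)\Rightarrow(2)$ is immediate because $P\subseteq\cl P=N_B(\assess)$. For $(2)\Rightarrow(3)$ I would exploit that $\bnegative=\mathrm{Int}(-\bnonnegative)$ is \emph{open} (this is where the closedness of $\bnonnegative$ pays off, guaranteeing $\bnegative$ is genuinely the interior of a closed set): if some $g$ belonged to $\cl P\cap\bnegative$, then an open neighbourhood $U\subseteq\bnegative$ of $g$ would have to meet $P$ since $g\in\cl P$, yielding a point of $P\cap\bnegative$ and contradicting $(2)$. For the conic half, $(1)\Rightarrow(2)$: suppose toward a contradiction that some $g\in P\cap\bnegative$. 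Then $-g\in\mathrm{Int}(\bnonnegative)$, so a small ball around $-g$ lies in $\bnonnegative$, and in particular $-g-\epsilon\mathbbm{1}_R\in\bnonnegative\subseteq P$ for some $\epsilon>0$. Adding the two members $g$ and $-g-\epsilon\mathbbm{1}_R$ of the cone $P$ gives $-\epsilon\mathbbm{1}_R\in P$, and rescaling by $1/\epsilon>0$ gives $-\mathbbm{1}_R\in P$, contradicting $(1)$. Finally $(2)\Rightarrow(1)$ would use that $\mathbbm{1}_R$ is an \emph{interior} point of $\bnonnegative$ (the standing assumption that $\bnonnegative$ contains all positive constant gambles, which in the Hermitian sum-of-squares cone are positive-definite and hence interior), so that $-\mathbbm{1}_R\in\mathrm{Int}(-\bnonnegative)=\bnegative$; the disjointness $P\cap\bnegative=\emptyset$ then forces $-\mathbbm{1}_R\notin P$.

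The only genuinely delicate point is $(2)\Rightarrow(1)$: it requires $-\mathbbm{1}_R\in\bnegative$, and hence that the positive constants lie in the \emph{interior} of $\bnonnegative$ rather than merely in $\bnonnegative$. This is not a cosmetic distinction—one can check on a toy cone in $\reals^2$ (e.g.\ the first quadrant with $\mathbbm{1}_R$ taken on an edge) that if $\mathbbm{1}_R$ were only a boundary point then $(2)$ can hold while $-\mathbbm{1}_R\in P$, so $(1)$ fails. Thus the implicit hypothesis that the constant gamble is an interior point of $\bnonnegative$ is exactly what makes $-\mathbbm{1}_R$ a bona fide falsum here, mirroring the role of $-\mathbbm{1}$ in Proposition~\ref{prop:coco1}. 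Everything else is routine cone-and-topology bookkeeping, so I expect this interiority requirement to be the main thing to pin down carefully.
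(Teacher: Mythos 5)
Your proof is correct and follows essentially the same route as the paper's own argument: the same easy chain $(3)\Rightarrow(2)$ via $\posi(\assess\cup\bnonnegative)\subseteq N_B(\assess)$, the same use of openness of $\bnegative$ for $(2)\Rightarrow(3)$, and the identical $\epsilon\mathbbm{1}_R$-shift trick (add $g\in P\cap\bnegative$ to $-g-\epsilon\mathbbm{1}_R\in\bnonnegative$ and rescale) for the conic half. What you add is worth keeping: the paper simply asserts ``(2) implies (1)'' from the inclusion chain, tacitly using $-\mathbbm{1}_R\in\bnegative$, which is exactly the interiority of $\mathbbm{1}_R$ in $\bnonnegative$ that you isolate; this holds in the QT instance (where $\mathbbm{1}_R$ corresponds to $I\succ 0$, interior to the PSD cone, the parametrisation being injective by Lemma~\ref{lem:uniqueX}) and follows from condition (*) of Proposition~\ref{prop:eqq}, but it does \emph{not} follow from the hypotheses as literally stated (``$\bnonnegative$ closed, $\bnegative$ nonempty''), as your quadrant toy example shows---and indeed in the real sum-of-squares theory of Section~\ref{sec:ent_not_only} the constant gamble is \emph{not} interior (every Gram matrix of a constant has zero diagonal entries at the top-degree monomials, hence is singular), so there $-\mathbbm{1}_R\in\posi(\{-\mathbbm{1}_R\}\cup\bnonnegative)$ while that cone misses $\bnegative$, and $(2)\not\Rightarrow(1)$. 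A further small point in your favour: your neighbourhood argument for $(2)\Rightarrow(3)$ is cleaner than the paper's, which characterises membership in the closure via constant shifts $f+\delta\mathbbm{1}_R\in P$ for all $\delta>0$---a characterisation that itself quietly relies on the same order-unit/interiority property of $\mathbbm{1}_R$. So your explicit hypothesis is a genuine sharpening of the statement, not a deviation from its proof.
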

  \begin{proof}
Next, for the equivalence between items (1)-(3), first of all, notice that $\posi ( \assess \cup \bnonnegative) \subseteq \cl \posi ( \assess \cup \bnonnegative) \subseteq N_B(\assess)$. Hence, (3) implies (2) implies (1). For the remaining implications we reason as follows. Assume that (2) holds, and assume $f + \delta \in \posi (\assess \cup \bnonnegative)$, for every $\delta > 0$. This means $f \in \cl(\posi (\assess \cup \bnonnegative))$.  Suppose $f \in \bnegative$, then since $\bnegative$ is open, $f + \delta \in \bnegative$ for some $\delta > 0$,
contradicting P-coherence of $\posi ( \assess \cup \bnonnegative)$. We therefore conclude that $f \notin \bnegative$, and that (3) holds. 
Now, assume (2) does not hold, i.e. $f \in \bnegative$ and $f \in  \posi ( \assess \cup \bnonnegative)$. Hence, $-f$ is in the interior of $\bnonnegative$, meaning that for some $\delta > 0$, $-f - \delta = g \in \bnonnegative$. From this we get that $-\mathbbm{1}_R = \frac{g + f}{\delta} \in  \posi ( \assess \cup \bnonnegative)$: (1) does not hold.
  \end{proof}
  
  Analogously to $\theory$, one can ask if and when $N_B$ is a closure operator whose class of non-trivial closed sets coincide with the P-coherent sets, or stated otherwise, if and when  $N_B$ associates to each $\bdomain \subseteq \gambles_R$ the intersection of all P-coherent sets that include $\bdomain$. It turns out that we need to add some conditions to the structural properties of $\btheory$ to obtain such property:
  
  \begin{proposition}\label{prop:eqq}
Assume that $\bnonnegative$ satisfies
\begin{description}
\item[(*)] for every $f \in \gambles_R$, there is $\epsilon > 0$ such that $f+\epsilon \in \bnonnegative$.
\end{description}
Let $\assess \subseteq \gambles_R$ a set of assessments. The following are equivalent
\begin{enumerate}
     \item $N_B(\assess)$ is P-coherent
    \item $N_B(\assess) \neq \gambles_R$
\end{enumerate}
  \end{proposition}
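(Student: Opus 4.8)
The plan is to prove the two implications separately, using Proposition~\ref{prop:cohe} for the coherence side and condition~(*) for the ``whole space'' side. Throughout I would exploit that $N_B(\assess)=\cl\posi(\bnonnegative\cup\assess)$, being the closure of a conic hull, is a closed convex cone that contains $\bnonnegative$; in particular it is closed under addition and under multiplication by nonnegative scalars. A preliminary step is needed, namely to secure that $\bnegative\neq\emptyset$: without it P-coherence would hold vacuously while $N_B(\assess)$ could still equal $\gambles_R$, so the equivalence would fail. Establishing $\bnegative\neq\emptyset$ from~(*) is, I expect, the main obstacle.

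For that preliminary step I would argue that $\bnonnegative$ has nonempty interior by a Baire category argument. Since $\bnonnegative$ is a closed convex cone containing the positive constants, condition~(*) (reading $f+\epsilon$ as $f+\epsilon\mathbbm{1}_R$) upgrades to a covering by integer translates: if $f+\epsilon\mathbbm{1}_R\in\bnonnegative$ for some $\epsilon>0$, then adding the nonnegative constant $(\lceil\epsilon\rceil-\epsilon)\mathbbm{1}_R$ keeps us inside the cone, so $f+\lceil\epsilon\rceil\mathbbm{1}_R\in\bnonnegative$. Hence $\gambles_R=\bigcup_{n\geq1}(\bnonnegative-n\,\mathbbm{1}_R)$ exhibits the complete space $\gambles_R$ as a countable union of closed sets. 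By Baire's theorem one of these translates, and therefore $\bnonnegative$ itself, has nonempty interior, so $\bnegative=\text{Interior}(-\bnonnegative)\neq\emptyset$. (If one prefers, $\bnegative\neq\emptyset$ may instead be carried over as the standing hypothesis already present in Proposition~\ref{prop:cohe}.)

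With $\bnegative\neq\emptyset$ in hand, the implication $(1)\Rightarrow(2)$ is immediate: if $N_B(\assess)$ is P-coherent then $\bnegative\cap N_B(\assess)=\emptyset$, and a nonempty $\bnegative$ then forces $N_B(\assess)\neq\gambles_R$. The implication $(2)\Rightarrow(1)$ is where~(*) does the essential work, and I would prove it by contraposition. Suppose $N_B(\assess)$ is not P-coherent; by Proposition~\ref{prop:cohe} this is equivalent to $-\mathbbm{1}_R\in\posi(\assess\cup\bnonnegative)\subseteq N_B(\assess)$. Taking an arbitrary $f\in\gambles_R$, condition~(*) supplies $\epsilon>0$ with $f+\epsilon\mathbbm{1}_R\in\bnonnegative\subseteq N_B(\assess)$, while scaling $-\mathbbm{1}_R$ by $\epsilon$ gives $-\epsilon\mathbbm{1}_R\in N_B(\assess)$; adding these two elements of the cone yields $f=(f+\epsilon\mathbbm{1}_R)+(-\epsilon\mathbbm{1}_R)\in N_B(\assess)$. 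As $f$ was arbitrary, $N_B(\assess)=\gambles_R$, so $(2)$ fails. Combining the two implications gives the stated equivalence; the conceptual content is that a single negative constant in the deductive closure, together with the absorbing property~(*), collapses the closure to the whole space.
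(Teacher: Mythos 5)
Your proof is correct and takes essentially the same route as the paper's: the substantive direction $(2)\Rightarrow(1)$ is proved by contraposition exactly as there, using Proposition~\ref{prop:cohe} to place $-\mathbbm{1}_R$ (hence every $-\epsilon\mathbbm{1}_R$) in $N_B(\assess)$ and then absorbing an arbitrary $f$ via (*) and closure under linear combinations. Your preliminary Baire-category derivation of $\bnegative\neq\emptyset$ from (*) is a correct and worthwhile supplement (valid since $\gambles_R$ is Banach and $\bnonnegative$ is a closed convex cone containing the positive constants), making explicit what the paper only carries as the standing hypothesis ``$\bnegative$ is not empty'' of Proposition~\ref{prop:cohe}, on which the unproved direction $(1)\Rightarrow(2)$ silently relies.
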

\begin{proof}
Since (1) implies (2), we need to verify the other direction. Assume $N_B(\assess)$ is not P-coherent. By Proposition \ref{prop:cohe}, $-1 \in N_B(\assess)$, and thus $- \epsilon \in N_B(\assess)$, for every $\epsilon \geq 0$. Let $f \in \gambles_R$. By (*) there is $\epsilon > 0$ such that $f+\epsilon \in \bnonnegative \subseteq N_B(\assess)$.  Hence, by closure under linear combinations,  $f + \epsilon + (- \epsilon) = f \in  N_B(\assess)$.
\end{proof}
  Notice that condition (*) is satisfied by the P-coherent model $\btheory$ of QM introduced in Section \ref{sec:qt}, as well as by the model of Section \ref{sec:ent_not_only}. On the other hand, (*) is not the only condition to force the equivalence between the two claims in Proposition \ref{prop:eqq}. 
  We plan for future work to study the natural condition on $\bnonnegative$ related to the property for $N_B$ to be the closure operator induced by P-coherent sets.

    In what follows, we prove the main theorem of the paper.
  
 \begin{proofof}{Theorems \ref{th:fundamental}}
  Assume that $\gambles_R$ includes all positive constant gambles and that $\bnonnegative$ is closed (in $\gambles_R$). 
Let $\bdomain \subseteq \gambles_R$ be a P-coherent set of  desirable gambles. We have to verify that the following statements are equivalent:
\begin{enumerate}
   \item $\bdomain$ includes a negative gamble that is not in $\bnegative$.
\item $\posi(\nonnegative\cup \mathcal{G})$ is incoherent, and thus $\mathcal{P}$ is empty
    \item $\bdomain^{\circ}$ is not (the restriction to  $\gambles_R$ of) a closed convex set of mixtures of classical evaluation functionals. 
    \item The extension   $ \bdomain^\bullet$ of $\bdomain^{\circ}$ in the space $\mathcal{M}$ of all charges in $\pspace$ includes only signed charges (quasi-probabilities).
\end{enumerate}
  First of all, notice that the restriction to $\gambles_R$ of the set of all normalised charges that correspond to a bounded linear functionals coincides with $\bdomain^{\bullet}$. Given this, the equivalence between (3) and (4) is immediate, whereas the equivalence between (2) and (4) is given by Theorem \ref{cor:noncoherent}.  We finally verify the equivalence between (1) and (3). In this case, the direction from left to right being obvious, the other direction is due to the fact that $g \leq f$, for every $g \in \bdomain$ and $f \in \posi(\nonnegative\cup \bdomain)\setminus \bdomain$.
  \end{proofof}
  
  The next result provides a necessary and  sufficient condition for the existence of a P-coherent set of  desirable gambles satisfying each claim of  Theorem \ref{th:fundamental}.
    \begin{proposition}\label{prop:funda}
  Assume that $\gambles_R$ includes all positive constant gambles and $\bnonnegative$ is closed (in $\gambles_R$). The following two claims are equivalent
  \begin{itemize}
  \item there is a P-coherent set of  desirable gambles $\bdomain \subseteq \gambles_R$ that includes a negative gamble that is not P-negative
  \item $\bnonnegative \subsetneq \nonnegative_R$
  \end{itemize}
  \end{proposition}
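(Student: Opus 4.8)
The plan is to exploit that every P-nonnegative gamble is in particular nonnegative, so that $\bnonnegative \subseteq \nonnegative_R$ holds unconditionally and the claim ``$\bnonnegative \subsetneq \nonnegative_R$'' means simply $\bnonnegative \neq \nonnegative_R$. Throughout I will use the standing assumptions that $\bnonnegative$ is a closed convex cone containing the positive constant gambles and that $\bnegative\neq\emptyset$, exactly as required by Proposition~\ref{prop:cohe}. The direction asserting that the existence of a suitable $\bdomain$ forces $\bnonnegative\subsetneq\nonnegative_R$ I would prove by contraposition: if $\bnonnegative=\nonnegative_R$ then $-\bnonnegative=-\nonnegative_R$ is the cone of nonpositive gambles, and its interior in the supremum norm of $\gambles_R$ is precisely the set of strictly negative gambles, whence $\bnegative=\mathrm{Int}(-\bnonnegative)=\negative_R$. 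In that case every negative gamble is already P-negative, so no set of desirable gambles whatsoever --- let alone a P-coherent one --- can contain a negative gamble lying outside $\bnegative$.

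For the converse I would construct the required model explicitly. Fix $g_0\in\nonnegative_R\setminus\bnonnegative$ and consider the shifted family $g_0+t$, $t\in\reals$. Put $S:=\{t\in\reals: g_0+t\in\bnonnegative\}$. Since $\bnonnegative$ is closed, $S$ is closed; since $\bnonnegative$ is a convex cone containing the positive constants, $S$ is upward closed; and since $g_0\notin\bnonnegative$ we have $0\notin S$. Hence $S$ is either empty or a half-line $[\tau,\infty)$ with $\tau>0$. I set $\delta:=\tau$ in the latter case and $\delta:=1$ (any positive number) in the former, and define the candidate negative gamble $f:=-(g_0+\delta)$ together with $\bdomain:=N_B(\{f\})=\cl\,\posi(\bnonnegative\cup\{f\})$.

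It would then remain to check three points. First, $f\in\negative_R$, because $\inf(g_0+\delta)\ge\delta>0$ forces $\sup f\le-\delta<0$. Second, $f\notin\bnegative$, which is the statement $g_0+\delta\notin\mathrm{Int}(\bnonnegative)$; this holds because $g_0+\delta$ is a limit of the gambles $g_0+t$ with $t<\delta$, none of which lies in $\bnonnegative$ by the choice of $\delta$. Third, $\bdomain$ is P-coherent: by Proposition~\ref{prop:cohe} this is equivalent to $-\mathbbm{1}_R\notin\posi(\bnonnegative\cup\{f\})$, and since $\bnonnegative$ is a cone any representation of $-\mathbbm{1}_R$ there would give $-f-c\in\bnonnegative$ for some $c>0$, i.e.\ $g_0+(\delta-c)\in\bnonnegative$ with $\delta-c<\delta$ --- impossible, again by the choice of $\delta$. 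As $f\in\bdomain$ is a negative gamble not in $\bnegative$, the first claim follows. Equivalently, taking $\mathcal{G}=\{f\}$ one sees that $-\mathbbm{1}_R=\tfrac1\delta f+\tfrac1\delta g_0\in\posi(\nonnegative_R\cup\{f\})$, so $\posi(\nonnegative_R\cup\mathcal{G})$ is classically incoherent and Theorem~\ref{th:fundamental} applies to the P-coherent set $\bdomain$.

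The one delicate point, and the main obstacle, will be the calibration of the shift $\delta$: I must place $-f=g_0+\delta$ exactly on the boundary of $\bnonnegative$, so that simultaneously $f$ fails to be P-negative (which needs $g_0+\delta\notin\mathrm{Int}(\bnonnegative)$) while no strictly positive constant can be subtracted from $-f$ without leaving $\bnonnegative$ (which is what P-coherence of $N_B(\{f\})$ demands). The closedness of $\bnonnegative$ and the resulting ray structure of $S$ are precisely what guarantee that such a boundary value $\delta$ exists.
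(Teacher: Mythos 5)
Your argument is correct, and there is in fact nothing in the paper to measure it against: Proposition~\ref{prop:funda} is stated in the appendix immediately after the proof of Theorem~\ref{th:fundamental} and is given no proof at all, so your write-up effectively supplies the missing argument. The contrapositive direction is sound: since $\gambles_R$ contains the constants, the interior of $-\nonnegative_R$ in the supremum norm is exactly $\negative_R$, so $\bnonnegative=\nonnegative_R$ would force $\bnegative=\negative_R$, and then no set whatsoever---P-coherent or not---could contain a negative gamble outside $\bnegative$. The constructive direction is also sound, and the ray $S=\{t\in\reals: g_0+t\in\bnonnegative\}$ is the right device: closedness of $\bnonnegative$ makes $S$ closed, the convex-cone structure together with the positive constants makes $S$ upward closed, and $0\notin S$ (indeed $S\subseteq(0,\infty)$, since $t\le 0$ in $S$ would put $g_0$ itself in $\bnonnegative$), so $S$ is empty or of the form $[\tau,\infty)$ with $\tau>0$. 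With $\delta$ calibrated as you do, $f=-(g_0+\delta)$ satisfies $\sup f\le-\delta<0$; it fails to be P-negative because $g_0+\delta$ is the sup-norm limit of the gambles $g_0+t$, $t<\delta$, none of which lies in $\bnonnegative$; and $N_B(\{f\})$ is P-coherent by Proposition~\ref{prop:cohe}, because a representation $-\mathbbm{1}_R=\lambda f+h$ with $\lambda>0$ and $h\in\bnonnegative$ would give $g_0+(\delta-\frac{1}{\lambda})=h/\lambda\in\bnonnegative$, contradicting the choice of $\delta$, while $\lambda=0$ would put $-\mathbbm{1}_R$ in $\bnonnegative\subseteq\nonnegative_R$, which is absurd.

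Two housekeeping points, neither fatal. First, you lean on standing assumptions of the framework that are not literally among the hypotheses of Proposition~\ref{prop:funda}: that $\bnonnegative$ is a convex cone (needed both for the upward closure of $S$ and for collapsing a generic element of $\posi(\bnonnegative\cup\{f\})$ into a single $\lambda f+h$ with $h\in\bnonnegative$), that $\bnonnegative\subseteq\nonnegative_R$ (needed to read the second bullet as mere inequality), and that $\bnegative\neq\emptyset$ (required by Proposition~\ref{prop:cohe}). The paper does assume all of these throughout---it introduces $\bnonnegative$ as a subset of $\nonnegative_R$ and manipulates it as a closed convex cone---so declaring them explicitly, as you did, is the right move; note also that if $\bnegative=\emptyset$ the equivalence holds trivially, since then every set is P-coherent while $\bnonnegative=\nonnegative_R$ would force $\bnegative=\negative_R\ni-\mathbbm{1}_R$. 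Second, observe that $0\in\bnonnegative$ (it is the limit of the positive constants, and $\bnonnegative$ is closed); this silently covers the degenerate representation $-\mathbbm{1}_R=\lambda f$ with no $\bnonnegative$-summand in your coherence check, and also rules out $g_0$ being a nonnegative constant. Your closing remark that $-\mathbbm{1}_R=\frac{1}{\delta}f+\frac{1}{\delta}g_0\in\posi(\nonnegative_R\cup\{f\})$ is a nice consistency check: it exhibits the classical incoherence directly, so your $\bdomain$ is precisely a witness for the situation described by Theorem~\ref{th:fundamental}.
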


\subsection{Duality in QT}
\label{sec:dualQM}
Recall from Section \ref{sec:comp} that the set $\left\{ L \in \gambles_R^* \mid L(g)\geq0, ~~ L(\mathbbm{1}_R)=1,~\forall g \in \bdomain\right\}$ is the dual of $\bdomain\subset \gambles_{R}$. 

The monomials $\otimes_{j=1}^m x_j$ form a basis of the space $\gambles_{R}$. Define the Hermitian matrix of scalars
\begin{equation}
\label{eq:linearoperator}
Z:= L\left((\otimes_{j=1}^m x_j)(\otimes_{j=1}^m x_j)^\dagger\right),
\end{equation} 
and let $\{z_{ij}\}\in \complex^{d}$, with $d=\frac{n(n+1)}{2}$ and $n=\prod_{j=1}^m n_j$, be the vector of variables obtained  by taking  the elements 
of the upper triangular part of $Z$.
Given any gamble $g$, we can therefore rewrite $ L(g)$ as a function of the vector  $\{z_{ij}\}\in \complex^{d}$. This means that the dual space $\gambles_{R}^*$ is isomorphic to
$\complex^{d}$, and we can thence define the dual maps $(\cdot)^\circ$ between $\gambles_{R}$ and $\complex^{d}$ as follows.
\begin{definition}\label{def:dual0}
Let $\bdomain$ be a closed convex cone in $\gambles_{R}$. Its dual cone is defined as 
\begin{equation}
\label{eq:dualM0}
\bdomain^\circ=\left\{{z} \in \complex^{{d}}:  L(g)\geq0, ~\forall g \in \bdomain\right\},
\end{equation} 
where $ L(g)$ is completely determined by $\{z_{ij}\}$ via the definition \eqref{eq:linearoperator}.
\end{definition}


In discussing properties of the dual space, we need the following well-known result from linear algebra:
\begin{lemma}\label{lem:TR}
For any $M \in H^{d\times d}$ and $v \in \complex^{d}$, it holds that
\begin{equation}
\label{eq:matrixTR}
 Tr(M (v v^\dagger)) = Tr((v v^\dagger)M) =  v^\dagger M v.
\end{equation}
\end{lemma}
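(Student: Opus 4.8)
The plan is to obtain both equalities from two elementary facts: the cyclic invariance of the trace and a coordinate expansion of the rank-one matrix $vv^\dagger$. First I would invoke the identity $Tr(AB)=Tr(BA)$, valid for any two matrices $A,B$ of compatible dimensions; taking $A=M$ and $B=vv^\dagger$ gives the first equality $Tr(M(vv^\dagger))=Tr((vv^\dagger)M)$ with no further work. This step is immediate and requires no hypothesis on $M$ whatsoever.

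For the second equality I would compute the trace entrywise. Writing $v=(v_1,\dots,v_d)^T$, the matrix $vv^\dagger$ has entries $(vv^\dagger)_{ij}=v_i\overline{v_j}$, whence
\[
Tr\big(M(vv^\dagger)\big)=\sum_{i=1}^{d}\big(M\,vv^\dagger\big)_{ii}=\sum_{i=1}^{d}\sum_{j=1}^{d}M_{ij}\,v_j\,\overline{v_i}=\sum_{i,j}\overline{v_i}\,M_{ij}\,v_j=v^\dagger M v.
\]
Reading the same chain of sums in the opposite order, or equivalently repeating the computation starting from $(vv^\dagger)M$, produces the identical scalar, which re-confirms the first equality as a byproduct and closes the argument.

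The only point I would take care over is the bookkeeping of the complex conjugate: since $v^\dagger=\overline{v}^{\,T}$, the factor attached to the row index $i$ must be $\overline{v_i}$ and the factor attached to the column index $j$ must be $v_j$, so that the double sum assembles exactly into the Hermitian form $v^\dagger M v$. There is no genuine obstacle here — the identity holds for arbitrary, not necessarily Hermitian, $M$ — so the \emph{hard part} amounts to nothing more than checking that the indices and conjugates line up correctly, after which the conclusion is immediate.
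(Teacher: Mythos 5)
Your proof is correct: the first equality follows from cyclic invariance of the trace, and your index computation for $Tr(M(vv^\dagger))=\sum_{i,j}\overline{v_i}M_{ij}v_j=v^\dagger M v$ is accurate, including the placement of the conjugates. The paper itself states this lemma without proof, as a well-known linear-algebra fact, so your argument is exactly the standard verification one would supply; your observation that Hermiticity of $M$ is never used is also correct.
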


By Lemma \ref{lem:TR} and the definitions of $g$ and $Z$, we obtain the following result.
\begin{proposition}\label{prop:LisTR}
Let $g(x_1,\dots,x_m) = (\otimes_{j=1}^m x_j)^\dagger G (\otimes_{j=1}^m x_j)$ and $G$ Hermitian. Then 
for every $z \in  \complex^{{d}}$, it holds that $ L(g) = Tr({G} Z)$, 
where $Z$ is defined in \eqref{eq:linearoperator}. 
\end{proposition}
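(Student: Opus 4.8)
The plan is to reduce the statement to a one-line linear-algebra identity by combining the trace rewriting of Lemma~\ref{lem:TR} with the linearity of the functional $L \in \gambles_R^*$. Throughout I abbreviate $v := \otimes_{j=1}^m x_j$, a column vector in $\complex^{n}$ with $n=\prod_{j=1}^m n_j$, so that the gamble reads $g = v^\dagger G v$ and, by \eqref{eq:linearoperator}, $Z = L(v v^\dagger)$ is the Hermitian matrix obtained by applying $L$ to each entry of $v v^\dagger$.

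First I would invoke Lemma~\ref{lem:TR} with $M=G$: since $G \in \He^{n\times n}$, identity \eqref{eq:matrixTR} gives $g = v^\dagger G v = Tr(G\, v v^\dagger)$. Writing the trace out entrywise, $g = \sum_{i,k} G_{ik}\,(v v^\dagger)_{ki}$, exhibits $g$ as a finite linear combination, with the constant scalars $G_{ik}$, of the monomial entries $(v v^\dagger)_{ki} = v_k \overline{v_i}$. Applying $L$ and pushing it through the sum by linearity then yields
\[
L(g) = \sum_{i,k} G_{ik}\, L\big((v v^\dagger)_{ki}\big) = \sum_{i,k} G_{ik}\, Z_{ki} = Tr(G Z),
\]
where the middle equality is the defining relation $Z_{ki} = L((v v^\dagger)_{ki})$ and the last re-collects the double sum into a trace via $\sum_i (GZ)_{ii}$.

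The only step carrying genuine content --- and hence the main obstacle --- is moving $L$ inside the sum and evaluating it on the individual entries $(v v^\dagger)_{ki}$. The functional $L$ is defined on the real vector space $\gambles_R$ of real-valued gambles $v^\dagger G v$, whereas the single entry $v_k \overline{v_i}$ is in general complex-valued and need not itself lie in $\gambles_R$. I would dispose of this by reading $Z = L(v v^\dagger)$, as in \eqref{eq:linearoperator}, as the componentwise complex-linear extension of $L$: with $e_k$ the canonical basis vectors of $\complex^n$, the off-diagonal pair is fixed through the two \emph{real} gambles $v^\dagger(e_k e_i^\dagger + e_i e_k^\dagger)v = 2\,\mathrm{Re}(\overline{v_k}v_i)$ and $v^\dagger \iota(e_k e_i^\dagger - e_i e_k^\dagger)v$, each of which genuinely belongs to $\gambles_R$, so that $Z_{ki}$ and $Z_{ik}=\overline{Z_{ki}}$ are determined by admissible evaluations of $L$. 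Because $g$ is real and $G$ is Hermitian, the double sum pairs conjugate terms and collapses consistently onto these admissible combinations; once this extension is fixed the displayed chain of equalities is immediate, and no further estimates are needed.
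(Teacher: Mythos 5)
Your proof is correct and follows essentially the same route as the paper, which derives the proposition directly from Lemma~\ref{lem:TR} (the identity $v^\dagger G v = Tr(G\,vv^\dagger)$) together with the entrywise linearity of $L$ implicit in the definition \eqref{eq:linearoperator} of $Z$. Your extra paragraph justifying the componentwise complex-linear extension of $L$ via the real gambles $v^\dagger(e_k e_i^\dagger + e_i e_k^\dagger)v$ and $v^\dagger \iota(e_k e_i^\dagger - e_i e_k^\dagger)v$ fills in a detail the paper leaves tacit, but it does not change the argument.
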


The next lemma states that the only symmetry in the matrix $(\otimes_{j=1}^m x_j)(\otimes_{j=1}^m x_j)^\dagger$ with $x_j \in \complex^{n_j}$ is that 
of being self-adjoint.  
\begin{lemma}
\label{lem:uniqueX}
Consider the matrix
 \begin{align}
  \label{eq:X}
X&=(\otimes_{j=1}^m x_j)(\otimes_{j=1}^m x_j)^\dagger.
 \end{align}
 Let $X_{st}$ denote the $st$-th element of $X$ then for all $t\geq s$ (upper triangular elements) we have that $X_{st}=X_{kl}$ 
 $\forall x_j \in \overline{C}^{n_j}$ iff $k=s$ and $l=t$.
\end{lemma}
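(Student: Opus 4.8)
The plan is to write each entry of $X$ explicitly as a monomial in the scalar components of the $x_j$ and their conjugates, and then to argue that distinct upper-triangular index pairs yield distinct monomials, which as functions are linearly independent. First I would index the $n=\prod_{j=1}^m n_j$ coordinates by multi-indices under the canonical bijection coming from the tensor product, writing $s\leftrightarrow(a_1,\dots,a_m)$ and $t\leftrightarrow(b_1,\dots,b_m)$ with $1\le a_j,b_j\le n_j$. Setting $v=\otimes_{j=1}^m x_j$, the $s$-th component is $v_s=\prod_{j=1}^m (x_j)_{a_j}$, whence
$$
X_{st}=v_s\,\overline{v_t}=\prod_{j=1}^m (x_j)_{a_j}\,\overline{(x_j)_{b_j}}.
$$
The implication $(k=s,\ l=t)\Rightarrow X_{st}=X_{kl}$ is trivial, so the content of the lemma is the converse.

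Second, I would remove the normalisation constraint $x_j^\dagger x_j=1$ by a homogeneity argument. Every entry of $X$ contains exactly one holomorphic factor $(x_j)_{\cdot}$ and one antiholomorphic factor $\overline{(x_j)_{\cdot}}$ from each block $j$; hence under the rescaling $x_j\mapsto\lambda_j x_j$ with $\lambda_j\in\complex$, both $X_{st}$ and $X_{kl}$ are multiplied by the same factor $\prod_{j=1}^m|\lambda_j|^2$. Consequently, if $X_{st}$ and $X_{kl}$ agree on the product of spheres $\times_{j=1}^m\overline{\complex}^{n_j}$, then writing any nonzero $x_j$ as $\|x_j\|\,u_j$ with $u_j$ a unit vector shows they agree on all of $\times_{j=1}^m\complex^{n_j}$ (the locus where some $x_j=0$ is handled by continuity, both sides vanishing there). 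This reduces the assertion to an identity of monomials on the full complex space.

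Finally, I would invoke the linear independence of distinct monomials. Up to an invertible linear change of coordinates, the functions $(x_j)_i$ and $\overline{(x_j)_i}$ are the real and imaginary parts of the components, so a nonzero polynomial in them is a nonzero function; in particular two monomials of the form $\prod_{j}(x_j)_{a_j}\overline{(x_j)_{b_j}}$ coincide as functions only when their exponent patterns coincide. Since the holomorphic variables $(x_j)_{a_j}$ and the antiholomorphic variables $\overline{(x_j)_{b_j}}$ are disjoint, $X_{st}=X_{kl}$ forces $a_j=c_j$ and $b_j=d_j$ for every $j$, where $k\leftrightarrow(c_1,\dots,c_m)$ and $l\leftrightarrow(d_1,\dots,d_m)$; as the multi-index-to-linear-index map is a bijection this is exactly $s=k$ and $t=l$.

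I expect the only delicate point to be the passage from the constrained domain $\times_{j=1}^m\overline{\complex}^{n_j}$ to the unconstrained one; once the homogeneity rescaling is in place, the remaining step is the standard fact that monomials in $z$ and $\bar z$ are linearly independent as functions. I would also remark that the restriction $t\ge s$ is inessential to the argument and merely records that the sole relation among the entries of $X$ is Hermiticity $X_{ts}=\overline{X_{st}}$, which for $s\ne t$ relates two genuinely distinct functions rather than identifying them.
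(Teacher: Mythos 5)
Your proof is correct, and there is in fact nothing in the paper to compare it against: the paper states Lemma~\ref{lem:uniqueX} without proof, so your argument supplies what the authors left implicit. The multi-index factorisation $X_{st}=\prod_{j=1}^m (x_j)_{a_j}\overline{(x_j)_{b_j}}$ is exactly right; the bidegree-$(1,1)$ homogeneity in each block legitimately transfers the identity from the product of spheres $\times_{j=1}^m\overline{\complex}^{n_j}$ to all of $\times_{j=1}^m\complex^{n_j}$ (the locus where some $x_j=0$ being handled by continuity, as you note, since every entry vanishes there); and the conclusion then follows from the linear independence of distinct monomials in $z$ and $\bar z$, which is sound because $(\mathrm{Re}\,z,\mathrm{Im}\,z)\mapsto(z,\bar z)$ is an invertible linear substitution, so a polynomial in $z,\bar z$ vanishes as a function iff it vanishes as a polynomial. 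Two small remarks. First, the extension off the sphere, which you flag as the delicate point, can be avoided altogether: one can separate two distinct monomials by evaluating at unit test vectors directly, e.g.\ taking each $x_j$ supported on the finitely many relevant coordinates $a_j,b_j,c_j,d_j$ with suitable moduli and phases; your homogeneity route is, however, cleaner to write down and equally rigorous. Second, your closing observation is accurate and worth keeping: the restriction $t\ge s$ is inessential, since $X_{ts}=\overline{X_{st}}$ and for $s\neq t$ the function $X_{st}$ is not real-valued, so Hermiticity relates but never identifies two distinct entries; the hypothesis merely reflects the use the paper makes of the lemma, namely that the upper-triangular entries of $Z=L\left((\otimes_{j=1}^m x_j)(\otimes_{j=1}^m x_j)^\dagger\right)$ constitute independent coordinates on the dual space in Definition~\ref{def:dual0}.
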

We then verify that
\begin{proposition}\label{prop:igno}
Let $\bdomain$ be a P-coherent set of desirable gambles. The following holds:
\begin{equation}
\label{eq:dualM1}
\bdomain^\circ=\left\{{z} \in \complex^{{d}}:  L(g)=Tr({G} Z)\geq0, ~Z\geq 0,~\forall g \in \bdomain\right\}.
\end{equation} 
\end{proposition}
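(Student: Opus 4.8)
The plan is to establish the two inclusions separately, noting that the reverse inclusion $\supseteq$ is immediate from the definitions, while the forward inclusion $\subseteq$ carries all the content: it is precisely the claim that membership in the dual cone $\bdomain^\circ$ already forces the Hermitian matrix $Z$ to be positive semidefinite. For the reverse inclusion, I would simply observe that any $z$ in the right-hand set satisfies in particular $L(g)=Tr(GZ)\geq0$ for all $g\in\bdomain$, which is exactly the defining condition of $\bdomain^\circ$ in Definition~\ref{def:dual0}; hence the right-hand set is contained in $\bdomain^\circ$. It then remains only to show that every $z\in\bdomain^\circ$ automatically satisfies $Z\geq0$.

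For the forward inclusion I would first invoke Proposition~\ref{prop:LisTR} to rewrite, for every gamble $g(x_1,\dots,x_m)=(\otimes_{j=1}^m x_j)^\dagger G (\otimes_{j=1}^m x_j)\in\bdomain$, the value $L(g)$ as $Tr(GZ)$, so that $z\in\bdomain^\circ$ becomes the statement $Tr(GZ)\geq0$ for all such $g$. The key step is then to exploit P-coherence: by~\ref{eq:btaut}--\ref{eq:b1} one has $\bnonnegative\subseteq\bdomain$, where $\bnonnegative$ is the cone \eqref{eq:b0} of all gambles $(\otimes_{j=1}^m x_j)^\dagger G (\otimes_{j=1}^m x_j)$ with $G\geq0$. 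Thus for every vector $v\in\complex^n$ (with $n=\prod_{j=1}^m n_j$) the rank-one matrix $vv^\dagger$ is positive semidefinite, so the gamble $g_v=(\otimes_{j=1}^m x_j)^\dagger (vv^\dagger)(\otimes_{j=1}^m x_j)=|v^\dagger(\otimes_{j=1}^m x_j)|^2$ lies in $\bnonnegative\subseteq\bdomain$.

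Applying $z\in\bdomain^\circ$ to $g_v$ and using Lemma~\ref{lem:TR} then yields
\[
0\leq L(g_v)=Tr\big((vv^\dagger)Z\big)=v^\dagger Z v,
\]
and since $v\in\complex^n$ was arbitrary this is exactly the statement that $Z\geq0$. Combining this with $Tr(GZ)=L(g)\geq0$ for all $g\in\bdomain$ places $z$ in the right-hand set, completing the forward inclusion and hence the proof.

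I do not expect a genuine obstacle here: the argument is a direct application of the self-duality of the positive semidefinite cone under the trace pairing, with the rank-one test matrices $vv^\dagger$ supplied ready-made as elements of $\bnonnegative$ by the definition \eqref{eq:b0} of Hermitian sum-of-squares. The only point requiring care is to use P-coherence exactly where it is needed, namely to guarantee $\bnonnegative\subseteq\bdomain$ so that the test gambles $g_v$ are indeed desirable; without that inclusion the extra constraint $Z\geq0$ could fail, which is precisely why the statement is asserted for P-coherent $\bdomain$ rather than for an arbitrary closed convex cone.
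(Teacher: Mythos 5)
Your proof is correct and takes essentially the same route as the paper's: P-coherence gives $\Sigma^{\geq}\subseteq\bdomain$, and testing the dual condition $Tr(GZ)\geq 0$ against the PSD cone forces $Z\geq 0$ (the reverse inclusion being immediate from Definition~\ref{def:dual0}). The only difference is cosmetic: where the paper cites a standard linear-algebra result (\cite[Lemma~1.6.3]{holevo2011probabilistic}, the self-duality of the PSD cone under the trace pairing), you make that step fully explicit by testing with the rank-one gambles $g_v=(\otimes_{j=1}^m x_j)^\dagger(vv^\dagger)(\otimes_{j=1}^m x_j)$, which is a valid and self-contained way to conclude.
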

\begin{proof}
By P-coherence, $\bdomain$ includes $\Sigma^{\geq}$, which is isomorphic to the closed convex cone of PSD matrices.
We have that
$$
 L(g)=Tr({G} Z)\geq0  ~~\forall g\in \Sigma^{\geq} \subseteq \bdomain.
$$
From a standard result in linear algebra, see for instance \cite[Lemma 1.6.3]{holevo2011probabilistic}, this implies that $Z \geq0$, i.e., it must be a PSD matrix.
\end{proof}
In what follows, we verify that, analogously to Appendix \ref{subsec:dual}, the dual $\bdomain^\circ$ is completely characterised by a closed convex set of states. But before doing that, we have to clarify what is a state in this context.

 In a P-coherent theory, postulate \ref{eq:taut} is replaced with postulate \ref{eq:btaut}. Hence, to define what  a state is, one cannot anymore refer to nonnegative gambles but to gambles that are P-nonnegative. This means that states are linear operators that assign nonnegative real numbers to P-nonnegative, and that additionally preserve the unit gamble.
In the context of Hermitian gambles, the unitary gamble is  
\begin{equation}
\label{eq:constntQM}
\mathbbm{1}_R(x_1,\dots,x_m)=(\otimes_{j=1}^m x_j)^\dagger I (\otimes_{j=1}^m x_j)= \prod\limits_{i=1}^m {x_j}^{\dagger}x_j=1,
\end{equation} 
where $I$ is the identity matrix.
Therefore, we want that
\begin{equation}
\begin{aligned}
 L\Big((\otimes_{j=1}^m x_j)^\dagger I (\otimes_{j=1}^m x_j)\Big)&= L\Big(Tr\Big( I ~(\otimes_{j=1}^m x_j)(\otimes_{j=1}^m x_j)^\dagger\Big)\Big)\\
&=Tr\Big(I~ L\Big( (\otimes_{j=1}^m x_j)(\otimes_{j=1}^m x_j)^\dagger)\Big)\Big)\\
&=Tr(I ~Z)=Tr(Z)=1.
\end{aligned}
\end{equation}
Hence, the set of states is
\begin{equation}
\begin{aligned}
 \mathscr{S}_B=\{{z} \in \complex^{{d}}: \mid Z \geq0, ~~Tr(Z)=1\}.
\end{aligned}
\end{equation}

By reasoning exactly as for Theorem \ref{prop:dualcharges0}, we then have the following result.
\begin{theorem}
\label{th:dualSOS}
The map
\[ \bdomain \mapsto \mathscr{Q}:=\bdomain^\circ\cap \mathscr{S}_B\]
is a bijection between P-coherent set of desirable gambles in $\gambles_R$  and closed convex subsets of $\mathscr{S}_B$. 
\end{theorem}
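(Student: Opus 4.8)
The plan is to mirror, line for line, the argument used for Theorem~\ref{prop:dualcharges0}, but with the charge-space duality replaced by the matrix duality $(\cdot)^\circ$ of Definition~\ref{def:dual0}: the role of the cone $\mathcal{M}^{\geq}$ of non-negative charges is played by the cone $\{Z : Z\geq 0\}$ of PSD matrices, the set $\mathscr{S}$ of probability charges by the set $\mathscr{S}_B$ of density matrices, and the unit-preservation condition $\langle\mathbbm{1},\mu\rangle=1$ by $Tr(Z)=1$, cf.\ \eqref{eq:constntQM}. The two structural ingredients I need are a closed-convex-cone duality on the finite-dimensional space $\complex^d$, and a purely dual characterisation of P-coherence.

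First I would record the finite-dimensional duality facts. Since $\gambles_R^*\cong\complex^d$ via $L\mapsto Z$ (the identification fixed in \eqref{eq:linearoperator}), the maps $(\cdot)^\circ$ are precisely the polar-cone maps for the self-adjoint trace pairing $L(g)=Tr(GZ)$ of Proposition~\ref{prop:LisTR}; hence they send closed convex cones to closed convex cones, reverse inclusions, and satisfy the bidual identity $(\cdot)^{\circ\circ}=(\cdot)$ on closed convex cones (exactly as in the charge case of Appendix~\ref{subsec:dual}). I would also use the self-duality of the PSD cone: because $\Sigma^{\geq}$ is isomorphic to the PSD matrices, $(\Sigma^{\geq})^\circ=\{Z:Tr(GZ)\geq0\ \forall G\geq0\}=\{Z:Z\geq0\}$, which is the fact already exploited in Proposition~\ref{prop:igno}. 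From these I would deduce the dual characterisation of P-coherence (the analogue of Proposition~\ref{prop:czesc2}): a nonempty closed convex cone $\bdomain\subseteq\gambles_R$ is P-coherent if and only if $\{0\}\subsetneq\bdomain^\circ\subseteq\{Z:Z\geq0\}$. Here $\bdomain^\circ\subseteq\{Z\geq0\}$ is equivalent, by order-reversal and self-duality, to $\bdomain\supseteq\Sigma^{\geq}$; the condition $\bdomain^\circ\neq\{0\}$ is equivalent to $\bdomain\neq\gambles_R$ since $(\gambles_R)^\circ=\{0\}$; and since condition~(*) of Proposition~\ref{prop:eqq} holds in this model, these two conditions together are equivalent to P-coherence.

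With this characterisation in hand the bijection is routine. For injectivity, given P-coherent $\bdomain$, the cone $\bdomain^\circ$ is a nonzero closed convex cone of PSD matrices; every nonzero PSD matrix has $Tr(Z)>0$, so normalising by the trace shows $\mathbb{R}_{+}\mathscr{Q}=\bdomain^\circ$ where $\mathscr{Q}=\bdomain^\circ\cap\mathscr{S}_B$ is nonempty, closed and convex, and then $\bdomain=(\bdomain^\circ)^\circ=(\mathbb{R}_{+}\mathscr{Q})^\circ$ recovers $\bdomain$ from $\mathscr{Q}$. For surjectivity, given a closed convex $\mathscr{Q}\subseteq\mathscr{S}_B$, I would note that $\mathscr{S}_B$ is compact in $\complex^d$ and $0\notin\mathscr{S}_B$ (as $Tr(Z)=1$), so $\mathbb{R}_{+}\mathscr{Q}$ is a closed convex cone inside the PSD cone and distinct from $\{0\}$; by the characterisation its polar $(\mathbb{R}_{+}\mathscr{Q})^\circ$ is P-coherent, and the bidual identity gives $(\mathbb{R}_{+}\mathscr{Q})^\circ\cap\mathscr{S}_B=(\mathbb{R}_{+}\mathscr{Q})^{\circ\circ}\cap\mathscr{S}_B=\mathbb{R}_{+}\mathscr{Q}\cap\mathscr{S}_B=\mathscr{Q}$, as required. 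The only point demanding care is the closedness of $\mathbb{R}_{+}\mathscr{Q}$ and of $\bdomain^\circ$; but in contrast to the infinite-dimensional charge-space situation of Appendix~\ref{subsec:dual}, this is immediate here from the compactness of $\mathscr{S}_B$ together with $0\notin\mathscr{S}_B$, so the genuine content of the proof is simply assembling the dual characterisation of P-coherence correctly.
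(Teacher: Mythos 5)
Your proposal is correct and follows essentially the same route as the paper, which proves Theorem~\ref{th:dualSOS} ``by reasoning exactly as for Theorem~\ref{prop:dualcharges0}'': you transplant that charge-space argument to the trace pairing on $\complex^d$, with the PSD cone replacing $\mathcal{M}^{\geq}$ and $\mathscr{S}_B$ replacing $\mathscr{S}$. Your explicit dual characterisation of P-coherence (the analogue of Proposition~\ref{prop:czesc2}, assembled from the self-duality of the PSD cone as in Proposition~\ref{prop:igno} together with Proposition~\ref{prop:eqq}) and the compactness remark for closedness of $\mathbb{R}_{+}\mathscr{Q}$ merely spell out steps the paper leaves implicit.
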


Hence, we can identify the dual of a P-coherent set of desirable gambles $\bdomain$, with the closed convex set of states
\begin{equation}
\label{eq:dualM1}
\begin{aligned}
\mathscr{Q}
&=\left\{ z \in  \mathscr{S}_B:  L(g)=Tr({G} Z)\geq0,  ~~\forall g \in \bdomain\right\},
\end{aligned}
\end{equation} 
which is equivalent to .

Notice that the matrices corresponding to states are density matrices, in fact \eqref{eq:dualM1} is equivalent to
$$
\left\{ \rho \in  \He^{n\times n}: \rho\geq0, Tr(\rho)=1, Tr({G} \rho)\geq0,  ~~\forall G  \in \bdomain\right\}.
$$
Hence,  we can identify the set $\mathscr{S}_B$  with the set of density matrices and denote its elements as usual with the symbol $\rho$.



\subsection{Inference in QT}
\label{sec:QTinference}
 In this subsection, we extend to QT the results of Appendix \ref{sec:inferTDG} and \ref{sec:inferenceCPT} to derive lower 
 and upper previsions  of a gamble $f(x_1,\dots,x_m) = (\otimes_{j=1}^m x_j)^\dagger F (\otimes_{j=1}^m x_j)$ in QT. 
 By definition of lower prevision, we have that
\begin{equation}
\label{eq:lpberqt}
\begin{aligned}
 \underline{E}_B(f):=&\sup_{\gamma_0\in \reals, \lambda_i \in \reals^+} \gamma_0\\
 &s.t:\\
 &(\otimes_{j=1}^m x_j)^\dagger F (\otimes_{j=1}^m x_j) - \gamma_0\mathbbm{1}_R(x_1,\dots,x_m)  -\sum\limits_{i=1}^{|\mathcal{G}|} \lambda_i (\otimes_{j=1}^m x_j)^\dagger G_i (\otimes_{j=1}^m x_j) \in \bnonnegative,\\
\end{aligned}
\end{equation}
the upper prevision  $ \overline{E}_B(f)=-\underline{E}_B(-f)$. 
Note that the membership problem in \eqref{eq:lpberqt} reduces to find $\gamma_0\in \reals, \lambda_i \in \reals^+$ such that
$F- \gamma_0I -\sum\limits_{i=1}^{|\mathcal{G}|}G_i$ is PSD.
The dual of the above optimisation problem is
\begin{equation}
\label{eq:dualpberqt}
\begin{aligned}
 \underline{E}_B(f):=&\inf_{\rho\in \mathscr{S}_B} Tr(F\rho)\\
 &s.t:\\
 &Tr(G_i \rho) \geq0,~~\forall i=1,\dots,|\mathcal{G}|
\end{aligned}
\end{equation}
Whenever a P-coherent $\bdomain$ is maximal, its dual $\mathscr{Q}$ includes a single density matrix $\tilde{\rho}$\footnote{This does not require $|\mathcal{G}|=\infty$.
Since $\tilde{\rho}$ is a matrix it can be uniquely specified by finite assessments of desirability $\tilde{\rho}=\{\rho: Tr(G_i \rho) \geq0,~~\forall i=1,\dots,|\mathcal{G}|\}$.}.
$$
 \underline{E}_B(f) = \overline{E}_B(f)=  E_B(f)=Tr(F\tilde{\rho}).
 $$
Hence, in this case both the lower and upper prevision of the gamble $f(x_1,\dots,x_m) = (\otimes_{j=1}^m x_j)^\dagger F (\otimes_{j=1}^m x_j) $
 coincide with $Tr(F\tilde{\rho})$.  
 Note that in QT the expectation of $f$ is $Tr(F\tilde{\rho})$.

   \subsection{Technicalities for Subsection \ref{subsec:otherax}}\label{app:otherax}
   In this section we discuss how to exploit the correspondence with the system introduced in  \cite{benavoli2016quantum} in the aim of deriving the remaining three axioms of QT.
   
   Recall that a subset $ \He^{n\times n}$ is  said to be coherent if it is a  convex cone $\mathcal{C}$ containing the set  $\{G\in\He^{n\times n}:G\gneq0\}$ of all PSD matrices in $\He^{n\times n}$ and disjoint from the interior of $\{G\in\He^{n\times n}:G \leq 0\}$. We know that there is a bijective correspondence between closed convex sets of density matrices and coherent subsets of of $\He^{n\times n}$, but also between closed convex sets of density matrices and P-coherent sets of gambles. 
By looking at such correspondences, it is then immediate to verify that:
   
   \begin{proposition}
   The map $\mathfrak{f}: \bdomain \mapsto \{ G \in \He^{n\times n} \mid  x^\dagger G x \in \bdomain\}$ is a bijection between  coherent subsets of $\He^{n\times n}$ and P-coherent sets of gambles.
   \end{proposition}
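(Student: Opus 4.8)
The plan is to reduce the proposition to the observation that gambles in $\gambles_R$ and Hermitian matrices are merely two coordinate descriptions of the same objects. First I would introduce the map $\phi\colon \He^{n\times n}\to\gambles_R$ given by $\phi(G)=(\otimes_{j=1}^m x_j)^\dagger G(\otimes_{j=1}^m x_j)$. By the very definition of $\gambles_R$ this map is surjective, and by Lemma~\ref{lem:uniqueX}---which states that the upper-triangular entries of $(\otimes_{j=1}^m x_j)(\otimes_{j=1}^m x_j)^\dagger$ are distinct functions of $(x_1,\dots,x_m)$ up to the identifications forced by self-adjointness---it is injective: writing $\phi(G)=Tr(GX)$ with $X=(\otimes_{j=1}^m x_j)(\otimes_{j=1}^m x_j)^\dagger$ (Lemma~\ref{lem:TR}), the vanishing of $Tr(GX)$ as a function forces $G=0$. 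Hence $\phi$ is a linear bijection, and being a linear bijection between finite-dimensional normed spaces it is automatically a homeomorphism. The map in the statement is then nothing but the set-theoretic inverse, $\mathfrak{f}(\bdomain)=\{G\mid \phi(G)\in\bdomain\}=\phi^{-1}(\bdomain)$, so that $\bdomain\mapsto\phi^{-1}(\bdomain)$ is already a bijection between arbitrary subsets of $\gambles_R$ and arbitrary subsets of $\He^{n\times n}$, with inverse $\mathcal{C}\mapsto\phi(\mathcal{C})$.

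It then remains to show that this bijection carries P-coherent sets exactly onto coherent sets, which I would do by transporting the three defining conditions through $\phi$. Convexity and the cone property, as well as closedness (recall we have adopted (S3') in place of (S3)), are preserved in both directions because $\phi$ is a linear homeomorphism. For the tautologies, I would use that $\phi$ maps the PSD cone $\{G\geq0\}$ precisely onto $\bnonnegative=\Sigma^{\geq}$, which is exactly how $\Sigma^{\geq}$ was defined in \eqref{eq:b0}; consequently $\bdomain\supseteq\bnonnegative$ holds iff $\phi^{-1}(\bdomain)$ contains every PSD matrix. For the avoidance of sure loss, I would use that $\phi$ maps the negative semidefinite cone $\{G\leq0\}$ onto $-\bnonnegative$, and, being a homeomorphism, maps its interior $\{G<0\}$ onto $\text{Interior}(-\bnonnegative)=\bnegative$; hence $\bdomain\cap\bnegative=\emptyset$ (postulate~\ref{eq:b2}) is equivalent to $\phi^{-1}(\bdomain)$ being disjoint from the interior of the negative semidefinite matrices. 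Reading these three equivalences together gives that $\bdomain$ is P-coherent if and only if $\mathfrak{f}(\bdomain)$ is coherent, and the identical computation applied to $\phi$ handles the reverse direction.

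The only genuinely substantive ingredients are therefore the two facts already at hand: the injectivity of $\phi$, which is Lemma~\ref{lem:uniqueX}, and the topological matching of the negativity cones, which rests on $\phi$ being a homeomorphism of a finite-dimensional space. I expect the main (and essentially only) point deserving care to be this last topological step---verifying that the interior of the negative semidefinite cone corresponds to $\bnegative$ rather than to something strictly larger or smaller---since P-negativity is defined via an interior and one must be sure $\phi$ neither creates nor destroys boundary points. Everything else is the routine verification that a linear isomorphism transports cones, convexity, and containments verbatim, so no further computation is required.
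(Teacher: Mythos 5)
Your proof is correct, but it takes a genuinely different route from the paper's. The paper obtains $\mathfrak{f}$ indirectly, by composing two duality maps: it invokes the bijection between P-coherent sets of gambles and closed convex sets of density matrices (Theorem~\ref{th:dualSOS}) together with the analogous bijection between coherent cones of Hermitian matrices and closed convex sets of density matrices established in \cite{benavoli2016quantum} (after replacing the openness axiom (S3) by closedness (S3')), and declares the composite correspondence ``immediate''. You instead argue directly at the primal level: the evaluation map $\phi(G)=(\otimes_{j=1}^m x_j)^\dagger G(\otimes_{j=1}^m x_j)$ is a linear bijection of finite-dimensional spaces, hence a homeomorphism, and $\mathfrak{f}=\phi^{-1}$ transports the defining cones verbatim---the PSD cone onto $\bnonnegative$ by the very definition~\eqref{eq:b0}, and (this is the step you rightly flag) the \emph{interior} of the negative semidefinite cone onto $\bnegative=\text{Interior}(-\bnonnegative)$, which is exactly where the homeomorphism property earns its keep. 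Your route is more self-contained and makes explicit a point the paper leaves implicit inside the duality machinery: injectivity of $\phi$ is essential, since without it $\phi^{-1}(\bnonnegative)$ could strictly contain the PSD cone and the correspondence of tautologies would fail. One caveat to keep in mind: Lemma~\ref{lem:uniqueX} as literally stated only asserts that the upper-triangular entries of $(\otimes_{j=1}^m x_j)(\otimes_{j=1}^m x_j)^\dagger$ are pairwise distinct functions, whereas your injectivity argument (vanishing of $Tr(GX)$ as a function forces $G=0$) needs their linear independence over $\reals$; this stronger fact is true (a Hermitian form vanishing on all product unit vectors vanishes identically) and is what the paper itself tacitly uses when it asserts that these monomials form a basis of $\gambles_R$ in Appendix~\ref{sec:dualQM}, so you inherit a small gap of the paper rather than create a new one, but a careful write-up should either strengthen the lemma or supply the polarization argument. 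The trade-off between the two approaches: the paper's is a two-line corollary once the duality theorems are in place and fits its narrative of connecting to \cite{benavoli2016quantum}, while yours needs no duality at all and yields in addition the stronger statement that $\mathfrak{f}$ is induced by a linear homeomorphism, which makes the preservation of closedness, maximality, and the later transport of L\"uder's and time-evolution rules transparent.
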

   Based on this correspondence, we can thus exploit the results in  \cite{benavoli2016quantum} to derive  L\"{u}der's rule (measurement updating) and Schr\"odinger's rule (time evolution).
   
  \subsubsection{L\"uder's rule}
  \label{sec:luder}
It states the following:
   
\begin{itemize}
\item
Quantum projection measurements are described by a collection  
$\{\Pi_i\}_{i=1}^n$  of
projection operators that satisfy the completeness equation $\sum_{i=1}^n \Pi_i =I$. These are operators acting on the 
state space of the
system being measured. If the state of the quantum system is $\rho$ immediately
before the measurement then  the state after the measurement is
$$
\hat{\rho}=\dfrac{\Pi_i \rho \Pi_i}{Tr(\Pi_i \rho \Pi_i)},
$$
provided that $Tr(\Pi_i \rho \Pi_i)>0$ and the probability that result $i$ occurs is given by
$p_i=Tr(\Pi_i \rho \Pi_i)$.
\end{itemize}

A projection-valued measurement $\Pi^{*}=\{\Pi_i^*\}_{i=1}^n$ can be seen as a partition of unity $\{x^\dagger\Pi_i^*x \}_{i=1}^n$. Thus an event ``indicated'' by a certain projector $\Pi_i$
in $\Pi=\{\Pi_i\}_{i=1}^n$ can also be seen as the function $\pi_i(x):=x^\dagger\Pi_i^*x$. The information it represents is: an experiment $\Pi$ is performed and the event indicated by $\Pi_i$ happens.\footnote{
We assume  that the quantum measurement device is a ``perfect meter'' (an ideal common  assumption in QM), i.e., there are not observational errors -- Alice can trust the
received information.} Under this assumption, Alice can focus on gambles that are contingent on the event $\Pi_i$: these are the gambles such that ``outside'' $\Pi_i$ no utile is received or due -- status quo is maintained --; in other words, they represent gambles that are called off if the outcome of the experiment is not $\Pi_i$. 
Mathematically, we define Alice's conditional set of desirable gambles as follows.

\begin{definition}
 Let  $\bdomain$ be an P-coherent set of gambles, the set obtained as
\begin{equation}
\label{eq:condition}
\bdomain_{\Pi_i}=\left\{g(x) \in \gambles_R \mid  
x^\dagger \Pi_i G \Pi_i  x\in \bdomain \right\}
\end{equation} 
is  called the {\bf set of desirable gambles conditional} on  $\Pi_i$. \end{definition}

Notice that $x^\dagger \Pi_i G \Pi_i  x= \alpha \pi_i(x)$.

We can also compute the dual of  $\bdomain_{\Pi_i}$, i.e.,
$\mathscr{Q}_{\Pi_i}$, and thus obtain the
\begin{description}
\item[Subjective formulation of L\"uder's rule:]~\\
Given a closed convex set of states $\mathscr{Q}$, the corresponding conditional set on $\Pi_i$ is obtained as
\begin{equation}
\label{eq:rhobayes}
 \mathscr{Q}_{\Pi_i}=\left\{ \dfrac{\Pi_i Z \Pi_i}{Tr(\Pi_i Z \Pi_i)} \Big|  z \in 
\mathscr{Q}\right\},
\end{equation} 
 provided that $Tr(\Pi_i Z \Pi_i)>0$  for every $z \in \mathscr{Q}$. Note that the latter  condition  implies that $\pi_j(x) \notin \mathscr{Q}$
 for any $j\neq i$.
\end{description} 

The following diagram gives the relationships among  $\bdomain,\mathscr{Q}, \bdomain_{\Pi_i},\mathscr{Q}_{\Pi_i}$.
$$
\begin{tikzpicture}
  \matrix (m) [matrix of math nodes,row sep=3em,column sep=6em,minimum width=2em]
  {
     \bdomain & \,\bdomain_{\Pi_i} \\
     \mathscr{Q} & \mathscr{Q}_{\Pi_i} \\};
  \path[-stealth]
    (m-1-1) edge [<->] node [left] {dual} (m-2-1)
            edge [double] node [below] {conditioning} (m-1-2)
    (m-2-1.east|-m-2-2) edge [double]  node [below] {conditioning}
          (m-2-2)
    (m-2-2) edge [<->] node [right] {dual} (m-1-2);
\end{tikzpicture}
$$

\subsubsection{Time evolution postulate}
It states the following:
   
\begin{itemize}\item
The evolution of a closed quantum system is described by a unitary
transformation. That is, the state $\rho$ of the system at time $t_0$ is related 
to the state
$\rho'$ of the system at time $t_1>t_0$ by a unitary operator $U$ which depends only 
on the
times $t_0$ and $t_1$, $\rho' = U \rho U^\dagger$.
\end{itemize}

Let us consider the dynamics of sets of gambles at present time $t_0$ and future time $t_1$ under the assumption that no information at all is received during such an interval of time (i.e., we have a closed quantum system). The focus is on characterising the coherence of sets of gambles in this time. 

To this end, we add the following temporal postulate:
\begin{enumerate}[label=\upshape B3.,ref=\upshape B3]
\item\label{eq:b3}  A temporal P-coherent transformation is a map $\phi(\cdot,t_1,t_0)$ from $\gambles_R$ to itself that satisfies the following properties: 
 \begin{itemize}
 \item[(i)] $\phi(\cdot,t_0,t_0)$ is the identity map; 
 \item[(ii)]  $\phi(\cdot,t_1,t_0)$ is onto; 
 \item[(iii)] $\phi(\gambles^+,t_1,t_0)=\gambles^+$;
\item[(iv)] $\phi(\cdot,t_1,t_0)$ is linear and constant preserving.
\end{itemize}
 \end{enumerate}
The rationale behind these conditions is the following.

Condition~(i) is obvious. 

Condition~(ii) is a way of stating that sets of desirable gambles are only established at present time $t_0$, since
any gamble $g_1$ at time $t_1$ corresponds to an element $g_0$  at time $t_0$. 

Condition~(iii)  means that no further information is received from time  $t_0$ to time  $t_1$.

Finally, condition~(iv) states in particular once again that the utility scale is linear.

By \cite{benavoli2016quantum}[Theorem A.9] and the fact that $\mathfrak{f}$ is a bijection preserving coherence, the temporal P-coherence postulate \eqref{eq:b3} leads to the following:
   
\begin{description}
\item[Subjective formulation of the time evolution postulate of QT:]~\\
(1) All the transformations $\phi(\cdot,t_1,t_0)$ defined above
are of the following form
$$
g(x)  \xhookrightarrow{\phi} h(x)=x^\dagger (U^\dagger G U) x ,
$$
for some unitary or anti-unitary matrix $U \in \He^{n\times n}$, which only depends on the times $t_1,t_0$ and is equal to the identity
for $t_1=t_0$.\\
(2)  The transformation $\phi(\cdot,t_1,t_0)$  preserves P-coherence:\vspace{0.3cm}\\
\centerline{if $\bdomain_0 $ is P-coherent, then $\bdomain_1=\{g(x) \in \gambles_R \mid x^\dagger (U^\dagger G U) x \in \bdomain_0\}$ is  also P-coherent.}
\end{description}

By exploiting duality, we can also reformulate the above results  in terms of sets of states and derive the time evolution postulate as a direct consequence of temporal P-coherence.

\end{document}